\documentclass{article} 
\usepackage{iclr2023_conference,times}


\usepackage{amsmath,amsfonts,bm}









\def\eqref#1{equation~\ref{#1}}









\def\1{\bm{1}}










\DeclareMathAlphabet{\mathsfit}{\encodingdefault}{\sfdefault}{m}{sl}
\SetMathAlphabet{\mathsfit}{bold}{\encodingdefault}{\sfdefault}{bx}{n}













\DeclareMathOperator*{\argmax}{arg\,max}

\usepackage{hyperref}
\usepackage{url}
\usepackage{xurl}
\usepackage[utf8]{inputenc} 
\usepackage[T1]{fontenc}    
\usepackage{hyperref}       
\usepackage{booktabs}       
\usepackage{amsfonts}       
\usepackage{nicefrac}       
\usepackage{microtype}      
\usepackage{xcolor}         
\usepackage{microtype}
\usepackage{graphicx}
\usepackage{booktabs} 
\usepackage{paralist}
\usepackage{amsmath}
\usepackage{amsfonts}
\usepackage{caption}
\usepackage{subcaption}
\usepackage{xcolor}
\usepackage{colortbl}
\usepackage{hyperref}
\usepackage{amsmath}
\usepackage{amssymb}
\usepackage{mathtools}
\usepackage{amsthm}
\usepackage[capitalize,noabbrev]{cleveref}
\usepackage{wrapfig}
\usepackage{makecell}
\usepackage{multirow}
\usepackage{booktabs,hhline}
\usepackage{pifont}
\usepackage{amssymb}
\usepackage{algorithm}
\usepackage{algorithmic}
\usepackage{multicol}
\usepackage{clipboard}

\newcommand{\ie}{\textit{i}.\textit{e}.}
\newcommand{\eg}{\textit{e}.\textit{g}.}
\newcommand{\etc}{\textit{etc}. }
\definecolor{mygreen}{RGB}{0, 0, 0}
\definecolor{cellgreen}{HTML}{C6EFCE}
\definecolor{cellyellow}{HTML}{FFEB9C}
\definecolor{cellred}{HTML}{FFC7CE}
\newcommand{\graycell}{\cellcolor[gray]{0.9}}
\newcommand{\greencell}{\cellcolor{cellgreen}}
\newcommand{\yellowcell}{\cellcolor{cellyellow}}
\newcommand{\redcell}{\cellcolor{cellred}}

\theoremstyle{plain}
\newtheorem{theorem}{Theorem}[section]
\newtheorem{proposition}[theorem]{Proposition}

\newtheorem{corollary}[theorem]{Corollary}
\newtheorem{axiom}[theorem]{Axiom}
\theoremstyle{definition}
\newtheorem{definition}[theorem]{Definition}
\newtheorem{assumption}[theorem]{Assumption}
\theoremstyle{remark}

\title{
How Much Space Has Been Explored? \\
Measuring the Chemical Space Covered by Databases and Machine-Generated Molecules \raggedright
}


\author{Yutong Xie$^1$, Ziqiao Xu$^2$, Jiaqi Ma$^3$, Qiaozhu Mei$^1$\thanks{Correspondence to: Qiaozhu Mei <\href{mailto:qmei@umich.edu}{qmei@umich.edu}>. } \\
$^1$School of Information, University of Michigan, Ann Arbor, Michigan,
USA \\
$^2$Chemistry Department, University of Michigan, Ann Arbor, Michigan,
USA \\
$^3$School of Information Sciences, University of Illinois Urbana-Champaign, Champaign, Illinois, USA \\
\vspace{-20pt}
}


%

\iclrfinalcopy 
\begin{document}

\maketitle

\begin{abstract}
Forming a molecular candidate set that contains a wide range of potentially effective compounds is crucial to the success of drug discovery. 
While most databases and machine-learning-based generation models aim to optimize particular chemical properties, 
there is limited literature on how to properly measure the coverage of the chemical space by those candidates included or generated. This problem is challenging due to the lack of formal criteria to select good measures of the chemical space.
In this paper, we propose a novel evaluation framework for measures of the chemical space based on two analyses: an axiomatic analysis with three intuitive axioms that a good measure should obey, and an empirical analysis on the correlation between a measure and a proxy gold standard. 
Using this framework, we are able to identify \emph{\textrm{\#Circles}}, a new measure of chemical space coverage, which is superior to existing measures both analytically and empirically. 
We further evaluate how well the existing databases and generation models cover the chemical space in terms of \textrm{\#Circles}. The results suggest that many generation models fail to explore a larger space over existing databases,
which leads to new opportunities for improving generation models by encouraging exploration. 
\end{abstract}

\section{Introduction}
\vspace{-5pt}

To efficiently navigate through the huge chemical space for drug discovery, machine learning (ML) based approaches have been broadly designed and deployed, especially \textit{de novo} molecular generation methods~\citep{elton2019deep,schwalbe2020generative,bian2021generative,deng2022artificial}. Such generation models learn to generate candidate drug designs by optimizing various \emph{molecular property scores}, such as the binding affinity scores. In practice, these scores can be computationally obtained using biological activity prediction models~\citep{olivecrona2017molecular,li2018multi}, which is the key to obtaining massive labeled training data for machine learning. However, high \emph{in silico} property scores are far from sufficient, as there is usually a considerable misalignment between these scores and the \emph{in vivo} behaviors. Costly wet-lab experiments are still needed to verify potential drug hits, where only a limited number of drug candidates can be tested.

In light of this cost constraint, it is critical to select or generate drug candidates not only with high \emph{in silico} scores, but also covering a large portion of the chemical space. As functional difference between molecules is closely related to their structural difference~\citep{huggins2011rational,wawer2014toward}, 
a better coverage of the chemical space will likely lead to a higher chance of hits in wet experiments. For this purpose, quantitative coverage measures of the chemical space become crucial. Such measures can both be used to evaluate and compare the \emph{candidate libraries}\footnote{A candidate library is a collection of drug candidates filtered by rules or generated by ML models.}, and be incorporated into training objectives to encourage ML models better explore the chemical space.

In this paper, we investigate the problem of quantitatively measuring the coverage of the chemical space by a candidate library. There have been a few such coverage measures of chemical space. For example, \emph{richness} counts the number of unique compounds in a molecular set, and it has been used to describe how well a model is able to generate unique structures~\citep{shi2019size,polykovskiy2020molecular}. In addition, molecular fingerprints have been used to calculate pairwise similarity or distance between two compounds, and the average of these pairwise distances has been used to describe the overall \emph{internal diversity} of a molecular set~\citep{brown2019guacamol,polykovskiy2020molecular}. 

However, most existing coverage measures are heuristically proposed and the validity of these measures is rarely justified. In fact, defining the ``right'' measure for the coverage of chemical space coverage is challenging. Unlike the molecular property scores, there is no obvious ``ground truth'' about the coverage of chemical space. Moreover, the chemical space is complex and combinatorial, making the design of a good measure even more difficult.

To address the fundamental problem of properly measuring the coverage of chemical space for drug discovery, we propose a novel evaluation framework with two complementary criteria for evaluating the validity of coverage measures. 
We first formally define the concept of coverage measures on the chemical space (referred to as \emph{chemical space measures}), where many existing heuristic measures fall into our definition (Section~\ref{sec:def}). 
Then we introduce the two criteria and compare various (existing and new) chemical space measures based on the criteria (Section \ref{sec:select}). Specifically, the first criterion (Section~\ref{sec:axiom}) is based on an axiomatic analysis with three intuitive axioms that a good chemical space measure should satisfy. Surprisingly, most heuristic measures that are commonly used in literature, such as \emph{internal diversity}, fail to satisfy these intuitive axioms. The second criterion (Section \ref{sec:corr}) compares the chemical space measures with a proxy of the gold standard: the number of unique biological functionalities covered by the set of molecules. 
\textcolor{mygreen}{
We find that \textrm{\#Circles},
a new chemical space coverage measure (defined in Section \ref{sec:ncirc}) that has a strong basis in the mathematical literature, 
}
not only satisfies both axioms but also better correlates with the gold standard. 

Finally, we apply the \textrm{\#Circles} measure to evaluate how well the existing databases and ML models cover the chemical space (Section \ref{sec:measure-exp}). Interestingly, the evaluation results suggest that many ML models fail to explore a larger portion of chemical space compared to drug candidates obtained from virtual screening over existing databases. We believe these findings lead to a new direction to improve ML-based drug candidate generation models on better exploring the chemical space.


\section{Related Work}
\label{sec:related-work}
\vspace{-5pt}


\Copy{related-work}{
Molecular databases and machine-generated compounds are rich sources of drug candidates for forming a candidate library in drug discovery. To evaluate the quality of molecular databases and molecular generation methods, a variety of metrics are proposed. 
In general, four categories of evaluation metrics can be identified in the literature, which are related to: 
\begin{inparaenum}[(1)]
    \item bioactivities,
    \item molecular properties,
    \item data likelihood, or
    \item the coverage of the chemical space, respectively. 
\end{inparaenum}
}


In this paper, we mainly focus on the fourth category of metrics, the metrics that are
more or less related to the degree of coverage (or exploration) in the chemical space (other metrics are discussed in Appendix~\ref{app:related-work}). 
In this category, 
commonly used measures include richness, uniqueness, internal diversity, external diversity, KL divergence, and Fr\'{e}chet ChemNet Distance (FCD) \citep{olivecrona2017molecular,you2018graph,de2018molgan,elton2019deep,brown2019guacamol,popova2019molecularrnn,polykovskiy2020molecular,shi2020graphaf,jin2020multi,xie2021mars}. 
Besides, \citet{zhang2021comparative} propose to use the number of unique functional groups or ring systems to estimate the chemical space coverage and to compare several recent generative models. 
Similarly in~\citet{blaschke2020memory}, the number of unique Bemis-Murcko scaffolds is used to measure the variety of drug candidates. 
\citet{koutsoukas2014diverse} study the effect of molecular fingerprinting schemes on the internal diversity of compound selection.
These measures usually mix the concepts of diversity, coverage, or novelty, and their validity as a measure of exploration is not justified.

To the best of our knowledge, this is the first work that formally investigates the validity of molecular chemical space measures. In particular, axiomatic approaches are used to analytically evaluate various designs of a measurement, such as utility functions~\citep{herstein1953axiomatic}, cohesiveness ~\citep{alcalde2013measuring}, or document relevance~\citep{fang2004formal}. 
\textcolor{mygreen}{
While one study applies axiomatic analysis to the design of diversity measures, with a particular focus on the domain of science of science~\citep{yan2021axiomatic}, the analysis of chemical space measurements remains novel.
}
Using axiomatic analysis to evaluate the chemical space measures in the chemical space is novel. With an empirical analysis in addition to the axiomatic analysis, we make practical recommendations on effective chemical space measures, including two novel measures.



\vspace{-3pt}
\section{Defining Chemical Space Measures}
\label{sec:def}


\vspace{-5pt}
\subsection{Definition of Chemical Space Measures}
\vspace{-5pt}

To define a chemical space measure, we first formalize the notion of chemical space by assuming a distance metric exists. This assumption is widely adopted in cheminformatics, using distance metrics such as the Tanimoto distance ~\citep{tanimoto1968elementary,bajusz2015tanimoto}.

\vspace{3pt}
\begin{assumption} [Chemical space]
    The chemical space $\mathcal{U}$ contains all possible molecules and is a metric space with a distance metric function $d:\mathcal{U}\times\mathcal{U}\to[0,+\infty)$.
\end{assumption}

\vspace{3pt}
\begin{definition} [Tanimoto distance]
    \label{def:tanimoto} 
    For two molecules $x_1,x_2\in\mathcal{U}$, whose binary molecular fingerprint vectors are $\bm{x}_1,\bm{x}_2\in\{0,1\}^n$ where $n$ is the dimensionality of the fingerprint, their Tanimoto distance is defined as
    \begin{equation}
        d(x_1,x_2):=1-\frac{\sum_{j=1}^n\bm{x}_{1j}\cdot\bm{x}_{2j}}{\sum_{j=1}^n\max(\bm{x}_{1j},\bm{x}_{2j})}.
    \end{equation}
    \vspace{-10pt}
\end{definition}

The Tanimoto distance is also referred to as the Jaccard distance~\citep{jaccard1912distribution} in other domains. Its range is $[0,1]$. For finite sets (\eg, molecular fingerprints), the Tanimoto distance is a metric function~\citep{kosub2019note,lipkus1999proof}. 

Our definition and following analysis are generic to all distance functions and are not limited to the Tanimoto distance. 
One can also use the distance of latent hidden vectors~\citep{preuer2018frechet,samanta2020vae} or the root-mean-square deviation (RMSD) of three-dimensional molecular conformers~\citep{fukutani2021g}. 

We then define a chemical space measure as the following.

\vspace{3pt}
\begin{definition} [chemical space measure]
    \label{def:measure} 
    Given the universal chemical space $\mathcal{U}$, a chemical space measure is a function that maps a set of molecules to a non-negative real number that reflects to what extent the set spans the chemical space, \ie, $\mu:\mathcal{P}(\mathcal{U})\to[0,\infty)$, where $\mathcal{P}(\cdot)$ is the notation of power set. In particular, $\mu(\emptyset)=0$. 
\end{definition}

\vspace{-3pt}
\subsection{Examples of Chemical Space Measures}
\vspace{-5pt}

Definition~\ref{def:measure} is intentionally left general. Many measures related to the coverage, diversity, or novelty of a molecular set fall into this definition. 
We summarize these various measures used in literature into three categories: \emph{reference-based} measures, \emph{aggregation-based} measures, and \emph{locality-based} measures.  We are also able to define new chemical space measures under this formulation.

\vspace{-3pt}
\subsubsection{Reference-based Measures}
\vspace{-5pt}

The first broad category of chemical space measures compare the generated molecules $\mathcal{S}$ with a reference set $\mathcal{R}$. In such context, a reference-based measure can be defined as the coverage of references:

\vspace{-15pt}
\begin{equation}
    \textrm{Coverage}(\mathcal{S},\mathcal{R})
    :=\sum_{y\in\mathcal{R}}
    \left (
    \max_{x\in\mathcal{S}}\  \text{cover}(x,y)
    \right ),
    \label{eq:coverage}
\end{equation}
\vspace{-7pt}


where $\text{cover}(x,y)$ indicates how molecule $x$ can cover the reference $y$. 

When the reference set $\mathcal{R}$ is taken as a collection of molecular fragments, the coverage function can be written as $\text{cover}(x,y):=\mathbb{I}[\text{molecule $x$ contains fragment $y$}]$, where $\mathbb{I}[\cdot]$ is the indication function. 
A large body of drug discovery literature uses the number of distinct functional groups (FG), ring systems (RS), or Bemis-Murcko scaffolds (BM) in $\mathcal{S}$ to gauge the size of explored chemical space \citep{zhang2021comparative,blaschke2020memory}, corresponding to the cases where $\mathcal{R}$ is the collection of all possible FG, RS, or BM fragments. We denote these specific reference-based measures as \textrm{\#FG}, \textrm{\#RS}, and \textrm{\#BM} respectively.
Another example is the Richness of the molecular set, \ie, $\textrm{Richness}:=\vert\mathcal{S}\vert$ \citep{shi2019size,polykovskiy2020molecular}, where $\text{cover}(x,y):=\mathbb{I}[x=y]$ and $\mathcal{R}=\mathcal{U}$. 

\vspace{-3pt}
\subsubsection{Aggregation-based Measures}
\vspace{-5pt}

Though \emph{reference-based} measures estimate chemical space coverage in an intuitive way, such measures highly rely on the characteristics of the reference set and do not consider the relations between the compounds. 
In contrast, the \emph{aggregation-based} measures utilize the pair-wise distances among molecules and reflect the extent of chemical space coverage by aggregating the distances. 
For a set $\mathcal{S}$ with $n$ molecules, several \emph{aggregation-based} measures can be defined as follow:


\vspace{-28pt}
\begin{center}
\begin{small}
\begin{subequations}
\label{eq:measure-examples}
\begin{multicols}{2}
\begin{align}
    \textrm{Diversity}(\mathcal{S};d)
    &:= \frac{2}{n(n-1)}\sum_{\substack{x,y\in\mathcal{S}\\x\ne y}}d(x,y), \\[2pt]
    \textrm{Diameter}(\mathcal{S};d)
    &:= \max_{\substack{x,y\in\mathcal{S}\\x\ne y}}\ d(x,y),
    \\[3pt]
    \textrm{Bottleneck}(\mathcal{S};d)
    &:= \min_{\substack{x,y\in\mathcal{S}\\x\ne y}}\ d(x,y), 
\end{align} 
\vfill
\columnbreak
\vspace*{\fill}
\begin{align}
    \textrm{SumDiversity}(\mathcal{S};d)
    &:= \frac{1}{n-1}\sum_{\substack{x,y\in\mathcal{S}\\y\ne x}}d(x,y), 
    \\ 
    \textrm{SumDiameter}(\mathcal{S};d)
    &:= \sum_{x\in\mathcal{S}} \max_{\substack{y\in\mathcal{S}\\y\ne x}}\ d(x,y), 
    \\
    \textrm{SumBottleneck}(\mathcal{S};d)
    &:= \sum_{x\in\mathcal{S}} \min_{\substack{y\in\mathcal{S}\\y\ne x}}\ d(x,y),
\end{align}
\end{multicols}
\end{subequations}
\end{small}
\end{center}
\vspace{-7pt}


where $x,y$ are molecules in $\mathcal{S}$, $d$ is a distance metric as defined previously. 

Among these measures, \textrm{Diversity} (also referred to as the \textit{internal diversity}), the average distance between the molecules, is widely used in the literature~\citep{you2018graph,de2018molgan,popova2019molecularrnn,polykovskiy2020molecular,shi2020graphaf,xie2021mars}. 
We follow the topology theory and also introduce \textrm{Diameter} and \textrm{Bottleneck} as the maximum and minimum distance between any pair of molecules, respectively~\citep{edelsbrunner2010computational}, since they can also reflect the dissimilarity between molecules in $\mathcal{S}$. We further introduce three \textrm{Sum-} variants for the above measures. The \textrm{Sum-} variants\footnote{The SumBottleneck measure is sometimes abbreviated as ``SumBot'' in the main paper. } will tend to increase when new molecules are added into the set.
In addition, the determinant of the similarity matrix of molecules is also a measure of dissimilarity, which is often employed in diverse subset selection as a key concept of the determinantal point processes (DPP)  \citep{kulesza2011learning,kulesza2012determinantal}. 
The DPP measures is defined as $\textrm{DPP}(\mathcal{S}):= \text{det}(\bm{S})$, where $\bm{S}$ is the Tanimoto similarity matrix of candidate molecules (\ie, $1-d(x,y)$ for Tanimoto distance). 

\vspace{-5pt}
\subsubsection{Locality-based Measures}
\label{sec:ncirc}
\vspace{-5pt}

Inspired by the sphere exclusion algorithm used in compound selection~\citep{snarey1997comparison,gobbi2003dise}, we introduce a new chemical space \textcolor{mygreen}{coverage} measure that highlights the local neighborhoods covered by a set of molecules:

\vspace{-15pt}
\begin{align}
    \label{eq:ncirc}
    \textrm{\#Circles}(\mathcal{S};d,t):= 
    \max_{\mathcal{C}\subseteq\mathcal{S}}\ \ 
    \vert\mathcal{C}\vert \quad 
    \text{s.t.}\quad
    d(x,y)>t,\quad\forall x\ne y\in\mathcal{C},
\end{align}
\vspace{-10pt}

where $t\in[0,1)$ is a distance threshold that corresponds to the diameter of a circle. 

Intuitively, \textrm{\#Circles} counts the maximum number of mutually exclusive circles that can fit into $\mathcal{S}$ as neighborhoods, with a subset of its members $\mathcal{C}$ as the circle centers. 
When the threshold $t=0$,
\textrm{\#Circles} becomes the richness measure $\textrm{Richness}(\mathcal{S}):=\vert\mathcal{S}\vert$, which is the number of unique molecules~\citep{shi2019size,polykovskiy2020molecular}. 
\textcolor{mygreen}{
Interestingly, \textrm{\#Circles} has a close relation to the concepts of \emph{covering} and \emph{packing} in mathematics, which is in particular related to the definition of \emph{packing number} in topology~\citep{vershynin2018high}. We discuss their detailed connections in Appendix~\ref{app:packing}. 
}


\vspace{-5pt}
\section{Selecting the Right Measure}
\label{sec:select}
\vspace{-5pt}

While all the aforementioned measures can heuristically reflect the degree of exploration, they do not always agree with each other. We need a principled way to select the most suitable measures from a variety of possible choices. Ideally, the selection criteria should not depend on the particular molecule properties of the target or the algorithm used to generate the candidates. We achieve this through a qualitative axiomatic analysis and a quantitative correlation comparison. 

\vspace{-3pt}
\subsection{Criterion \#1: An Axiomatic Analysis of Chemical Space Measures}
\label{sec:axiom}
\vspace{-5pt}

We propose three simple and intuitive principles that a good chemical space measure should satisfy.
First, including more molecules should not decrease the degree of chemical space coverage. 
As it is easy to filter large candidate libraries into smaller subsets, including more molecules (even similar ones) is not harmful, and can render a higher probability of containing drug hits. 
Second, when adding new molecules, the coverage should increase properly, instead of inflating too much. 
Third, molecular sets with more dissimilar molecules should have a higher degree of chemical space coverage.
These three principles are formalized below as three axioms and can be tested analytically. 

\begin{axiom} [Monotonicity]
    \label{def:mono}
    A good chemical space measure $\mu$ should be monotonic, \ie, for any two molecular sets $ \mathcal{S}_1, \mathcal{S}_2\subseteq\mathcal{U}$, it holds that 
    \begin{equation}
        \mu(\mathcal{S}_1\cup\mathcal{S}_2)
        \ge \max(\mu(\mathcal{S}_1), \mu(\mathcal{S}_2)).
    \end{equation}
\end{axiom}
\vspace{-5pt}

A monotonic measure tends to increase when more molecules are included. This tendency is intuitive in drug discovery as testing more molecules means having a higher probability to discover a drug hit. 

\vspace{5pt}
\begin{axiom} [Subadditivity]
    \label{def:subadd}
    A good chemical space measure $\mu$ should be subadditive, \ie, for any two molecular sets $ \mathcal{S}_1, \mathcal{S}_2\subseteq\mathcal{U}$, it holds that 
    \begin{equation}
        \mu(\mathcal{S}_1\cup\mathcal{S}_2)\le  \mu(\mathcal{S}_1)+\mu(\mathcal{S}_2).
    \end{equation}
\end{axiom}
\vspace{-7pt}

Note that the coverage of chemical space spanned by a molecular set may correlate with the probability of containing drug hits. 
So correspondingly, the coverage $\mu(\mathcal{S}_1\cup\mathcal{S}_2)$ should not exceed the sum of $\mu(\mathcal{S}_1)$ and $\mu(\mathcal{S}_2)$. 
Some direct corollaries and discussions on how the first and second axioms connect to mathematical measures are in Appendix~\ref{app:corollary}.



\vspace{5pt}
\begin{axiom} [Dissimilarity]
    \label{def:dissim}
    A good chemical space measure should have a preference to dissimilar elements, \ie,
    for any three molecules $x_0,x_1,x_2\in\mathcal{U}$, if $d(x_0,x_1)\ge d(x_0,x_2)$, it holds 
    \begin{equation}
        \mu(\{x_0,x_1\})\ge 
        \mu(\{x_0,x_2\}),
    \vspace{-5pt}
    \end{equation}
    where $d$ is the distance metric. 
\end{axiom}
\vspace{-5pt}


\begin{wrapfigure}{r}{0.6\textwidth}
\vspace{-18pt}
  \begin{center}
    \includegraphics[width=0.58\textwidth]{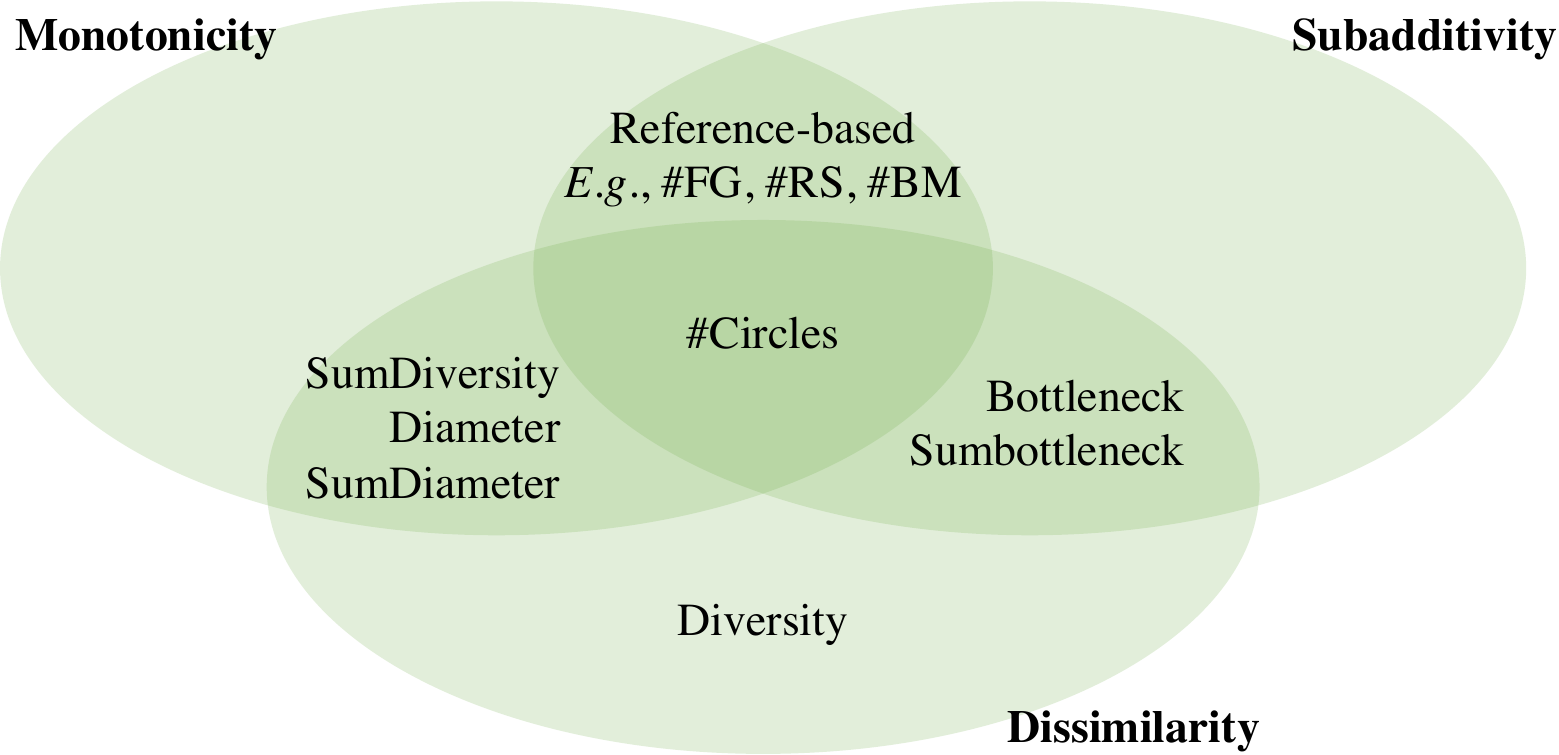}
  \end{center}
  \caption{Chemical space measures, categorized by whether they satisfy monotonicity, subadditivity, and/or dissimilarity. \textrm{\#Circles} is the only measure that satisfies all  
  simultaneously. 
    }
    \label{fig:measures-quad}
\vspace{-10pt}
\end{wrapfigure}

Intuitively, considering adding a new molecule into an existing molecular set $\mathcal{S}=\{x_0\}$, where there are two choices, namely $x_1$ and $x_2$, the more dissimilar one $x_1$ will be preferred. 

Depending on whether each chemical space measure satisfies \emph{monotocinicy}, \emph{subadditivity} and/or \emph{dissimilarity}, we can put it into a Venn diagram as shown in Figure~\ref{fig:measures-quad}. 
The formal proofs are provided in Appendix~\ref{app:proof}.
Despite that the three axioms are simple and intuitive, surprisingly, the proposed \textrm{\#Circles} is the only measure that satisfies all these three axioms. 


\vspace{-5pt}
\subsection{Criterion \#2: Correlation with Biological Functionality}
\label{sec:corr}
\vspace{-5pt}

While only one measure stands out in the axiomatic analysis, it does not guarantee its empirical performance. Meanwhile, there are still multiple measures that satisfy \textit{one or two} of the three axioms. 

In this subsection, we further investigate the validity of the chemical space measures by testing their correlations with the biological functionality variety of molecules. 
Such functionalities 
can provide valuable information in distinguishing molecules and implicating chemical space coverage.
We correlate the chemical space measures to the number of unique biological functionality labels covered by a molecular set. A better chemical space measure should have a higher correlation to the variety of biological functionalities, even though it is a proxy gold standard. 

\vspace{-5pt}
\subsubsection{Experiment Setup}
\vspace{-5pt}

We base the analysis on the BioActivity dataset that is also used to compare different compound selection algorithms \citep{koutsoukas2014diverse}. This dataset contains 10,000 compound samples extracted from the ChEMBL database \citep{gaulton2017the} with bio-activity labels, which annotate 50 activity classes with 200 samples each. 
Following \citet{koutsoukas2014diverse}, for a subset of this dataset $\mathcal{S}$, we take the number of unique class labels as a proxy ``gold standard'' of the variety of the molecules in $\mathcal{S}$, \ie,
$\textrm{GS}(\mathcal{S}):=\#\text{unique labels in }\mathcal{S},$
which represents the number of biological functionality types covered by $\mathcal{S}$.
We then compare the behavior of the gold standard and the chemical space measures in two settings to find out which measures have the highest correlations with the coverage (or variety) of biological functionalities: 
\begin{inparaenum}[(1)]
    \item a \emph{fixed-size} setting where random subsets of the dataset with fixed sizes are measured, and 
    \item a \emph{growing-size} setting where we sequentially add molecules into a molecular set. 
\end{inparaenum}
We compare the measuring results of $\text{GS}$ and chemical space measures with Spearman's correlation coefficient and dynamic time warping (DTW) distance. 
The experiment details are listed in Appendix \ref{app:random-subset}.

\vspace{-5pt}
\subsubsection{Results and Discussion}

\begin{figure*}[t]
    \centering
    \begin{small}
    \begin{subfigure}[t]{0.48\textwidth}
        \includegraphics[width=\columnwidth]{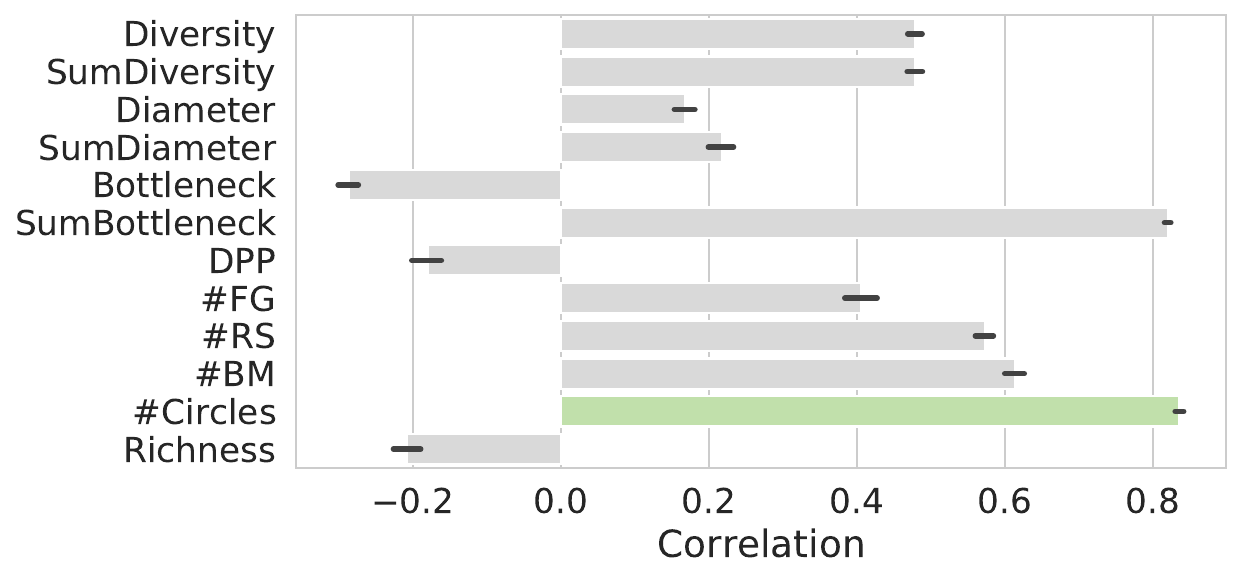}
        \vspace{-5pt}
        \caption{
        Spearman's correlations between the gold standard \textrm{GS} and chemical space measures in the fixed-size setting.
        A higher correlation is better. 
        }
        \label{fig:random-subset-fixed-200}
    \end{subfigure}
    \hfill
    \begin{subfigure}[t]{0.48\textwidth}
        \includegraphics[width=\columnwidth]{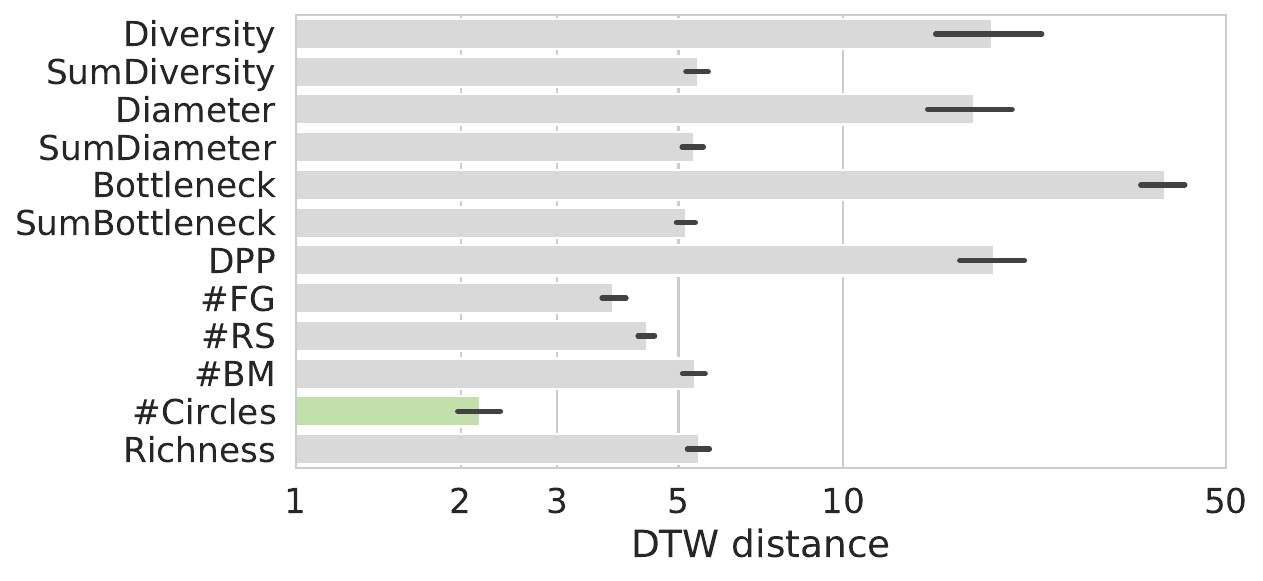}
        \vspace{-5pt}
        \caption{
        DTW distances between the gold standard \textrm{GS} and chemical space measures in the growing-size setting, the smaller the better. 
        A smaller distance is better.
        }
        \label{fig:random-subset-growing-10}
    \end{subfigure}
    \end{small}
    \vspace{-5pt}
    \caption{Correlation between each chemical space measure and biological functionality coverage. The chemical space measure with the highest correlation is highlighted in green. Results are aggregated by running experiments independently ten times. }
    \label{fig:corr}
    \vspace{-10pt}
\end{figure*}

\begin{wraptable}{r}{0.57\textwidth}
\vspace{-30pt}
\caption{
Regarding the two suggested criteria, \textbf{\#Circles} is the most recommended chemical space measure. 
``Mono'', ``subadd'', and ``dissim'' are abbreviations for monotonicity, subadditivity, and dissimilarity respectively. 
}
\vspace{-5pt}
\centering
    \begin{small}
    \begin{tabular}{r|c|c|c}
        \toprule
        & \multicolumn{1}{l|}{\makecell{\textit{(C1) Axiomatic}\\\textit{properties (Sec. \ref{sec:axiom})}}} & \multicolumn{2}{l}{\makecell{\textit{(C2) Correla}\\\textit{-tions (Sec. \ref{sec:corr})}}} \\
        \hhline{~---}
        \textbf{Measures} & \textit{Axioms satisfied} & \textit{Fixed} & \textit{Growing}  \\
        \midrule
        Diversity & \redcell Dissimilarity & \yellowcell Medium & \redcell Low  \\
        SumBot & \yellowcell Subadd, dissim & \greencell High & \yellowcell Medium  \\
        \#FG & \yellowcell Mono, subadd & \yellowcell Medium & \yellowcell Medium \\
        Richness & \greencell Mono, subadd, dissim & \redcell Low & \yellowcell Medium \\
        \textbf{\#Circles} & \greencell Mono, subadd, dissim & \greencell High & \greencell High \\
        \bottomrule
    \end{tabular}
    \end{small}
    \label{tab:summary}
\vspace{-5pt}
\end{wraptable}

\vspace{-5pt}
Experiment results in Figure \ref{fig:corr} suggest that \textrm{\#Circles} stands out for both empirical experiment settings. 
For the \emph{fixed-size} setting, 
\textrm{\#Circles} and \textrm{SumBottleneck} have notably higher correlations to the gold standard than other measures. 
This indicates that the locality information is critical, 
as these two measures both prefer new molecules that are at arm's length from their nearest neighbors. 
For the \emph{growing-size} setting, 
\textrm{\#Circles} surpasses all other chemical space measures. Reference-based measures such as \textrm{\#FG} also perform prominently. Both \textrm{\#Circles} and reference-based measures satisfy subadditivity, making them suitable for a ``growth'' setting. The \textrm{Sum-} variants of aggregation-based measures outperform their original forms, as they tend to increase when adding new molecules. 


More experiment results can be found in Appendix \ref{app:random-subset}, where we also discuss the impact of the distance metric $d$. 
We also study the sensitivity of the subset size $n$, the way of adding new molecules, and the choice of the \textrm{\#Circles}'s threshold $t$. Results suggest that $t=0.75$ is a good choice for both two settings. 
Correlations between chemical space measures are visualized in Figure \ref{fig:random-subset-fixed-200-hm} and \ref{fig:random-subset-grwoing-10-hm}. 

\vspace{-5pt}
\subsection{Which Measure is the Right Choice?}
\vspace{-5pt}

The empirical analysis shows that the locality-based \textrm{\#Circles} measure is a robust choice for all tested scenarios, which reconfirms the conclusion of the axiomatic analysis. 
Besides, \textrm{SumBottleneck} may be an effective choice when a fixed number of candidates are the target. 
Reference-based measures may be good alternatives if carefully-designed and comprehensive reference sets are available. 
Surprisingly, the widely used \textrm{Diversity} measure is rendered inferior both analytically and empirically, casting doubts on its efficiency in measuring and encouraging exploration for molecular generation.
Our analysis and comparison results are summarized in Table \ref{tab:summary}. 

\vspace{-8pt}
\section{Measuring Chemical Space Coverage}
\label{sec:measure-exp}
\vspace{-7pt}

In this section, we apply the chemical space measures to evaluate how well the existing databases and ML-based molecular generation models cover the chemical space. 


\vspace{-3pt}
\subsection{Measuring Chemical Space Covered by Molecular Databases}
\vspace{-5pt}

Other than being a rich source for virtual screening for drug purposes, molecular databases are also bonanzas of data for training ML models.
In both scenarios, the coverage of chemical space is a crucial descriptor for comparing various databases. 
With a higher coverage of the chemical space, we are able to construct candidate libraries with better varieties. Also, the ML model would benefit from the diverse training data, obtaining a better understanding of the overall chemical space.
Here, we compare five commonly used molecular databases with the selected measures. 

\vspace{-3pt}
\subsubsection{Experiment Setup}

\vspace{-3pt}
\paragraph{Databases and filtering rules. }
\textcolor{mygreen}{
We include five widely used drug and compound databases in our measurement: 
\begin{inparaenum}[(1)]
    \item \textbf{ZINC-250k} (a random subset of the ZINC database) \citep{irwin2005zinc}, 
    \item \textbf{MOSES} \citep{polykovskiy2020molecular},
    \item \textbf{ChEMBL} \citep{gaulton2017the}, 
    \item \textbf{GDB-17} \citep{ruddigkeit2012enumeration}, and
    \item the \textbf{Enamine Hit Locator Library} (ENAMINE). 
\end{inparaenum}
More details about the databases are listed in Appendix~\ref{app:databases-intro}. 
}

\textcolor{mygreen}{
We further apply two filtering rules to these databases to select potential drug candidates. 
}
In drug discovery, drug-likeness and synthesizability of compounds are both important considerations. 
The quantitative drug-likeness (QED)~\citep{bickerton2012quantifying} and the synthetic accessibility (SA)~\citep{ertl2009estimation} are computed. The molecules are filtered with $\text{QED}\ge 0.6$ and $\text{SA}\le 4$. 

\vspace{-5pt}
\subsubsection{Results and Discussion}
\vspace{-5pt}

\begin{figure*}[t]
    \centering
    \includegraphics[width=\columnwidth]{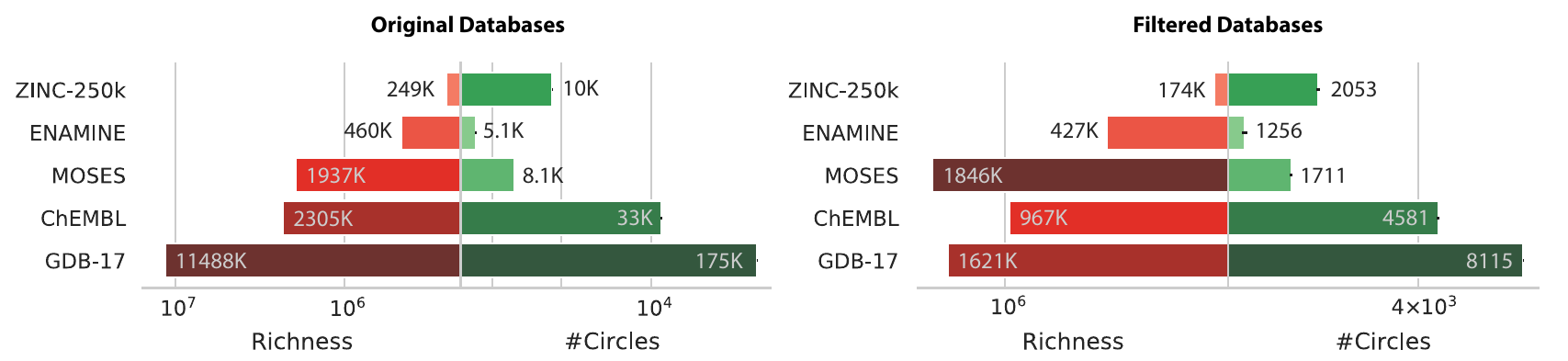}
    \vspace{-15pt}
    \caption{
        \textrm{Richness} and \textrm{\#Circles} of molecular databases. Databases are listed vertically. The horizontal axes are chemical space measures (plotted in logarithmic scales). 
        \textrm{\#Circles} has the threshold value ($t=0.75$). We find a discrepancy between Richness and \textrm{\#Circles}, meaning that a larger database does not necessarily span a wider range of the chemical space. 
    }
    \label{fig:databases}
    \vspace{-10pt}
\end{figure*}

Figure \ref{fig:databases} compares Richness and Circles of molecular databases. From the results, we conclude the findings as follows. 
\begin{inparaenum}[(1)]
    \item \textbf{A larger database does not necessarily span a wider range of the chemical space}: There is a discrepancy between Richness and \textrm{\#Circles}, and the \textrm{\#Circles} measuring results suggest that, in terms of chemical space coverage, 
    \textcolor{mygreen}{ZINC-250k covers a larger area than ENAMINE and MOSES. }
    \item \textbf{ZINC-250k and GDB-17 are recommended}: With the smallest amount of compounds,
    ZINC-250k renders a relatively high coverage of the chemical space, which is highly recommended for model training with limited computation resources. 
    Constructed by enumeration, GDB-17 
    substantially expands the known chemical space to a great level, which is suitable for training ML models for chemical space representations. 
\end{inparaenum} 
More details and findings on other chemical measures are provided in Appendix \ref{app:databases}. 

\vspace{-3pt}
\subsection{Measuring Chemical Space Explored by Generation Models}


\vspace{-3pt}
\subsubsection{Problem Formulation}
\vspace{-5pt}

In drug discovery, the search for active molecules towards a druggable target is often formulated as the following unconstrained optimization problem \citep{olivecrona2017molecular,gomez-bombarelli2018automatic,liu2018constrained,you2018graph,jin2018junction,de2018molgan,popova2019molecularrnn,shi2020graphaf,xie2021mars}:

\vspace{-15pt}
\begin{equation}
    \label{eq:mol-gen}
    \argmax_{x\in\mathcal{U}}\quad
    s_1(x)\circ s_2(x)\circ \cdots s_K(x),
\end{equation}
\vspace{-10pt}

where $x$ is a molecule in the chemical space $\mathcal{U}$, $s_k:\mathcal{U}\to\mathbb{R}$ is a function scoring particular biological properties, and the ``$\circ$'' operator indicates the combination of multiple scores (\eg, summation or multiplication). 
This scoring function can be binding affinity to protein targets, drug-likeness, synthesizability, \etc When wet-lab experiments are not available, these scores are obtained computationally.

Unfortunately, the majority of the computational scoring functions are merely approximations and cannot accurately predict the wet-lab results. As pointed out in the chemistry literature, rather than focusing on optimizing the estimated property scores, it is crucial to generate a variety of compounds that span a wider range of the chemical space~\citep{huggins2011rational,wawer2014toward,ashenden2018screening}. We therefore propose the following objective as the goal of molecular generation, and use this new objective to compare models:

\vspace{-20pt}
\begin{align}
    \label{eq:constrained}
    \argmax_{\mathcal{S}\subseteq\mathcal{U}}\quad
    \mu(\mathcal{S})\quad
    \text{s.t.}\quad
    s_k(x)\ge C_k,\ \forall k\in[K], \ x\in\mathcal{S},
\end{align}
\vspace{-10pt}

where $\mathcal{S}\subseteq\mathcal{U}$ is a molecular candidate set, $\mu(\cdot)$ is a chemical space measure, $s_k(x)\ge C_k$ means the estimated property score of a molecule $x$ is at or above a threshold $C_k$.

\vspace{-5pt}
\subsubsection{Experiment Setup}

\vspace{-5pt}
\paragraph{Generation objective. }
Following \citet{li2018multi} and \citet{jin2020multi}, we consider the inhibition against an Alzheimer-related target protein c-Jun N-terminal kinase-3 (JNK3) as the biological objective. 
The JNK3 binding affinity score is predicted by a random forest model
based on Morgan fingerprint features of a molecule~\citep{rogers2010extended}. 
We also consider drug likeness (QED) and synthetic accessibility (SA) as they are important in practical drug discovery scenarios. 

\vspace{-5pt}
\paragraph{Models, variations, and the baseline. }
We employ four recently proposed molecular generation models to optimize the generation objective (JNK3, QED, and SA):
\begin{inparaenum}[(1)]
    \item \textbf{RationaleRL} \citep{jin2020multi} generates molecules by combining rationales with reinforcement learning (RL);
    \item \textbf{DST} \citep{fu2022differentiable} supports a gradient-based optimization on chemical graphs;
    \item \textbf{JANUS} \citep{nigam2022parallel} uses the genetic algorithm (GA) to design drugs inversely with parallel tempering; 
    \item \textbf{MARS} \citep{xie2021mars} generates fragment-based molecular structures based on Markov chain Monte Carlo (MCMC) sampling. 
    Due to MARS' great ability to simultaneously optimize multiple drug discovery objectives, we also consider three variants of it (MARS+Diversity, MARS+SumBottleneck, and MARS+\#Circles), where the chemical space measures are incorporated into Equation \ref{eq:mol-gen} to encourage exploration. 
\end{inparaenum}
The virtual screening setting is introduced as a baseline, where the five aforementioned \textbf{molecular databases} are combined, forming a candidate set of 11M compounds. 


\vspace{-5pt}
\paragraph{Evaluation. }

To validate the ability in exploring the chemical space, we compare the molecules generated by models using Equation \ref{eq:constrained} with multiple chemical space measures. 
The molecules (including molecules generated by models as well as compounds from databases) are filtered with the constraints $\text{JNK3}\ge0.5$, $\text{QED}\ge0.6$, and $\text{SA}\le4$, representing the positive candidates. 
We examine seven different measures, and report \textrm{\#Circles}, \textrm{Richness}, and \textrm{Diversity} here; the others are listed in Appendix \ref{app:jnk3}). 
Note we cannot directly measure the biological functionality coverage as in Section~\ref{sec:corr}, as many of the generated molecules are not in the database. 
For RationaleRL and DST, 5000 molecules are generated as suggested in their papers. For JANUS and MARS, the population size and the number of MCMC chains are set as 5000. JANUS evolves for 10 generations, and MARS iterates 2000 steps. 
More implementation details are listed in Appendix \ref{app:jnk3}. 

\vspace{-5pt}
\subsubsection{Results and Discussion}
\vspace{-5pt}

\begin{table*}[t]
\begin{center}
\caption{
Measuring results of the chemical space explored by molecular generation methods. 
Incorporating chemical space measures into the objectives increases the exploration of the chemical space measures, but none of the tested ML-based models surpasses the virtual screening baseline (\textbf{Databases}) in terms of \textbf{\#Circles} when $t=0.75$ (as suggested in Sec. \ref{sec:corr}, highlighted in grey). 
Numbers in the parentheses following \textbf{\textrm{\#Circles}} are values for the threshold $t$. 
In each chemical space measure, the larger value the better. 
\textbf{Bold} indicates the best performance in each measure.
}
\label{tab:jnk3}
\vspace{-5pt}
\begin{small}
\begin{tabular}{l|l|l|l|l|l}
\toprule																															
\textbf{Method}	&		\textbf{\textrm{Richness}}					&		\textbf{\textrm{\#Circles}} (0.70)				&	\graycell		\textbf{\textrm{\#Circles}} (0.75)				&		\textbf{\textrm{\#Circles}} (0.80)				&	\textbf{\textrm{Diversity}} 			\\
\midrule																															
\textbf{Databases}	&		1,250					&	\phantom{0}78.0	$\pm$	\phantom{0}4.4\%	&	\graycell		\textbf{55.7}	$\pm$	\phantom{0}3.7\%	&		\textbf{30.3}	$\pm$	\phantom{0}5.0\%	&	\textbf{0.827} 			\\
\midrule
\textbf{RationaleRL}	&		\phantom{K}526	$\pm$	\phantom{0}1.7\%	&	\phantom{0}14.2	$\pm$	\phantom{0}6.8\%	&	\graycell		12.1	$\pm$	\phantom{0}6.5\%	&	\phantom{0}9.9	$\pm$	\phantom{0}0.3\%	&	0.762	$\pm$	0.3\%	\\
\textbf{DST}	&	\phantom{0K}	29				&	\phantom{00}2.0	$\pm$	\phantom{0}0.0\%	&	\graycell	\phantom{0}1.0	$\pm$	\phantom{0}0.0\%	&	\phantom{0}1.0	$\pm$	\phantom{0}0.0\%	&	0.559			\\
\textbf{JANUS}	&	\phantom{K}	261				&	\phantom{0}15.0	$\pm$		11.6\%	&	\graycell	\phantom{0}6.7	$\pm$	\phantom{0}8.7\%	&	\phantom{0}4.0	$\pm$		25.0\%	&	0.759			\\
\textbf{MARS}	&		134K		$\pm$		10.2\%	&	\phantom{0}75.3	$\pm$	\phantom{0}4.4\%	&	\graycell		22.6	$\pm$		13.3\%	&	\phantom{0}7.3	$\pm$	16.0\%	&	0.736	$\pm$	0.8\%	\\
+Diversity	&		177K		$\pm$		13.7\%	&	\phantom{0}93.8	$\pm$	\phantom{0}7.6\%	&	\graycell		24.4	$\pm$	\phantom{0}9.8\%	&	\phantom{0}8.8	$\pm$		13.9\%	&	0.752	$\pm$	0.2\%	\\
+SumBot	&		\textbf{221K}		$\pm$		13.4\%	&		128.3	$\pm$	\phantom{0}8.1\%	&	\graycell		29.3	$\pm$	\phantom{0}7.6\%	&	\phantom{0}8.8	$\pm$		13.0\%	&	0.752	$\pm$	0.3\%	\\
+\#Circles	&		179K		$\pm$		17.4\%	&		\textbf{129.9}	$\pm$		19.1\%	&	\graycell		30.2	$\pm$		18.0\%	&	\phantom{0}9.7	$\pm$		16.2\%	&	0.749	$\pm$	0.7\%	\\
\bottomrule																								
\end{tabular}
\end{small}
\end{center}
\vspace{-20pt}
\end{table*}

\begin{wrapfigure}{r}{0.6\textwidth}
  \begin{center}
    \includegraphics[width=0.58\textwidth]{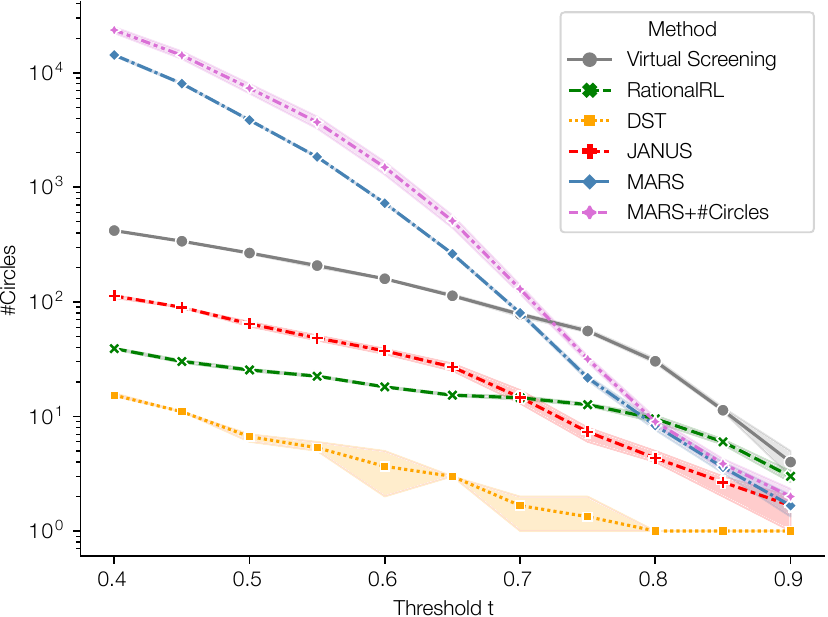}
  \end{center}
  \vspace{-5pt}
  \caption{\textcolor{mygreen}{
  \textrm{\#Circles} in the log scale and its threshold $t$ selection for various models. Curves of the \textrm{\#Circles} measure reflects characteristics of molecular generation methods. }
    }
    \label{fig:ncirc-log}
\vspace{-10pt}
\end{wrapfigure}

Table \ref{tab:jnk3} lists the chemical space measures estimated for the positive candidate compounds (\ie, $\text{JNK3}\ge0.5$, $\text{QED}\ge0.6$, and $\text{SA}\le4$) generated by models or from databases. 
Based on these results, we conclude the following.
\begin{inparaenum}[(1)]
    \item \textbf{ML-based models fail to explore a larger effectual area compared to databases}: Unexpectedly, we find the molecules obtained by virtual screening over databases span the largest area of the chemical space in terms of \textrm{\#Circles} when $t=0.75$. The most competitive model among the tested ones, MARS+\textrm{\#Circles}, explores only about half of the space compared to virtual screening. 
    \item \textbf{Chemical space measures can encourage the model to explore}:
    By incorporating chemical space measures into the optimization objective of MARS, all measures are significantly improved during molecular generation. 
    \item \textbf{Diversity should be avoided as a descriptor for exploration}:
    During sampling, MARS and its variants all converge quickly in \textrm{Diversity} (Figure~\ref{fig:jnk3-curves}), despite a large number of new molecules discovered. Again, it suggests that the widely used \textrm{Diversity} is not suitable for measuring chemical space. 
\end{inparaenum}

\textcolor{mygreen}{
We want to stress that \textbf{the parameter $t$ of \textrm{\#Circles} has an interpretable meaning}. When $t$ is smaller, \textrm{\#Circles} tends to reflect the ``spread'' of molecules at a finer granularity. When $t$ is larger, the measure focuses more on the global picture.
Therefore, we can intuitively characterize different molecular generation methods by examining \textrm{\#Circles} values with different thresholds $t$ as displayed in Figure~\ref{fig:ncirc-log}.
}
\textcolor{mygreen}{
From the results, we find MARS and JANUS present similar characteristics as they both have a relatively sharp decrease around $t=0.75$. 
This might be because both two models are based on sampling algorithms
and tend to \emph{exploit locally} in the chemical space. 
In contrast, RationaleRL tends to \emph{explore globally} because it is designed to combine active fragments (rationales).
}
\textcolor{mygreen}{
We can also compare the models by adopting the concept of \emph{Pareto optimization}. We find the DST, JANUS, and the vanilla MARS model are completely dominated by the variant MARS+\textrm{\#Circles}. Besides, the RationaleRL method is dominated by the virtual screening approach. 
}

A few limitations of the measurement with \textrm{\#Circles} should be noted. 
\textcolor{mygreen}{For example the running time of calculating \textrm{\#Circles} is exponential in theory. 
}
We implement a fast approximation by sacrificing accuracy (Appendix \ref{app:approx}). 
More experiment details and findings 
are provided in Appendix \ref{app:jnk3}.

\vspace{-5pt}
\section{Conclusion and Future Work}
\vspace{-10pt}

In this paper, we have presented a systematic study on the coverage measures of chemical space in the context of drug discovery. 
The core contribution of this study is a novel evaluation framework for selecting proper coverage measures. Using the proposed framework, we have identified a \textcolor{mygreen}{new chemical space} coverage measure, \textrm{\#Circles}, which is superior to existing heuristic measures commonly used in the drug discovery literature. 
We have also quantitatively compared molecular databases and ML-based generation models in terms of \textrm{\#Circles}. The results suggest that many ML-based generation models fail to explore a larger chemical space compared to virtual screening, since they tend to exploit locally instead of exploring widely. 

Measuring the chemical space is a fundamental problem in drug discovery. Our work in this direction opens even more research questions that are worth investigating in the future. 
For example, it would be interesting to design and evaluate more chemical space measures under our framework. Moreover, the newly suggested objective for molecular generation (Equation \ref{eq:constrained}) inspires further study, since it is computationally challenging as a combinatorial constrained optimization problem. 
In general, the proposed chemical space measures and the generic measure selection framework are not restricted to the application of drug discovery, but can also be used in other domains, such as material design, science of science, and text or image generation.


\newpage

\subsubsection*{Acknowledgments}
\textcolor{mygreen}{
We would like to thank Professor Kevyn Collins-Thompson, Professor Paramveer Dhillon, and Professor Aaron Frank for their helpful feedback and suggestions. We also thank the anonymous reviewers for their constructive comments, and in particular for pointing out the relation between \#Circles and packing number. This work was in part supported by the National Science Foundation under grant number 1633370.  
}

\bibliography{iclr2023_conference}

\begin{thebibliography}{61}
\providecommand{\natexlab}[1]{#1}
\providecommand{\url}[1]{\texttt{#1}}
\expandafter\ifx\csname urlstyle\endcsname\relax
  \providecommand{\doi}[1]{doi: #1}\else
  \providecommand{\doi}{doi: \begingroup \urlstyle{rm}\Url}\fi

\bibitem[Ahmadi et~al.(2021)Ahmadi, Moradi, Kumar, and
  Almasirad]{ahmadi2021smiles}
Shahin Ahmadi, Zohreh Moradi, Ashwani Kumar, and Ali Almasirad.
\newblock Smiles-based qsar and molecular docking study of xanthone derivatives
  as $\alpha$-glucosidase inhibitors.
\newblock \emph{Journal of Receptors and Signal Transduction}, pp.\  1--12,
  2021.

\bibitem[Alcalde-Unzu \& Vorsatz(2013)Alcalde-Unzu and
  Vorsatz]{alcalde2013measuring}
Jorge Alcalde-Unzu and Marc Vorsatz.
\newblock Measuring the cohesiveness of preferences: an axiomatic analysis.
\newblock \emph{Social Choice and Welfare}, 41\penalty0 (4):\penalty0 965--988,
  2013.

\bibitem[Ashenden(2018)]{ashenden2018screening}
Stephanie~Kay Ashenden.
\newblock Screening library design.
\newblock \emph{Methods in enzymology}, 610:\penalty0 73--96, 2018.

\bibitem[Bajusz et~al.(2015)Bajusz, R{\'a}cz, and
  H{\'e}berger]{bajusz2015tanimoto}
D{\'a}vid Bajusz, Anita R{\'a}cz, and K{\'a}roly H{\'e}berger.
\newblock Why is tanimoto index an appropriate choice for fingerprint-based
  similarity calculations?
\newblock \emph{Journal of cheminformatics}, 7\penalty0 (1):\penalty0 1--13,
  2015.

\bibitem[Bian \& Xie(2021)Bian and Xie]{bian2021generative}
Yuemin Bian and Xiang-Qun Xie.
\newblock Generative chemistry: drug discovery with deep learning generative
  models.
\newblock \emph{Journal of Molecular Modeling}, 27\penalty0 (3):\penalty0
  1--18, 2021.

\bibitem[{Bickerton} et~al.(2012){Bickerton}, {Paolini}, {Besnard}, {Muresan},
  and {Hopkins}]{bickerton2012quantifying}
G.~Richard {Bickerton}, Gaia~V. {Paolini}, Jérémy {Besnard}, Sorel {Muresan},
  and Andrew~L. {Hopkins}.
\newblock Quantifying the chemical beauty of drugs.
\newblock \emph{Nature Chemistry}, 4\penalty0 (2):\penalty0 90--98, 2012.

\bibitem[Blaschke et~al.(2020)Blaschke, Engkvist, Bajorath, and
  Chen]{blaschke2020memory}
Thomas Blaschke, Ola Engkvist, J{\"u}rgen Bajorath, and Hongming Chen.
\newblock Memory-assisted reinforcement learning for diverse molecular de novo
  design.
\newblock \emph{Journal of cheminformatics}, 12\penalty0 (1):\penalty0 1--17,
  2020.

\bibitem[Brown et~al.(2019)Brown, Fiscato, Segler, and
  Vaucher]{brown2019guacamol}
Nathan Brown, Marco Fiscato, Marwin~HS Segler, and Alain~C Vaucher.
\newblock Guacamol: benchmarking models for de novo molecular design.
\newblock \emph{Journal of chemical information and modeling}, 59\penalty0
  (3):\penalty0 1096--1108, 2019.

\bibitem[De~Cao \& Kipf(2018)De~Cao and Kipf]{de2018molgan}
Nicola De~Cao and Thomas Kipf.
\newblock {MolGAN: An implicit generative model for small molecular graphs}.
\newblock \emph{ICML 2018 workshop on Theoretical Foundations and Applications
  of Deep Generative Models}, 2018.

\bibitem[Deng et~al.(2022)Deng, Yang, Ojima, Samaras, and
  Wang]{deng2022artificial}
Jianyuan Deng, Zhibo Yang, Iwao Ojima, Dimitris Samaras, and Fusheng Wang.
\newblock Artificial intelligence in drug discovery: applications and
  techniques.
\newblock \emph{Briefings in Bioinformatics}, 23\penalty0 (1):\penalty0
  bbab430, 2022.

\bibitem[Edelsbrunner \& Harer(2010)Edelsbrunner and
  Harer]{edelsbrunner2010computational}
Herbert Edelsbrunner and John Harer.
\newblock \emph{Computational topology: an introduction}.
\newblock American Mathematical Soc., 2010.

\bibitem[Elton et~al.(2019)Elton, Boukouvalas, Fuge, and Chung]{elton2019deep}
Daniel~C Elton, Zois Boukouvalas, Mark~D Fuge, and Peter~W Chung.
\newblock Deep learning for molecular design—a review of the state of the
  art.
\newblock \emph{Molecular Systems Design \& Engineering}, 4\penalty0
  (4):\penalty0 828--849, 2019.

\bibitem[{Ertl} \& {Schuffenhauer}(2009){Ertl} and
  {Schuffenhauer}]{ertl2009estimation}
Peter {Ertl} and Ansgar {Schuffenhauer}.
\newblock Estimation of synthetic accessibility score of drug-like molecules
  based on molecular complexity and fragment contributions.
\newblock \emph{Journal of Cheminformatics}, 1\penalty0 (1):\penalty0 8--8,
  2009.

\bibitem[Fang et~al.(2004)Fang, Tao, and Zhai]{fang2004formal}
Hui Fang, Tao Tao, and ChengXiang Zhai.
\newblock A formal study of information retrieval heuristics.
\newblock In \emph{Proceedings of the 27th annual international ACM SIGIR
  conference on Research and development in information retrieval}, pp.\
  49--56, 2004.

\bibitem[Fu et~al.(2022)Fu, Gao, Xiao, Yasonik, Coley, and
  Sun]{fu2022differentiable}
Tianfan Fu, Wenhao Gao, Cao Xiao, Jacob Yasonik, Connor~W. Coley, and Jimeng
  Sun.
\newblock Differentiable scaffolding tree for molecule optimization.
\newblock In \emph{International Conference on Learning Representations}, 2022.

\bibitem[Fukutani et~al.(2021)Fukutani, Miyazawa, Iwata, and
  Satoh]{fukutani2021g}
Tomonori Fukutani, Kohei Miyazawa, Satoru Iwata, and Hiroko Satoh.
\newblock G-rmsd: Root mean square deviation based method for three-dimensional
  molecular similarity determination.
\newblock \emph{Bulletin of the Chemical Society of Japan}, 94\penalty0
  (2):\penalty0 655--665, 2021.

\bibitem[{Gaulton} et~al.(2017){Gaulton}, {Hersey}, {Nowotka}, {Bento},
  {Chambers}, {Mendez}, {Mutowo}, {Atkinson}, {Bellis}, {Cibrián-Uhalte},
  {Davies}, {Dedman}, {Karlsson}, {Magariños}, {Overington}, {Papadatos},
  {Smit}, and {Leach}]{gaulton2017the}
Anna {Gaulton}, Anne {Hersey}, Michał {Nowotka}, A.~Patrícia {Bento}, Jon
  {Chambers}, David {Mendez}, Prudence {Mutowo}, Francis {Atkinson}, Louisa~J.
  {Bellis}, Elena {Cibrián-Uhalte}, Mark {Davies}, Nathan {Dedman}, Anneli
  {Karlsson}, María~Paula {Magariños}, John~P. {Overington}, George
  {Papadatos}, Ines {Smit}, and Andrew~R. {Leach}.
\newblock The chembl database in 2017.
\newblock \emph{Nucleic Acids Research}, 45, 2017.

\bibitem[Gilmer et~al.(2017)Gilmer, Schoenholz, Riley, Vinyals, and
  Dahl]{gilmer2017neural}
Justin Gilmer, Samuel~S. Schoenholz, Patrick~F. Riley, Oriol Vinyals, and
  George~E. Dahl.
\newblock Neural message passing for quantum chemistry.
\newblock In \emph{Proceedings of the 34th International Conference on Machine
  Learning}, 2017.

\bibitem[Gobbi \& Lee(2003)Gobbi and Lee]{gobbi2003dise}
Alberto Gobbi and Man-Ling Lee.
\newblock Dise: directed sphere exclusion.
\newblock \emph{Journal of chemical information and computer sciences},
  43\penalty0 (1):\penalty0 317--323, 2003.

\bibitem[{Gómez-Bombarelli} et~al.(2018){Gómez-Bombarelli}, {Wei},
  {Duvenaud}, {Hernández-Lobato}, {Sánchez-Lengeling}, {Sheberla},
  {Aguilera-Iparraguirre}, {Hirzel}, {Adams}, and
  {Aspuru-Guzik}]{gomez-bombarelli2018automatic}
Rafael {Gómez-Bombarelli}, Jennifer~Nansean {Wei}, David {Duvenaud},
  José~Miguel {Hernández-Lobato}, Benjamín {Sánchez-Lengeling}, Dennis
  {Sheberla}, Jorge {Aguilera-Iparraguirre}, Timothy~D. {Hirzel}, Ryan~Prescott
  {Adams}, and Alan {Aspuru-Guzik}.
\newblock Automatic chemical design using a data-driven continuous
  representation of molecules.
\newblock \emph{ACS central science}, 4\penalty0 (2):\penalty0 268--276, 2018.

\bibitem[Halmos(2013)]{halmos2013measure}
Paul~R Halmos.
\newblock \emph{Measure theory}, volume~18.
\newblock Springer, 2013.

\bibitem[Herstein \& Milnor(1953)Herstein and Milnor]{herstein1953axiomatic}
Israel~N Herstein and John Milnor.
\newblock An axiomatic approach to measurable utility.
\newblock \emph{Econometrica, Journal of the Econometric Society}, pp.\
  291--297, 1953.

\bibitem[{Huggins} et~al.(2011){Huggins}, {Venkitaraman}, and
  {Spring}]{huggins2011rational}
David~John {Huggins}, Ashok~Ramakrishnan {Venkitaraman}, and David~Robert
  {Spring}.
\newblock Rational methods for the selection of diverse screening compounds.
\newblock \emph{ACS Chemical Biology}, 6\penalty0 (3):\penalty0 208--217, 2011.

\bibitem[Irwin \& Shoichet(2005)Irwin and Shoichet]{irwin2005zinc}
John~J Irwin and Brian~K Shoichet.
\newblock Zinc- a free database of commercially available compounds for virtual
  screening.
\newblock \emph{Journal of chemical information and modeling}, 45\penalty0
  (1):\penalty0 177--182, 2005.

\bibitem[Jaccard(1912)]{jaccard1912distribution}
Paul Jaccard.
\newblock The distribution of the flora in the alpine zone. 1.
\newblock \emph{New phytologist}, 11\penalty0 (2):\penalty0 37--50, 1912.

\bibitem[Jiang et~al.(2022)Jiang, Wang, Cong, Huang, Ramezani, Sarma,
  Dokholyan, Mahdavi, and Kandemir]{jiang2022predicting}
Huaipan Jiang, Jian Wang, Weilin Cong, Yihe Huang, Morteza Ramezani, Anup
  Sarma, Nikolay~V Dokholyan, Mehrdad Mahdavi, and Mahmut~T Kandemir.
\newblock Predicting protein--ligand docking structure with graph neural
  network.
\newblock \emph{Journal of Chemical Information and Modeling}, 62\penalty0
  (12):\penalty0 2923--2932, 2022.

\bibitem[Jin et~al.(2018)Jin, Barzilay, and Jaakkola]{jin2018junction}
Wengong Jin, Regina Barzilay, and Tommi~S. Jaakkola.
\newblock Junction tree variational autoencoder for molecular graph generation.
\newblock In \emph{Proceedings of the 35th International Conference on Machine
  Learning}, 2018.

\bibitem[Jin et~al.(2020)Jin, Barzilay, and Jaakkola]{jin2020multi}
Wengong Jin, Regina Barzilay, and Tommi Jaakkola.
\newblock Multi-objective molecule generation using interpretable
  substructures.
\newblock In \emph{International Conference on Machine Learning}, pp.\
  4849--4859. PMLR, 2020.

\bibitem[Kosub(2019)]{kosub2019note}
Sven Kosub.
\newblock A note on the triangle inequality for the jaccard distance.
\newblock \emph{Pattern Recognition Letters}, 120:\penalty0 36--38, 2019.

\bibitem[Koutsoukas et~al.(2014)Koutsoukas, Paricharak, Galloway, Spring,
  IJzerman, Glen, Marcus, and Bender]{koutsoukas2014diverse}
Alexios Koutsoukas, Shardul Paricharak, Warren~RJD Galloway, David~R Spring,
  Adriaan~P IJzerman, Robert~C Glen, David Marcus, and Andreas Bender.
\newblock How diverse are diversity assessment methods? a comparative analysis
  and benchmarking of molecular descriptor space.
\newblock \emph{Journal of chemical information and modeling}, 54\penalty0
  (1):\penalty0 230--242, 2014.

\bibitem[Kulesza \& Taskar(2011)Kulesza and Taskar]{kulesza2011learning}
Alex Kulesza and Ben Taskar.
\newblock Learning determinantal point processes.
\newblock In \emph{Proceedings of the Twenty-Seventh Conference on Uncertainty
  in Artificial Intelligence}, pp.\  419--427, 2011.

\bibitem[Kulesza et~al.(2012)Kulesza, Taskar, et~al.]{kulesza2012determinantal}
Alex Kulesza, Ben Taskar, et~al.
\newblock Determinantal point processes for machine learning.
\newblock \emph{Foundations and Trends{\textregistered} in Machine Learning},
  5\penalty0 (2--3):\penalty0 123--286, 2012.

\bibitem[{Li} et~al.(2018){Li}, {Zhang}, and {Liu}]{li2018multi}
Yibo {Li}, Liang~Ren {Zhang}, and Zhenming {Liu}.
\newblock Multi-objective de novo drug design with conditional graph generative
  model.
\newblock \emph{Journal of Cheminformatics}, 10\penalty0 (1):\penalty0 33,
  2018.

\bibitem[Lipkus(1999)]{lipkus1999proof}
Alan~H Lipkus.
\newblock A proof of the triangle inequality for the tanimoto distance.
\newblock \emph{Journal of Mathematical Chemistry}, 26\penalty0 (1):\penalty0
  263--265, 1999.

\bibitem[Liu et~al.(2018)Liu, Allamanis, Brockschmidt, and
  Gaunt]{liu2018constrained}
Qi~Liu, Miltiadis Allamanis, Marc Brockschmidt, and Alexander Gaunt.
\newblock Constrained graph variational autoencoders for molecule design.
\newblock In S.~Bengio, H.~Wallach, H.~Larochelle, K.~Grauman, N.~Cesa-Bianchi,
  and R.~Garnett (eds.), \emph{Advances in Neural Information Processing
  Systems}, volume~31. Curran Associates, Inc., 2018.

\bibitem[McInnes et~al.(2018)McInnes, Healy, and Melville]{mcinnes2018umap}
Leland McInnes, John Healy, and James Melville.
\newblock Umap: Uniform manifold approximation and projection for dimension
  reduction.
\newblock \emph{arXiv preprint arXiv:1802.03426}, 2018.

\bibitem[Nigam et~al.(2022)Nigam, Pollice, and Aspuru-Guzik]{nigam2022parallel}
AkshatKumar Nigam, Robert Pollice, and Al{\'a}n Aspuru-Guzik.
\newblock Parallel tempered genetic algorithm guided by deep neural networks
  for inverse molecular design.
\newblock \emph{Digital Discovery}, 1\penalty0 (4):\penalty0 390--404, 2022.

\bibitem[{Olivecrona} et~al.(2017){Olivecrona}, {Blaschke}, {Engkvist}, and
  {Chen}]{olivecrona2017molecular}
Marcus {Olivecrona}, Thomas {Blaschke}, Ola {Engkvist}, and Hongming {Chen}.
\newblock Molecular de-novo design through deep reinforcement learning.
\newblock \emph{Journal of Cheminformatics}, 9\penalty0 (1):\penalty0 48--48,
  2017.

\bibitem[Polykovskiy et~al.(2020)Polykovskiy, Zhebrak, Sanchez-Lengeling,
  Golovanov, Tatanov, Belyaev, Kurbanov, Artamonov, Aladinskiy, Veselov,
  et~al.]{polykovskiy2020molecular}
Daniil Polykovskiy, Alexander Zhebrak, Benjamin Sanchez-Lengeling, Sergey
  Golovanov, Oktai Tatanov, Stanislav Belyaev, Rauf Kurbanov, Aleksey
  Artamonov, Vladimir Aladinskiy, Mark Veselov, et~al.
\newblock Molecular sets (moses): a benchmarking platform for molecular
  generation models.
\newblock \emph{Frontiers in pharmacology}, 11:\penalty0 1931, 2020.

\bibitem[{Popova} et~al.(2019){Popova}, {Shvets}, {Oliva}, and
  {Isayev}]{popova2019molecularrnn}
Mariya {Popova}, Mykhailo {Shvets}, Junier {Oliva}, and Olexandr {Isayev}.
\newblock Molecularrnn: Generating realistic molecular graphs with optimized
  properties.
\newblock \emph{arXiv: Learning}, 2019.

\bibitem[Preuer et~al.(2018)Preuer, Renz, Unterthiner, Hochreiter, and
  Klambauer]{preuer2018frechet}
Kristina Preuer, Philipp Renz, Thomas Unterthiner, Sepp Hochreiter, and
  G\"unter Klambauer.
\newblock Fr{\'e}chet chemnet distance: a metric for generative models for
  molecules in drug discovery.
\newblock \emph{Journal of chemical information and modeling}, 58\penalty0
  (9):\penalty0 1736--1741, 2018.

\bibitem[{Rogers} \& {Hahn}(2010){Rogers} and {Hahn}]{rogers2010extended}
David {Rogers} and Mathew {Hahn}.
\newblock Extended-connectivity fingerprints.
\newblock \emph{Journal of Chemical Information and Modeling}, 50\penalty0
  (5):\penalty0 742--754, 2010.

\bibitem[Ruddigkeit et~al.(2012)Ruddigkeit, Van~Deursen, Blum, and
  Reymond]{ruddigkeit2012enumeration}
Lars Ruddigkeit, Ruud Van~Deursen, Lorenz~C Blum, and Jean-Louis Reymond.
\newblock Enumeration of 166 billion organic small molecules in the chemical
  universe database gdb-17.
\newblock \emph{Journal of chemical information and modeling}, 52\penalty0
  (11):\penalty0 2864--2875, 2012.

\bibitem[Ruiz-Carmona et~al.(2014)Ruiz-Carmona, Alvarez-Garcia, Foloppe,
  Garmendia-Doval, Juhos, Schmidtke, Barril, Hubbard, and
  Morley]{ruiz2014rdock}
Sergio Ruiz-Carmona, Daniel Alvarez-Garcia, Nicolas Foloppe, A~Beatriz
  Garmendia-Doval, Szilveszter Juhos, Peter Schmidtke, Xavier Barril,
  Roderick~E Hubbard, and S~David Morley.
\newblock rdock: a fast, versatile and open source program for docking ligands
  to proteins and nucleic acids.
\newblock \emph{PLoS computational biology}, 10\penalty0 (4):\penalty0
  e1003571, 2014.

\bibitem[Samanta et~al.(2020)Samanta, O’Hagan, Swainston, Roberts, and
  Kell]{samanta2020vae}
Soumitra Samanta, Steve O’Hagan, Neil Swainston, Timothy~J Roberts, and
  Douglas~B Kell.
\newblock Vae-sim: a novel molecular similarity measure based on a variational
  autoencoder.
\newblock \emph{Molecules}, 25\penalty0 (15):\penalty0 3446, 2020.

\bibitem[Schwalbe-Koda \& G{\'o}mez-Bombarelli(2020)Schwalbe-Koda and
  G{\'o}mez-Bombarelli]{schwalbe2020generative}
Daniel Schwalbe-Koda and Rafael G{\'o}mez-Bombarelli.
\newblock Generative models for automatic chemical design.
\newblock In \emph{Machine Learning Meets Quantum Physics}, pp.\  445--467.
  Springer, 2020.

\bibitem[Shen et~al.(2010)Shen, Cheng, Xu, Li, and Tang]{shen2010estimation}
Jie Shen, Feixiong Cheng, You Xu, Weihua Li, and Yun Tang.
\newblock Estimation of adme properties with substructure pattern recognition.
\newblock \emph{Journal of chemical information and modeling}, 50\penalty0
  (6):\penalty0 1034--1041, 2010.

\bibitem[Shi et~al.(2020)Shi, Xu, Zhu, Zhang, Zhang, and Tang]{shi2020graphaf}
Chence Shi, Minkai Xu, Zhaocheng Zhu, Weinan Zhang, Ming Zhang, and Jian Tang.
\newblock Graphaf: a flow-based autoregressive model for molecular graph
  generation.
\newblock In \emph{Proceedings of the 8th International Conference on Learning
  Representations}, 2020.

\bibitem[Shi \& von Itzstein(2019)Shi and von Itzstein]{shi2019size}
Yun Shi and Mark von Itzstein.
\newblock How size matters: diversity for fragment library design.
\newblock \emph{Molecules}, 24\penalty0 (15):\penalty0 2838, 2019.

\bibitem[Shoichet et~al.(1992)Shoichet, Kuntz, and
  Bodian]{shoichet1992molecular}
Brian~K Shoichet, Irwin~D Kuntz, and Dale~L Bodian.
\newblock Molecular docking using shape descriptors.
\newblock \emph{Journal of computational chemistry}, 13\penalty0 (3):\penalty0
  380--397, 1992.

\bibitem[Snarey et~al.(1997)Snarey, Terrett, Willett, and
  Wilton]{snarey1997comparison}
Michael Snarey, Nicholas~K Terrett, Peter Willett, and David~J Wilton.
\newblock Comparison of algorithms for dissimilarity-based compound selection.
\newblock \emph{Journal of Molecular Graphics and Modelling}, 15\penalty0
  (6):\penalty0 372--385, 1997.

\bibitem[Tanimoto(1968)]{tanimoto1968elementary}
TT~Tanimoto.
\newblock An elementary mathematical theory of classification and prediction,
  ibm report (november, 1958), cited in: G. salton, automatic information
  organization and retrieval, 1968.

\bibitem[Trott \& Olson(2010)Trott and Olson]{trott2010autodock}
Oleg Trott and Arthur~J Olson.
\newblock Autodock vina: improving the speed and accuracy of docking with a new
  scoring function, efficient optimization, and multithreading.
\newblock \emph{Journal of computational chemistry}, 31\penalty0 (2):\penalty0
  455--461, 2010.

\bibitem[Vershynin(2018)]{vershynin2018high}
Roman Vershynin.
\newblock \emph{High-dimensional probability: An introduction with applications
  in data science}, volume~47.
\newblock Cambridge university press, 2018.

\bibitem[Wang et~al.(2019)Wang, Guo, Wang, Sun, and Huang]{wang2019smiles}
Sheng Wang, Yuzhi Guo, Yuhong Wang, Hongmao Sun, and Junzhou Huang.
\newblock Smiles-bert: large scale unsupervised pre-training for molecular
  property prediction.
\newblock In \emph{Proceedings of the 10th ACM international conference on
  bioinformatics, computational biology and health informatics}, pp.\
  429--436, 2019.

\bibitem[Wawer et~al.(2014)Wawer, Li, Gustafsdottir, Ljosa, Bodycombe, Marton,
  Sokolnicki, Bray, Kemp, Winchester, et~al.]{wawer2014toward}
Mathias~J Wawer, Kejie Li, Sigrun~M Gustafsdottir, Vebjorn Ljosa, Nicole~E
  Bodycombe, Melissa~A Marton, Katherine~L Sokolnicki, Mark-Anthony Bray,
  Melissa~M Kemp, Ellen Winchester, et~al.
\newblock Toward performance-diverse small-molecule libraries for cell-based
  phenotypic screening using multiplexed high-dimensional profiling.
\newblock \emph{Proceedings of the National Academy of Sciences}, 111\penalty0
  (30):\penalty0 10911--10916, 2014.

\bibitem[Wildman \& Crippen(1999)Wildman and Crippen]{wildman1999prediction}
Scott~A Wildman and Gordon~M Crippen.
\newblock Prediction of physicochemical parameters by atomic contributions.
\newblock \emph{Journal of chemical information and computer sciences},
  39\penalty0 (5):\penalty0 868--873, 1999.

\bibitem[Xie et~al.(2021)Xie, Shi, Zhou, Yang, Zhang, Yu, and Li]{xie2021mars}
Yutong Xie, Chence Shi, Hao Zhou, Yuwei Yang, Weinan Zhang, Yong Yu, and Lei
  Li.
\newblock Mars: Markov molecular sampling for multi-objective drug discovery.
\newblock In \emph{International Conference on Learning Representations}, 2021.

\bibitem[Yan(2021)]{yan2021axiomatic}
Shiyan Yan.
\newblock \emph{Axiomatic Analysis of Unsupervised Diversity on Large-Scale
  High-dimensional Data}.
\newblock PhD thesis, 2021.

\bibitem[You et~al.(2018)You, Liu, Ying, Pande, and Leskovec]{you2018graph}
Jiaxuan You, Bowen Liu, Zhitao Ying, Vijay~S. Pande, and Jure Leskovec.
\newblock Graph convolutional policy network for goal-directed molecular graph
  generation.
\newblock In \emph{Advances in Neural Information Processing Systems 31: Annual
  Conference on Neural Information Processing Systems}, 2018.

\bibitem[Zhang et~al.(2021)Zhang, Mercado, Engkvist, and
  Chen]{zhang2021comparative}
Jie Zhang, Roc{\'\i}o Mercado, Ola Engkvist, and Hongming Chen.
\newblock Comparative study of deep generative models on chemical space
  coverage.
\newblock \emph{Journal of Chemical Information and Modeling}, 61\penalty0
  (6):\penalty0 2572--2581, 2021.

\end{thebibliography}
\bibliographystyle{iclr2023_conference}

\newpage

\appendix
\section{Evaluation Metrics for Molecular Databases and Molecular Generation Methods}
\label{app:related-work}

\Paste{related-work}

The first category describes the biological activities of the included or generated molecules toward certain protein targets. 
Using docking simulators, researchers can calculate docking scores to determine the binding affinity between the given compound structure and the target \citep{trott2010autodock}. 
However, the simulation usually takes a long time to run. To obtain fast feedback and a sufficient amount of data to train molecular generation models, machine learning (ML) methods are also proposed to predict the bioactivity of molecules, including linear regression, supported vector machines (SVMs), random forests, and neural networks (NNs) 
\citep{shoichet1992molecular,olivecrona2017molecular,li2018multi,ahmadi2021smiles,jiang2022predicting}. 
The second category of metrics evaluates the molecular properties of compounds. Some of them are calculated based on heuristic rules, such as the validity of compounds, molecular weight, octanol-water partition coefficient (logP) \citep{wildman1999prediction},
drug-likeness score (QED) \citep{bickerton2012quantifying}, and synthetic accessibility (SA) \citep{ertl2009estimation}. 
With the advent of ML and deep learning, 
ML (especially deep learning)-based prediction models
like SVMs, random forests, recurrent neural networks (RNNs), and graph neural networks (GNNs) 
are also employed to predict the properties of compounds \citep{shen2010estimation,wang2019smiles,gilmer2017neural}. 
Another branch of metrics uses data likelihood to evaluate the molecules output by generative models \citep{gomez-bombarelli2018automatic, liu2018constrained,jin2018junction,de2018molgan}. 
A few other metrics are particularly related to the coverage of the chemical space or the extent of exploration in the chemical space. 
These coverage-and-exploration-related metrics are introduced in Section \ref{sec:related-work}. 



\section{Covering Number and Packing Number}
\label{app:packing}

In mathematics, given a metric space $(T, d)$, where $T$ is the universal set and $d$ is the distance metric, the coverage of a subset $K\subset T$ can be described in terms of \emph{$\varepsilon$-nets}~\citep{vershynin2018high} defined below.

\begin{definition}[$\varepsilon$-net]
    Given a metric space $(T, d)$, for a subset $K\subseteq T$ and a positive number $\varepsilon>0$,
    an \emph{$\varepsilon$-net} of $K$ is a subset $\mathcal{N} \subseteq K$, where every point in $K$ is within distance $\varepsilon$ of $\mathcal{N}$, \ie,
    \begin{equation*}
        \forall x\in K, \exists x_0\in\mathcal{N}:\ d(x,x_0)\le \varepsilon. 
    \end{equation*}
\end{definition}

Two related concepts, \emph{covering number} and the \emph{packing number}, are then defined as follows. 

\begin{definition}[Covering number]
    For a subset $K$ of the metric space $(T,d)$, its \emph{covering number}, denoted $\mathcal{N}(K, d, \varepsilon)$, is the smallest possible cardinality of an $\varepsilon$-net of $K$. 
\end{definition}

\begin{definition}[Packing number]
    For a subset $\mathcal{N}$ in the metric space $(T, d)$, it is called \emph{$\varepsilon$-separated} if $d(x, y) > \varepsilon$ for all points $x\ne y \in N$. 
    The \emph{packing number} of a subset $K\subseteq T$, denoted $\mathcal{P}(K, d, \varepsilon)$, is the largest possible cardinality of an $\varepsilon$-separated subset of $K$.
\end{definition}



Our definition of \textrm{\#Circles} is equivalent to the packing number when setting the distance threshold $t$ in \textrm{\#Circles} equal to the diameter of the balls $\varepsilon$ in the packing number. 
To our best knowledge, however, the use of this notion as a chemical space coverage measure is novel. In light of this connection, further leveraging insights from topology to investigate the chemical space would be an interesting future direction.

\section{Axiom Corollaries and Discussions}
\label{app:corollary}

Some direct corollaries of monotonicity and subadditivity are as follows. 

\begin{corollary}[Subtraction]
    If a chemical space measure $\mu$ is subadditive, then for any two molecular sets $\forall \mathcal{S}_1, \mathcal{S}_2\subseteq\mathcal{U}$, we have 
    \begin{equation*}
        \mu(\mathcal{S}_1)\ge \mu(\mathcal{S}_1\setminus\mathcal{S}_2)\ge \mu(\mathcal{S}_1)-\mu(\mathcal{S}_2). 
    \end{equation*}
\end{corollary}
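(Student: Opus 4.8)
The plan is to derive everything from the two-sided inequality in the subadditivity axiom (Axiom~\ref{def:subadd}), so no new machinery is needed. The first thing I would extract is a \emph{monotonicity} property that is already hidden in the axiom: the left-hand bound $\max(\mu(\mathcal{A}),\mu(\mathcal{B}))\le\mu(\mathcal{A}\cup\mathcal{B})$ implies that whenever $\mathcal{A}\subseteq\mathcal{C}$ we may write $\mathcal{C}=\mathcal{A}\cup\mathcal{C}$ and conclude $\mu(\mathcal{A})\le\mu(\mathcal{C})$. Thus $\mu$ is nondecreasing under set inclusion, and this single consequence does most of the work.

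For the left inequality $\mu(\mathcal{S}_1)\ge\mu(\mathcal{S}_1\setminus\mathcal{S}_2)$, I would simply observe that $\mathcal{S}_1\setminus\mathcal{S}_2\subseteq\mathcal{S}_1$ and invoke monotonicity directly. For the right inequality $\mu(\mathcal{S}_1\setminus\mathcal{S}_2)\ge\mu(\mathcal{S}_1)-\mu(\mathcal{S}_2)$, the key observation is the elementary set identity $(\mathcal{S}_1\setminus\mathcal{S}_2)\cup\mathcal{S}_2=\mathcal{S}_1\cup\mathcal{S}_2\supseteq\mathcal{S}_1$. Monotonicity then gives $\mu(\mathcal{S}_1)\le\mu(\mathcal{S}_1\cup\mathcal{S}_2)$, while the upper (subadditive) half of the axiom applied to the pair $\mathcal{S}_1\setminus\mathcal{S}_2$ and $\mathcal{S}_2$ gives $\mu(\mathcal{S}_1\cup\mathcal{S}_2)\le\mu(\mathcal{S}_1\setminus\mathcal{S}_2)+\mu(\mathcal{S}_2)$. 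Chaining these two bounds and rearranging yields the claim.

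As an alternative to the identity above, one may decompose $\mathcal{S}_1$ into the disjoint union $(\mathcal{S}_1\setminus\mathcal{S}_2)\cup(\mathcal{S}_1\cap\mathcal{S}_2)$, apply the subadditive upper bound, and then bound $\mu(\mathcal{S}_1\cap\mathcal{S}_2)\le\mu(\mathcal{S}_2)$ again by monotonicity; both routes produce the same inequality. I expect no genuine obstacle in this argument: it is a direct unfolding of the axiom, and the only mild subtlety worth flagging is that the lower bound of subadditivity already encodes monotonicity, so monotonicity need not be assumed separately. The whole verification reduces to the set identity plus one application of each half of the axiom.
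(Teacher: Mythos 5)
Your proof is correct: the monotonicity you extract from the left half of the axiom gives $\mu(\mathcal{S}_1\setminus\mathcal{S}_2)\le\mu(\mathcal{S}_1)$ immediately, and the identity $(\mathcal{S}_1\setminus\mathcal{S}_2)\cup\mathcal{S}_2=\mathcal{S}_1\cup\mathcal{S}_2\supseteq\mathcal{S}_1$ combined with the upper half of the axiom yields the other bound. The paper states this as a ``direct corollary'' without proof, and your argument is exactly the direct unfolding of Axiom~\ref{def:subadd} that the authors evidently have in mind.
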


\begin{corollary}[Monotonicity [Single Molecule]]
    \label{coro:mono}
    If a chemical space measure $\mu$ is subadditive, then for any molecular set $\forall\mathcal{S}\subseteq\mathcal{U}$ and a single molecule $x\in\mathcal{U}$, we have $\mu(\mathcal{S}\cup\{x\})\ge \mu(\mathcal{S})$.
\end{corollary}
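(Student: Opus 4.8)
The plan is to derive both inequalities directly from the left half of the subadditivity axiom, namely $\max(\mu(\mathcal{S}_1),\mu(\mathcal{S}_2))\le\mu(\mathcal{S}_1\cup\mathcal{S}_2)$; the right half (the genuine ``subadditive'' upper bound) plays no role here. The only real work is to choose the correct way to write the relevant set as a union of two pieces, one of which is the singleton $\{x\}$.

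For the first inequality, I would instantiate the axiom with $\mathcal{S}_1=\mathcal{S}$ and $\mathcal{S}_2=\{x\}$. The left half then gives
\[
\mu(\mathcal{S})\le\max(\mu(\mathcal{S}),\mu(\{x\}))\le\mu(\mathcal{S}\cup\{x\}),
\]
where the first step bounds a single term by the maximum and the second is the axiom. This settles $\mu(\mathcal{S}\cup\{x\})\ge\mu(\mathcal{S})$.

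For the second inequality I would split on whether $x\in\mathcal{S}$. If $x\notin\mathcal{S}$ then $\mathcal{S}\setminus\{x\}=\mathcal{S}$ and the claim is an equality. If $x\in\mathcal{S}$, I write $\mathcal{S}=(\mathcal{S}\setminus\{x\})\cup\{x\}$ and apply the first inequality with $\mathcal{S}\setminus\{x\}$ in place of $\mathcal{S}$, obtaining $\mu(\mathcal{S})=\mu((\mathcal{S}\setminus\{x\})\cup\{x\})\ge\mu(\mathcal{S}\setminus\{x\})$. Alternatively, the same conclusion reads off immediately from the left inequality of the preceding Subtraction corollary by taking $\mathcal{S}_1=\mathcal{S}$ and $\mathcal{S}_2=\{x\}$.

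There is essentially no obstacle: the statement is an immediate specialization of subadditivity to a singleton second set. The only point that warrants a line of care is the edge case $x\notin\mathcal{S}$ in the removal direction, where one must observe that the set difference leaves $\mathcal{S}$ unchanged, so the inequality degenerates to an equality rather than invoking the axiom.
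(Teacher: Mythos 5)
Your proof is correct and is exactly the argument the paper intends: the corollary is stated without proof as a ``direct corollary'' of the subadditivity axiom, and your derivation --- specializing the left-hand inequality $\max(\mu(\mathcal{S}_1),\mu(\mathcal{S}_2))\le\mu(\mathcal{S}_1\cup\mathcal{S}_2)$ to $\mathcal{S}_2=\{x\}$, then reusing that for the removal direction via $\mathcal{S}=(\mathcal{S}\setminus\{x\})\cup\{x\}$ --- is the standard immediate justification. Your attention to the edge case $x\notin\mathcal{S}$ is a small but welcome bit of extra care.
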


\begin{corollary}[Dominance]
    If a chemical space measure $\mu$ is subadditive, then for any two molecular sets $\forall \mathcal{S}_1, \mathcal{S}_2\subseteq\mathcal{U}$, if $\mathcal{S}_1\subseteq\mathcal{S}_2$, then $\mu(\mathcal{S}_1)\le \mu(\mathcal{S}_2)$. 
\end{corollary}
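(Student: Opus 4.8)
The plan is to derive dominance directly from the lower bound of the subadditivity axiom, without appealing to the additive upper bound at all. The crucial observation is purely set-theoretic: the hypothesis $\mathcal{S}_1 \subseteq \mathcal{S}_2$ forces the union to collapse, $\mathcal{S}_1 \cup \mathcal{S}_2 = \mathcal{S}_2$. Once this identity is in hand, the measure-theoretic content of the claim reduces to a one-line substitution.

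First I would invoke the left inequality of subadditivity, namely $\max(\mu(\mathcal{S}_1), \mu(\mathcal{S}_2)) \le \mu(\mathcal{S}_1 \cup \mathcal{S}_2)$, and substitute the union identity to obtain $\max(\mu(\mathcal{S}_1), \mu(\mathcal{S}_2)) \le \mu(\mathcal{S}_2)$. Since $\mu(\mathcal{S}_1)$ is one of the two arguments of the maximum, it is dominated by that maximum, yielding the chain $\mu(\mathcal{S}_1) \le \max(\mu(\mathcal{S}_1), \mu(\mathcal{S}_2)) \le \mu(\mathcal{S}_2)$, which is exactly the assertion.

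An alternative route, should one prefer to build on the already-established Corollary~\ref{coro:mono} (Monotonicity), would be to add the elements of $\mathcal{S}_2 \setminus \mathcal{S}_1$ to $\mathcal{S}_1$ one at a time, applying $\mu(\mathcal{T} \cup \{x\}) \ge \mu(\mathcal{T})$ at each step. This works but requires $\mathcal{S}_2 \setminus \mathcal{S}_1$ to be finite and is strictly more cumbersome than the single union argument, so I would not take it.

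There is essentially no substantive obstacle to overcome here; the statement is an immediate consequence of the ``$\max$'' side of the subadditivity inequality. The only point worth flagging in the write-up is that the additive upper bound $\mu(\mathcal{S}_1 \cup \mathcal{S}_2) \le \mu(\mathcal{S}_1) + \mu(\mathcal{S}_2)$ is never used, making clear that dominance follows from the lower half of the axiom alone.
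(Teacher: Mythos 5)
Your argument is correct and is exactly the intended one: the paper lists Dominance as a ``direct corollary'' without writing out a proof, and the one-line substitution $\mathcal{S}_1\cup\mathcal{S}_2=\mathcal{S}_2$ into the left-hand (max) inequality of the subadditivity axiom is the immediate argument it has in mind. Your observation that only the lower half of the axiom is needed is accurate and worth keeping.
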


A subadditive chemical space measure is a special case of \emph{outer measures} in the context of mathematical measure theory. \emph{Outer measures} are relaxations of \emph{measures}, where the latter requires a stricter additivity property, \ie, $\mu(\mathcal{S}_1\cup\mathcal{S}_2)=\mu(\mathcal{S}_1)+\mu(\mathcal{S}_2)$ for any two disjoint sets $\mathcal{S}_1,\mathcal{S}_2$~\citep{halmos2013measure}.
We consider additivity to be too strong and can conflict with intuitions in drug discovery: an additive measure defined on a discrete space must take the form of $\mu(\mathcal{S})=\sum_{x\in\mathcal{S}} w(x)$, in which $w(\cdot)$ is a weight function that independently assigns a score to each element in the space, meaning additive measures defined on the discrete chemical space cannot capture the interrelationship between molecules in a candidate set, which counters the reality, thus can hardly tell the variety of compounds and how much of the chemical space is been covered. 

\section{Proofs for Monotonicity, Subadditivity and Dissimilarity}
\label{app:proof}

In Figure~\ref{fig:measures-quad} we show the chemical space measures according to whether they will satisfy monotonicity, subadditivity, and/or dissimilarity. In this section, we provide proofs to verify each measure's monotonicity, subadditivity, and dissimilarity. 

Note that for aggregation-based measures, when $\vert\mathcal{S}\vert=1$, $\mu(\mathcal{S})$ is not defined.
So without loss of generality, we assume $\mu(\{x\})=w(x),\ x\in\mathcal{U}$ in such cases, where $w:\mathcal{U}\to\mathbb{R}$ is an importance function. 

\begin{proposition}[Monotonicity of chemical space measures.] 
Reference-based chemical space measures,
\textrm{SumDiversity}, \textrm{Diameter}, \textrm{SumDiameter}, and \textrm{\#Circles} are monotonic. \textrm{Diversity}, \textrm{Bottleneck}, \textrm{SumBottleneck}, and \textrm{DPP} are not monotonic. 
\label{prop:mono}
\end{proposition}

\begin{proof}

In the proof, for two arbitrary molecular sets $\mathcal{S}_1$ and $\mathcal{S}_2$ as stated in the monotonicity axiom, we consider combining them into $\mathcal{S}:=\mathcal{S}_1\cup\mathcal{S}_2$. 

For the \textbf{reference-based} chemical space measures, 
We prove the monotonicity and  subadditivity simultaneously for reference-based measures by first proving the monotonicity and subadditivity of the maximum of $\text{cover}(\cdot,\cdot)$. 

To prove the monotonicity and subadditivity of the maximum of $\text{cover}(\cdot,\cdot)$, we consider combining any two molecular sets $\mathcal{S}_1,\mathcal{S}_2\subseteq\mathcal{U}$. For any reference $y\in\mathcal{R}$, we have

\begin{align*}
    &\max\left(
    \max_{x\in\mathcal{S}_1}\ \text{cover}(x,y),
    \max_{x\in\mathcal{S}_2}\ \text{cover}(x,y)
    \right)\\
    \le\ &\max_{x\in\mathcal{S}_1\cup\mathcal{S}_2}\ \text{cover}(x,y)\\
    \le\ &\max_{x\in\mathcal{S}_1}\ \text{cover}(x,y)
    +\max_{x\in\mathcal{S}_2}\ \text{cover}(x,y).
\end{align*}

Therefore,

\begin{align*}
    &\max\left(
    \text{Coverage}(\mathcal{S}_1),
    \text{Coverage}(\mathcal{S}_2)
    \right)\\
    \le\ &\text{Coverage}(\mathcal{S}_1\cup\mathcal{S}_2)\\
    \le\ &\text{Coverage}(\mathcal{S}_1)
    +\text{Coverage}(\mathcal{S}_1),
\end{align*}

thus proving the \textrm{Coverage} measure is monotonic and subadditive.

For the \textbf{\textrm{SumDiversity}} chemical space measure, 
note that the monotonicity corollary (Corollary \ref{coro:mono}) can also entail the monotonicity property defined in Axiom \ref{def:mono} (by sequentially adding molecules from $\mathcal{S}_2$ to $\mathcal{S}_1$ or from $\mathcal{S}_1$ to $\mathcal{S}_2$). So we prove the monotonicity (defined in Axiom \ref{def:mono}) by proving the monotonicity [single molecule] (defined in \ref{coro:mono}). 
For a molecule $x\in\mathcal{U}$ and a molecular set $\mathcal{S}\subseteq\mathcal{U}$ with size $n$, using the fact that $\textrm{SumDiversity}(\mathcal{S})=n\cdot\textrm{Diversity}(\mathcal{S})$, we have

\begin{align*}
    &\textrm{SumDiversity}(\mathcal{S}\cup\{x\})-\textrm{SumDiversity}(\mathcal{S})\\
    =&(n-1)\cdot\textrm{Diversity}(\mathcal{S})+\frac{2}{n}\sum_{y\in\mathcal{S}}d(x,y)-n\cdot \textrm{Diversity}(\mathcal{S}) \\
    =&-\textrm{Diversity}(\mathcal{S})+\frac{2}{n}\sum_{y\in\mathcal{S}}d(x,y). 
\end{align*}

Comparing $\textrm{Diversity}(\mathcal{S})$ and $\frac{2}{n}\sum_{y\in\mathcal{S}}d(x,y)$ is equivalent to comparing $\sum_{y_1\ne y_2\in\mathcal{S}}d(y_1,y_2)$ and $(n-1)\sum_{y\in\mathcal{S}} d(x,y)$. Consider each ``triangle'' tuple $(y_1,y_2,x)$, due to the metric characteristics of $d$, we have $d(y_1,y_2)\le d(x,y_1)+d(x,y_2)$, meaning that $\sum_{y_1\ne y_2\in\mathcal{S}}d(y_1,y_2)$ is less or equal to $(n-1)\sum_{y\in\mathcal{S}} d(x,y)$. When $w(x)=0$, $\textrm{SumDiversity}(\mathcal{S}\cup\{x\})-\textrm{SumDiversity}(\mathcal{S})\ge0$, proving the monotonicity of \textrm{SumDiversity}. 

For the \textbf{\textrm{Diameter}} chemical space measure, when the importance function satisfies $w(x)=0,\ \forall x\in\mathcal{U}$, we have 
\begin{align*}
    \textrm{Diameter}(\mathcal{S})
    =&\max_{\substack{x,y\in\mathcal{S}\\x\ne y}}\ d(x,y)\\
    \ge&\max\left(
    \max_{\substack{x,y\in\mathcal{S}_1\\x\ne y}}\ d(x,y),\ 
    \max_{\substack{x,y\in\mathcal{S}_2\\x\ne y}}\ d(x,y)
    \right) \\
    =&\max(\textrm{Diameter}(\mathcal{S}_1),\ \textrm{Diameter}(\mathcal{S}_2)),
\end{align*}

proving the monotonicity of Diameter. 

For the \textbf{\textrm{SumDiameter}} chemical space measure, when the importance function satisfies $w(x)=0,\ \forall x\in\mathcal{U}$, we have

\begin{align*}
    \textrm{SumDiameter}(\mathcal{S})
    =&\sum_{x\in\mathcal{S}}\max_{\substack{y\in\mathcal{S}\\y\ne x}}\ d(x,y)\\
    \ge&\max\left(\sum_{x\in\mathcal{S}_1}\max_{\substack{y\in\mathcal{S}_1\\y\ne x}}\ d(x,y),\ 
    \sum_{x\in\mathcal{S}_2}\max_{\substack{y\in\mathcal{S}_2\\y\ne x}}\ d(x,y)
    \right)\\
    =&\max(\textrm{SumDiameter}(\mathcal{S}_1),\ \textrm{SumDiameter}(\mathcal{S}_2)),
\end{align*}

proving the monotonicity of SumDiameter.

For the \textbf{\textrm{\#Circles}} chemical space measure, we 
define $\mathcal{C}^*(\mathcal{S})\subseteq\mathcal{S}$ as an arbitrary set that satisfies $\vert\mathcal{C}^*(\mathcal{S})\vert=\textrm{\#Circles}(\mathcal{S})$. 
For any two molecular sets $\mathcal{S}_1,\mathcal{S}_2\subseteq\mathcal{U}$,
since $\mathcal{C}^*(\mathcal{S}_1)\subseteq\mathcal{S}_1\cup\mathcal{S}_2$, according to the definition of \textrm{\#Circles}, we have

$$
\textrm{\#Circles}(\mathcal{S}_1)\le\textrm{\#Circles}(\mathcal{S}_1\cup\mathcal{S}_2).
$$ 

Similarly, we also have 

$$
\textrm{\#Circles}(\mathcal{S}_2)\le\textrm{\#Circles}(\mathcal{S}_1\cup\mathcal{S}_2),
$$

proving the monotonicity of \textrm{\#Circles}. 

For the \textbf{\textrm{Diversity}} chemical space measure, 
we disprove its monotonicity by proving it violates the monotonicity [single molecule] corollary (Corollary \ref{coro:mono}). 
For a molecule $x\in\mathcal{U}$ and a molecular set $\mathcal{S}\subseteq\mathcal{U}$ with size $n>1$, if $x\not\in\mathcal{S}$, we have

\begin{align*}
    &\textrm{Diversity}(\mathcal{S}\cup\{x\}) \\
    =\ &\frac{2}{(n+1)n}\left[
    \sum_{\substack{y,y'\in\mathcal{S}\\y\ne y'}}d(y,y')
    +\sum_{y\in\mathcal{S}}d(x,y)
    \right] \\
    =\ &\frac{n-1}{n+1}\cdot\textrm{Diversity}(\mathcal{S})
    +\frac{2}{(n+1)n}\sum_{y\in\mathcal{S}}d(x,y).
\end{align*}

And the change in \textrm{Diversity} is

\begin{align*}
    &\textrm{Diversity}(\mathcal{S}\cup\{x\}) - \textrm{Diversity}(\mathcal{S}) \\
    =\ &\left(
    \frac{n-1}{n+1}-\frac{n+1}{n+1}
    \right)\textrm{Diversity}(\mathcal{S})
    +\frac{2}{(n+1)n}\sum_{y\in\mathcal{S}} d(x,y) \\
    =\ &\frac{2}{n+1}\left[
    -\textrm{Diversity}(\mathcal{S})+\frac{1}{n}\sum_{y\in\mathcal{S}} d(x,y)
    \right].
\end{align*}

When the average distance of $x$ and $\mathcal{S}$, \ie, $\frac{1}{n}\sum_{y}d(x,y)$, is less than $\textrm{Diversity}(\mathcal{S})$ (\eg, adding a molecule on the ``segment'' between two existing molecules), \textrm{Diversity} would decrease, thus violating the monotonicity corollary and proving \textrm{Diversity} is not monotonic. 

For the \textbf{\textrm{Bottleneck}} chemical space measure, 
we disprove its subadditivity by proving it violates the monotonicity [single molecule] corollary (Corrolary \ref{coro:mono}). 
Consider adding a molecule $x$ into a molecular set $\mathcal{S}$ with size $n>1$. If $x\not\in\mathcal{S}$, we have

\begin{align*}
    \textrm{Bottleneck}(\mathcal{S}\cup\{x\})
    =\min\left(
    \textrm{Bottleneck}(\mathcal{S}),\ \min_{y\in\mathcal{S}}\ d(x,y)
    \right).
\end{align*}

When $x$ introduces a more restricting bottleneck, \ie, $\min_{y}\ d(x,y)<\textrm{Bottleneck}(\mathcal{S})$, we will have $\textrm{Bottleneck}(\mathcal{S}\cup\{x\})<\textrm{Bottleneck}(\mathcal{S})$, violating the monotonicity corollary, thus proving \textrm{Bottleneck} is not monotonic. 

For the \textbf{\textrm{SumBottleneck}} chemical space measure, 
we disprove its monotonicity by proving it violates the monotonicity [single molecule] corollary (Corrolory \ref{coro:mono}). 
Consider adding a molecule $x$ into a molecular set $\mathcal{S}$ with size $n>1$. If $x\not\in\mathcal{S}$, we have

\begin{align*}
    &\textrm{SumBottleneck}(\mathcal{S}\cup\{x\})\\
    =\ &\sum_{y\in\mathcal{S}}\min\left(
    \min_{\substack{y'\in\mathcal{S}\\y'\ne y}}d(y,y'),\ d(x,y)
    \right)
    +\min_{y\in\mathcal{S}}d(x,y),
\end{align*}

and 

\begin{align*}
    \textrm{SumBottleneck}(\mathcal{S})
    =\sum_{y\in\mathcal{S}}
    \min_{\substack{y'\in\mathcal{S}\\y'\ne y}}d(y,y').
\end{align*}

When $x$ introduces some more restricting bottlenecks, \ie, for many $y\in\mathcal{S}$, $d(x,y)$ is small (\eg, adding a molecule into a set whose size is two, and the new molecule is added near one of the two molecules), we will have $\textrm{SumBottleneck}(\mathcal{S}\cup\{x\})<\textrm{SumBottleneck}(\mathcal{S})$, violating the monotonicity corollary, thus proving \textrm{SumBottleneck} is not subadditive. 

For the \textbf{\textrm{DPP}} chemical space measure, we disprove its monotonicity by proving it violates the monotonicity [single molecule] corollary (Corrolory \ref{coro:mono}). Consider adding $x$ into $\{x_0\}$ where $x\ne x_0$ $1-d(x,x_0)$ is denoted as $b$. We have

\begin{align*}
    \textrm{DPP}(\{x_0,x\})=
    \begin{vmatrix}
        1 & b \\
        b & 1
    \end{vmatrix}=1-b^2.
\end{align*}

When $b>0$ we will have $\textrm{DPP}(\{x_0,x\})<\textrm{DPP}(\{x_0\})=1$, violating the monotonicity corollary, thus proving \textrm{DPP} is not monotonic.

\end{proof}

\begin{proposition}[Subadditivity of chemical space measures.]
Reference-based chemical space measures, \textrm{Bottleneck}, \textrm{SumBottleneck}, and \textrm{\#Circles} are subadditive. \textrm{Diversity}, \textrm{SumDiversity}, \textrm{Diameter}, and \textrm{SumDiameter} are not subadditive. 
\end{proposition}

\begin{proof}

For the \textbf{reference-based} chemical space measures, the subadditivity is provided under Proposition \ref{prop:mono}. 

For the \textbf{\textrm{Bottleneck}} chemical space measure, when the importance function satisfies $w(x)\ge M,\ \forall x\in\mathcal{U}$, where $M$ is the upper bound of \textrm{Bottlenkeck} (\eg, $w(x)=\infty$), we have 
\begin{align*}
    \textrm{Bottleneck}(\mathcal{S})
    =&\min_{\substack{x,y\in\mathcal{S}\\x\ne y}}\ d(x,y)\\
    \le&\min\left(
    \min_{\substack{x,y\in\mathcal{S}_1\\x\ne y}}\ d(x,y),\ 
    \min_{\substack{x,y\in\mathcal{S}_2\\x\ne y}}\ d(x,y)
    \right) \\
    \le&\textrm{Bottleneck}(\mathcal{S}_1)
    +\textrm{Bottleneck}(\mathcal{S}_2)),
\end{align*}

proving the subadditivity of \textrm{Bottleneck}. 

For the \textbf{\textrm{SumBottleneck}} chemical space measure, when the importance function satisfies $w(x)\ge M,\ \forall x\in\mathcal{U}$, where $M$ is the upper bound of \textrm{SumBottleneck} (\eg, $w(x)=\infty$), we have 
\begin{align*}
    \textrm{SumBottleneck}(\mathcal{S})
    =&\sum_{x\in\mathcal{S}_1\cup\mathcal{S}_2}\min_{\substack{y\in\mathcal{S}_1\cup\mathcal{S}_2\\y\ne x}}\ d(x,y)\\
    =&\sum_{x\in\mathcal{S}_1}\min_{\substack{y\in\mathcal{S}_1\cup\mathcal{S}_2\\y\ne x}}\ d(x,y)
    +\sum_{x\in\mathcal{S}_2}\min_{\substack{y\in\mathcal{S}_1\cup\mathcal{S}_2\\y\ne x}}\ d(x,y)\\
    \le&=\sum_{x\in\mathcal{S}_1}\min_{\substack{y\in\mathcal{S}_1\\y\ne x}}\ d(x,y)
    +\sum_{x\in\mathcal{S}_2}\min_{\substack{y\in\mathcal{S}_2\\y\ne x}}\ d(x,y)\\
    =&\textrm{SumBottleneck}(\mathcal{S}_1)
    +\textrm{SumBottleneck}(\mathcal{S}_2)),
\end{align*}

proving the subadditivity of \textrm{SumBottleneck}. 

For the \textbf{\textrm{\#Circles}} chemical space measure, we prove its subadditivity by contradiction. 
For any two molecular sets $\mathcal{S}_1,\mathcal{S}_2\subseteq\mathcal{U}$,
we assume $\textrm{\#Circles}(\mathcal{S}_1\cup\mathcal{S}_2)>\textrm{\#Circles}(\mathcal{S}_1)+\textrm{\#Circles}(\mathcal{S}_1)$. Use the notations $\mathcal{C}_1:=\mathcal{C}^*(\mathcal{S}_1\cup\mathcal{S}_2)\cap\mathcal{S}_1$ and $\mathcal{C}_2:=\mathcal{C}^*(\mathcal{S}_1\cup\mathcal{S}_2)\cap\mathcal{S}_2$. We have

$$
\vert\mathcal{C}_1\vert+\vert\mathcal{C}_2\vert\ge\textrm{\#Circles}(\mathcal{S}_1\cup\mathcal{S}_2)>\textrm{\#Circles}(\mathcal{S}_1)+\textrm{\#Circles}(\mathcal{S}_1).
$$

Since all values are non-negative, we must have $\vert\mathcal{C}_1\vert>\textrm{\#Circles}(\mathcal{S}_1)$ or $\vert\mathcal{C}_2\vert>\textrm{\#Circles}(\mathcal{S}_2)$, contradicting with the definition of $\textrm{\#Circles}(\mathcal{S}_1)$ or $\textrm{\#Circles}(\mathcal{S}_1)$, thus proving the subadditivity of \textrm{\#Circles}.

For the \textbf{\textrm{Diversity}} and \textbf{\textrm{SumDiversity}} chemical space measures, we disprove their subadditivity by providing a counter-example. 
For two disjoint molecular sets with two molecules in each, \ie, $\{x_1,x_2\}$ and $\{x_3,x_4\}$, we denote $d_{ij}=d(x_i,x_j)$. Then we have

\begin{align*}
    &\textrm{Diversity}(\{x_1,x_2,x_3,x_4\})\\
    =\ &\textrm{Diversity}(\{x_1,x_2,x_3,x_4\})\\
    =\ &\frac{2}{4\cdot 3}(d_{12}+d_{13}+d_{14}+d_{23}+d_{24}+d_{34}),\\
\end{align*}

and

\begin{align*}
    \textrm{Diversity}(\{x_1,x_2\})+\textrm{Diversity}(\{x_3,x_4\})
    =d_{12}+d_{34}.
\end{align*}

Similarly, 

\begin{align*}
    &\textrm{SumDiversity}(\{x_1,x_2,x_3,x_4\})\\
    =\ &4\cdot\textrm{Diversity}(\{x_1,x_2,x_3,x_4\})\\
    =\ &4\cdot\frac{2}{4\cdot 3}(d_{12}+d_{13}+d_{14}+d_{23}+d_{24}+d_{34}),\\
\end{align*}

and

\begin{align*}
    \textrm{SumDiversity}(\{x_1,x_2\})+\textrm{SumDiversity}(\{x_3,x_4\})
    =2\cdot d_{12}+2\cdot d_{34}.
\end{align*}

When the inter-set distances are larger than the inner-set distances, 
\ie, $d_{13}+d_{14}+d_{23}+d_{24}>5\cdot(d_{12}+d_{34})$, we will have $\textrm{Diversity}(\{x_1,x_2,x_3,x_4\})>\textrm{SumDiversity}\{x_1,x_2\}+\textrm{Diversity}\{x_3,x_4\}$. 
Similarly, when $d_{13}+d_{14}+d_{23}+d_{24}>2\cdot(d_{12}+d_{34})$, we will have $\textrm{SumDiversity}(\{x_1,x_2,x_3,x_4\})>\textrm{SumDiversity}\{x_1,x_2\}+\textrm{SumDiversity}\{x_3,x_4\}$, 
thus proving both \textrm{Diversity} and \textrm{SumDiversity} measures are not subadditive. 

For the \textbf{\textrm{Diameter}} chemical space measure,
For two disjoint molecular sets $\mathcal{S}_1,\mathcal{S}_2\subseteq\mathcal{U}$ whose sizes are larger than one, we have

\begin{align*}
    \textrm{Diameter}(\mathcal{S}_1\cup\mathcal{S}_2)
    =\max_{\substack{x,y\in\mathcal{S}_1\cup\mathcal{S}_2\\x\ne y}}d(x,y),
\end{align*}

and

\begin{align*}
    \textrm{Diameter}(\mathcal{S}_1)+\textrm{Diameter}(\mathcal{S}_2)
    =\max_{\substack{x,x'\in\mathcal{S}_1\\x\ne x'}}d(x,x')
    +\max_{\substack{y,y'\in\mathcal{S}_2\\y\ne y'}}d(y,y').
\end{align*}

When the maximum inter-set distance is larger than the maximum inner-set distance, \ie, $\max_{x,y,\in\mathcal{S}_1\cup\mathcal{S}_2}d(x,y)>\max_{x,x',\in\mathcal{S}_1}d(x,x')+\max_{y,y',\in\mathcal{S}_2}d(y,y')$, we will have $\textrm{Diameter}(\mathcal{S}_1\cup\mathcal{S}_2)>\textrm{Diameter}(\mathcal{S}_1)+\textrm{Diameter}(\mathcal{S}_2)$, 
thus proving the \textrm{Diameter} measure is not subadditive. 

For the \textbf{\textrm{SumDiameter}} chemical space measure, 
we disprove its subadditivity by providing a counter-example. 
For two disjoint molecular sets with two molecules in each, \ie, $\{x_1,x_2\}$ and $\{x_3,x_4\}$, we denote $d_{ij}=d(x_i,x_j)$. Then we have

\begin{align*}
    \textrm{SumDiversity}(\{x_1,x_2,x_3,x_4\})=\sum_{i\in[4]}\max_{\substack{j\in[4]\\j\ne i}}\ d(x_i,x_j),
\end{align*}

and 

\begin{align*}
    \textrm{SumDiversity}(\{x_1,x_2\})+\textrm{SumDiversity}(\{x_3,x_4\})
    =2\cdot d_{12}+2\cdot d_{34}.
\end{align*}

When the inter-set distances are larger than the inner-set distances, \ie, $d_{13},d_{14},d_{23},d_{24}>d_{12},d_{34}$, we will have $\textrm{SumDiversity}(\{x_1,x_2,x_3,x_4\})>\textrm{SumDiversity}\{x_1,x_2\}+\textrm{SumDiversity}\{x_3,x_4\}$, 
thus proving the \textrm{SumDiameter} measure is not subadditive.

\end{proof}

\begin{proposition}[Dissimilarity property of chemical space measures.]
\textrm{Diversity}, \textrm{SumDiversity}, \textrm{Diameter}, 
\textrm{SumDiameter}, 
\textrm{Bottleneck}, \textrm{SumBottleneck}, \textrm{DPP}, and \textrm{\#Circles} have preferences to dissimilarity. 
\end{proposition}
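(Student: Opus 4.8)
The plan is to parametrize every three-point configuration by the two distances it introduces. For a point $x\in\mathcal{X}(x_1,x_2)$, set $D:=d(x_1,x_2)$, $a:=d(x,x_1)$, and $b:=d(x,x_2)$; the defining constraint of $\mathcal{X}$ is exactly $a+b=D$ with $a,b>0$, and the midpoint $x^*$ is the unique choice $a=b=\tfrac{D}{2}$. The key simplification is that the three pairwise distances of $\{x_1,x_2,x\}$ are always $a$, $b$, and $D$, so each measure of $\{x_1,x_2,x\}$ reduces to an explicit function of $a$ (with $b=D-a$ and $D$ held fixed), and it suffices to show this function is maximized at $a=\tfrac{D}{2}$ (or, for \textrm{SumDiameter}, that it is not). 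First I would dispatch the three measures that come out constant: since $a+b=D$ is fixed, $\textrm{Diversity}=\tfrac13(a+b+D)=\tfrac{2D}{3}$ and $\textrm{SumDiversity}=2D$ are independent of $x$, and because $a,b\le D$ we get $\textrm{Diameter}=\max(a,b,D)=D$ as well. A constant function attains its maximum at $x^*$, so the dissimilarity identity holds trivially for these three.

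Next come the bottleneck-type measures, which reduce to minima. Using $a,b\le D$ one obtains $\textrm{Bottleneck}=\min(a,b)$ and $\textrm{SumBottleneck}=a+b+\min(a,b)=D+\min(a,b)$; both are nondecreasing in $\min(a,b)$, which under the constraint $a+b=D$ is maximized exactly at $a=b=\tfrac{D}{2}$. For \textrm{\#Circles} I would argue through the threshold $t$: the three-point set has value $3$ iff $\min(a,b)>t$ (since then $D=a+b>2t\ge t$ as well), value $2$ iff $\min(a,b)\le t<D$, and value $1$ iff $D\le t$. In every regime the value is nondecreasing in $\min(a,b)$, so once more the maximizer $x^*$ of $\min(a,b)$ attains the largest \textrm{\#Circles}.

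The main obstacle is \textrm{DPP}. Writing similarities $s_{ij}:=1-d$, the $3\times3$ determinant expands to $\det(\bm{S})=1-s_{12}^2-s_{13}^2-s_{23}^2+2s_{12}s_{13}s_{23}$, where $s_{12}=1-D$ is fixed and $s_{13}+s_{23}=2-D$ is also fixed. Substituting $s_{13}^2+s_{23}^2=(2-D)^2-2q$ with $q:=s_{13}s_{23}$, the determinant becomes an affine function $\mathrm{const}+2(1+s_{12})\,q$ of the single scalar $q$. Since Tanimoto distances lie in $[0,1]$, similarities lie in $[0,1]$ and $1+s_{12}>0$, so $\det(\bm{S})$ increases with $q$; under a fixed sum $s_{13}+s_{23}$, the product $q$ is maximized when $s_{13}=s_{23}$, i.e. $a=b=\tfrac{D}{2}$, i.e. $x=x^*$. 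The care needed here is verifying the sign of the coefficient $1+s_{12}$ and reducing the two-variable determinant to monotonicity in $q$.

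Finally, to disprove dissimilarity for \textrm{SumDiameter}, I would compute it directly: the contributions of $x_1$ and $x_2$ are each $\max(D,\cdot)=D$, while that of $x$ is $\max(a,b)$, giving $\textrm{SumDiameter}=2D+\max(a,b)$. This grows as $\max(a,b)$ grows, i.e. as $x$ approaches an endpoint ($a\to D$ or $b\to D$), whereas $x^*$ yields the \emph{smallest} value $2D+\tfrac{D}{2}$. Exhibiting any $x\in\mathcal{X}$ with $\max(a,b)>\tfrac{D}{2}$ (e.g. $a=\tfrac{3D}{4}$, $b=\tfrac{D}{4}$) then gives $\mu(\{x_1,x_2,x^*\})<\max_{x\in\mathcal{X}}\mu(\{x_1,x_2,x\})$, violating the axiom.
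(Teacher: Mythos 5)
Your proof is correct and follows essentially the same route as the paper: parametrize the three-point configuration by the single free distance subject to $a+b=D$, reduce each measure to an explicit function of that parameter, and check whether the midpoint maximizes it (constant for Diversity/SumDiversity/Diameter, monotone in $\min(a,b)$ for the bottleneck-type measures and \textrm{\#Circles}, and maximized at an endpoint for SumDiameter). Your DPP reduction to monotonicity in $q=s_{13}s_{23}$ and your direct case analysis for \textrm{\#Circles} are only cosmetic variants of the paper's explicit determinant expansion and proof by contradiction, and your SumDiameter computation ($2D+\max(a,b)$) is in fact slightly more careful than the paper's, while reaching the same conclusion.
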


\begin{proof}

In the proof, for the three molecules $x_0,x_1,x_2\in\mathcal{U}$ as stated in the dissimilarity axiom in Section \ref{sec:axiom}, we assume $d(x_0,x_1)\ge d(x_0,x_2)$. 


For the \textbf{\textrm{Diversity}} chemical space measure, 
we have

\begin{align*}
    \textrm{Diversity}(\{x_0,x_1\})
    =d(x_0,x_1) \ge 
    d(x_0,x_2)
    =\textrm{Diversity}(\{x_0,x_2\}),
\end{align*}



proving the dissimilarity property of \textrm{Diversity}. 

For the \textbf{\textrm{SumDiversity}} chemical space measure, 
we have

\begin{align*}
    \textrm{SumDiversity}(\{x_0,x_1\})=2\cdot d(x_0,x_1)\ge 2\cdot d(x_0,x_2)=\textrm{SumDiversity}(\{x_0,x_2\}),
\end{align*}



proving the dissimilarity property of \textrm{SumDiversity}. 

For the \textbf{\textrm{Diameter}} chemical space measure, 
we have

\begin{align*}
    \textrm{Diameter}(\{x_0,x_1\})=d(x_0,x_1)\ge d(x_0,x_2)=\textrm{Diameter}(\{x_0,x_2\}),
\end{align*}



proving the dissimilarity property of \textrm{Diameter}. 

For the \textbf{\textrm{SumDiameter}} chemical space measure, we have

\begin{align*}
    \textrm{SumDiameter}(\{x_0,x_1\})=2\cdot d(x_0,x_1)\ge 2\cdot d(x_0,x_2)=\textrm{SumDiameter}(\{x_0,x_2\}),
\end{align*}


proving the dissimilarity property of \textrm{SumDiameter}.

For the \textbf{\textrm{Bottleneck}} chemical space measure, 
we have

\begin{align*}
    \textrm{Bottleneck}(\{x_0,x_1\})=d(x_0,x_1)\ge d(x_0,x_2)=\textrm{Bottleneck}(\{x_0,x_2\}),
\end{align*}



proving the dissimilarity property of \textrm{Bottleneck}. 

For the \textbf{\textrm{SumBottleneck}} chemical space measure, 
we have

\begin{align*}
    \textrm{SumBottleneck}(\{x_0,x_1\})=2\cdot d(x_0,x_1)\ge 2\cdot d(x_0,x_2)=\textrm{SumBottleneck}(\{x_0,x_2\}),
\end{align*}



proving the dissimilarity property of \textrm{SumBottleneck}. 

For the \textbf{\textrm{DPP}} chemical space measure defined with the Tanimoto similarity, 
denoting $d(x_0,x_1)$ and $d(x_0,x_2)$ as $d_1$ and $d_2$ respectively,
when $d_1,d_2\in[0,1]$ (Tanimoto distances),

we have

\begin{align*}
    \textrm{DPP}(\{x_0,x_1\})
    =\begin{vmatrix}
        1 & 1-d_1 \\
        1-d_1 & 1 
    \end{vmatrix}
    =2d_1-d_1^2\ge 2d_2-d_2^2
    =\begin{vmatrix}
        1 & 1-d_2 \\
        1-d_2 & 1 
    \end{vmatrix}
    =\textrm{DPP}(\{x_0,x_2\}),
\end{align*}


proving the dissimilarity property of \textrm{DPP}. 

For the \textbf{\textrm{\#Circles}} chemical space measure, we have

\begin{align*}
    \textrm{\#Circles}(\{x_0,x_1\})
    =1+\mathbb{I}[d(x_0,x_1)>t]
    \ge 1+\mathbb{I}[d(x_0,x_2)>t]
    =\textrm{\#Circles}(\{x_0,x_2\}),
\end{align*}


proving the dissimilarity property of \textrm{\#Circles}.

\end{proof}

\paragraph{Discussion on the dissimilarity property of reference-based measures. }

For a reference-based measure, considering adding a new molecule $x_1$ or $x_2$ to the existing moelculer set $\mathcal{S}=\{x_0\}$, we have

\begin{align*}
    \textrm{Coverage}(\{x_0,x_1\},\mathcal{R})
    =\sum_{y\in\mathcal{R}}\left(
    \max(\text{cover}(x_0,y),\text{cover}(x_1,y))
    \right), \\
    \textrm{Coverage}(\{x_0,x_2\},\mathcal{R})
    =\sum_{y\in\mathcal{R}}\left(
    \max(\text{cover}(x_0,y),\text{cover}(x_2,y))
    \right),
\end{align*}




where the values of 
 $\textrm{Coverage}(\{x_0,x_1\},\mathcal{R})$ and $\textrm{Coverage}(\{x_0,x_2\},\mathcal{R})$
will depend on the particular definition of $\textrm{cover}(\cdot,\cdot)$ and the choice of the reference set $\mathcal{R}$.

Generally, an arbitrary coverage function and an arbitrary reference set do not necessarily meet the dissimilarity requirement. 
In our study, we define the $\text{cover}$ function as $\text{cover}(x,y):=\mathbb{I}[\text{molecule $x$ contains fragment $y$}]$, where $\mathbb{I}[\cdot]$ is the indicator function, 
and $y$ is a fragment contained in the reference set $\mathcal{R}$.
Some counter-examples can be easily constructed, for instance, if $x_1$ is far away from the reference set $\mathcal{R}$ while $x_2$ is very close to $\mathcal{R}$ (even contained in $\mathcal{R}$), then the \textrm{Coverage} measure will prefer the molecule $x_2$ instead of the more dissimilar molecule $x_1$. 

Therefore, the dissimilarity property does not hold for chemical space measures \textrm{\#FG}, \textrm{\#RS}, and \textrm{\#BM}. 

\paragraph{Discussion on a new definition of reference-based measure. }
In addition to counting the number of fragments in the reference set $\mathcal{R}$ contained in the molecular set $\mathcal{S}$, another way to define a reference-based measure is to investigate the distance from $\mathcal{S}$ to $\mathcal{R}$. Particularly, we could define such a reference-based measure as below:
\begin{equation}
    \label{eq:coverage-r2}
    \textrm{Coverage}(\mathcal{S},\mathcal{R};d,t):=
    \sum_{y\in\mathcal{R}}
    \mathbb{I}[\exists x\in\mathcal{S} \text{ such that } d(x,y)<t],
\end{equation}
where $\mathbb{I}[\cdot]$ is the indicator function, $d$ is the distance metric, and $t$ is the distance threshold. 

Similar to the discussion on \textrm{\#FG}, \textrm{\#RS}, and \textrm{\#BM}, the characteristics of the reference-based measure defined by Eq. \ref{eq:coverage-r2} highly relies on the property of $\mathcal{R}$, and an arbitrary reference set $\mathcal{R}$ can not guarantee the dissimilarity property of this measure.

We now prove that the reference-based measure defined by Eq. \ref{eq:coverage-r2} satisfies both monotonicity and subadditivity. In the proof, $\textrm{Coverage}(\cdot)$ refers to the definition in Eq. \ref{eq:coverage-r2}.

\begin{proof}
We first prove the monotonicity and subadditivity of the indicator function $\mathbb{I}[\cdot]$, where we consider combining any two molecular sets $\mathcal{S}_1,\mathcal{S}_2\subseteq\mathcal{U}$. For any reference $y\in\mathcal{R}$ and a distance threshold $t$, we have
\begin{align*}
    &\max\left(
    \mathbb{I}[\exists x\in\mathcal{S}_1\text{ such that }d(x,y)<t],\ 
    \mathbb{I}[\exists x\in\mathcal{S}_2\text{ such that }d(x,y)<t]
    \right)\\
    \le\ &\mathbb{I}[\exists x\in\mathcal{S}_1\cup\mathcal{S}_2\text{ such that }d(x,y)<t]\\
    \le\ &\mathbb{I}[\exists x\in\mathcal{S}_1\text{ such that }d(x,y)<t]+ 
    \mathbb{I}[\exists x\in\mathcal{S}_2\text{ such that }d(x,y)<t].
\end{align*}
Therefore,
\begin{align*}
    &\max\left(
    \text{Coverage}(\mathcal{S}_1),
    \text{Coverage}(\mathcal{S}_2)
    \right)\\
    \le\ &\text{Coverage}(\mathcal{S}_1\cup\mathcal{S}_2)\\
    \le\ &\text{Coverage}(\mathcal{S}_1)
    +\text{Coverage}(\mathcal{S}_1),
\end{align*}
thus proving the \textrm{Coverage} measure defined by Eq. \ref{eq:coverage-r2} is monotonic and subadditive. 
\end{proof}

\section{Random Subset Experiment Details}
\label{app:random-subset}

\subsection{Bio-Activity Dataset}

The 10K BioActivity dataset \citep{koutsoukas2014diverse} contains 10,000
compound samples excerpted from the ChEMBL database \citep{gaulton2017the} with bio-activity labels. These labels are the 50 largest ChEMBL activity classes, including enzymes (\eg, proteases, lyases, reductases, hydrolases, and kinases) and membrane receptors (\eg, GPCRs and non-GPCRs). The label distribution is shown in Figure \ref{fig:10k-label-dist}. 

\begin{figure}[h]
    \centering
    \includegraphics[width=0.6\textwidth]{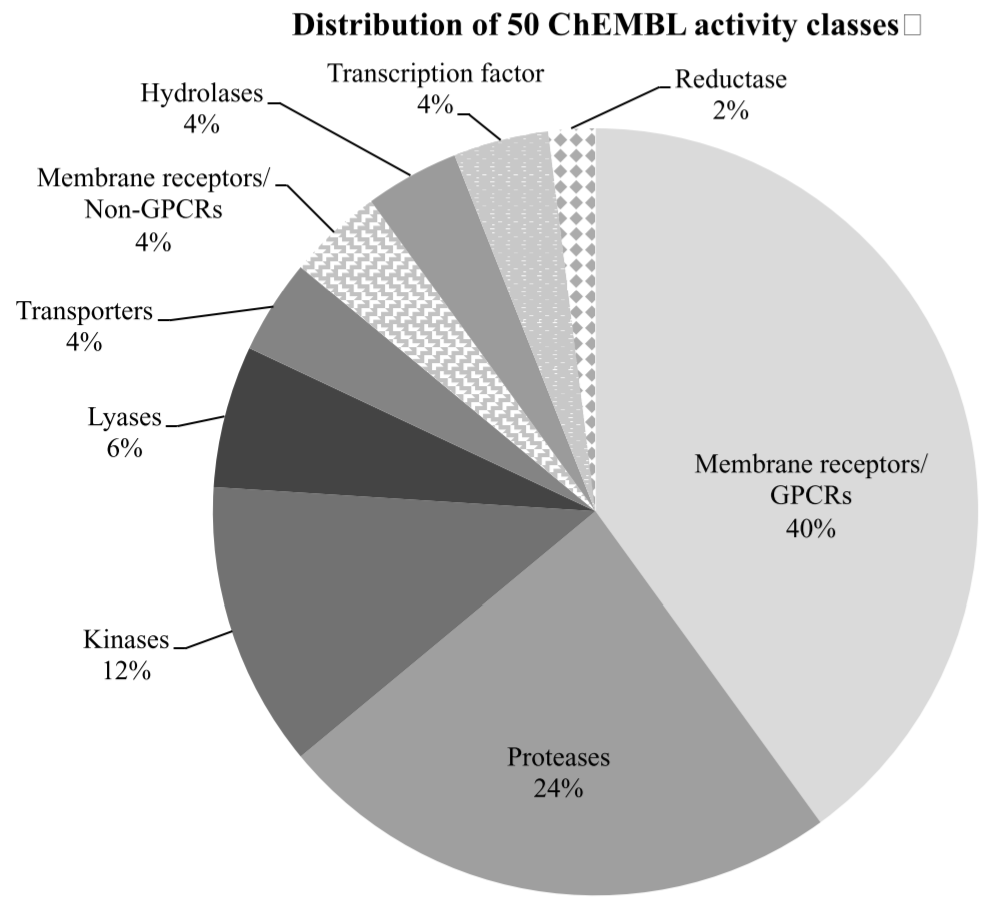}
    \caption{Label distribution of the BioActivity dataset \citep{koutsoukas2014diverse}. 50 bio-activity functionality classes are included. 
    }
    \label{fig:10k-label-dist}
\end{figure}

We use UMAP~\citep{mcinnes2018umap} to visualize the molecules in this dataset based on their Morgan fingerprints as displayed Figure \ref{fig:vis-10k}. From the visualization, we can see that fingerprint similarity is indeed correlated with bio-activity similarity. 

\begin{figure}[h]
    \centering
    \includegraphics[width=0.6\columnwidth]{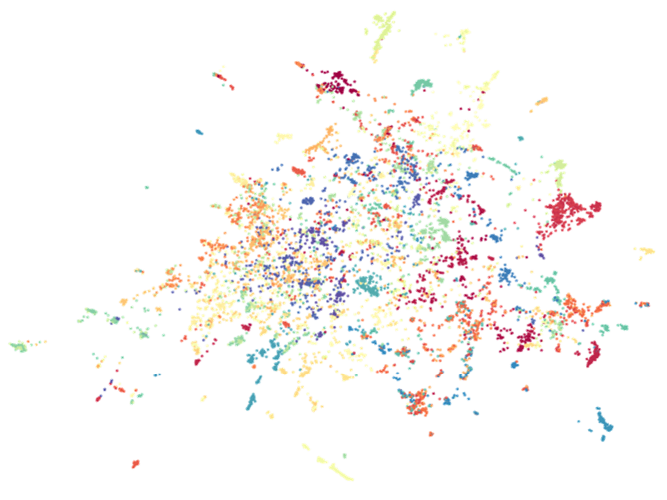}
    \caption{UMAP visualization of compounds in the BioActivity dataset. Different colors stand for different bio-activity labels. 
    }
    \label{fig:vis-10k}
\end{figure}

\newpage
\subsection{Random Subsets with Fixed Sizes}

We first consider randomly sampled molecular subsets of the BioActivity dataset with a fixed size $n$. We randomly sample $n$ molecules $\mathcal{S}$ from the dataset and compute the biological functionality coverage $\textrm{GS}(\mathcal{S})$ as well as each chemical space measure $\mu(\mathcal{S})$. By repeating the randomization, we can calculate Spearman's correlation between the gold standard \textrm{GS} and each individual measure $\mu$. 
We run the experiment for three different fixed sizes $n=50$, $n=200$, and $n=1000$ to represent different molecular distribution density. 
The pair-wise correlations between measures are displayed in Figure \ref{fig:random-subset-fixed-200-hm}. 

Furthermore, when $n=50$ and the molecules are distributed sparsely, all chemical space measures are positively correlated with the gold standard. 
However, when the subset size increases to $n=200$ and $n=1000$, \textrm{Bottleneck} and \textrm{DPP} becomes negatively correlated, while other measure tend to perform better. This is because these two measures will be bounded by the most similar molecular pair, thus severely conflicting with the subadditivity axiom. 

In this experiment, we repeat Algorithm \ref{alg:random-subset-fixed} for ten times to obtain reliable correlations. 

\begin{algorithm}[h]
    \caption{Calculating chemical space measures for random subsets with fixed sizes. }
    \label{alg:random-subset-fixed}
\begin{algorithmic}
    \STATE {\bfseries Input:} The fixed subset size $n$; The bio-activity dataset $\{(x_i,y_i)\}_{i=1}^{10\text{K}}$ where $y_i\in\mathcal{Y}$ are bio-activity labels and $\vert\mathcal{Y}\vert=50$; $K$ chemical space measures $\{\mu_k\}_{k=1}^K$. 
    \REPEAT
        \STATE Sample a number  $m$ uniformly from $\{1,\dots,50\}$.
        \STATE Sample $m$ labels $\mathcal{Y}'$ uniformly from $\mathcal{Y}$.
        \STATE Sample $n$ molecules $\mathcal{S}$ with labels in $\mathcal{Y}'$ uniformly. 
        \STATE Compute $\textrm{GS}(\mathcal{S})$ and $\mu_k(\mathcal{S})$ for $k\in[K]$. 
    \UNTIL{repeated for 1000 times}
    \STATE Calculate the correlations between \textrm{GS} and $\{\mu_k\}_{k=1}^K$ based on the 1000-times experiment results. 
\end{algorithmic}
\end{algorithm}

\paragraph{Experiment results. }

\begin{figure}[h]
    \centering
    \begin{subfigure}{\columnwidth}
        \centering
        \includegraphics[width=0.7\textwidth]{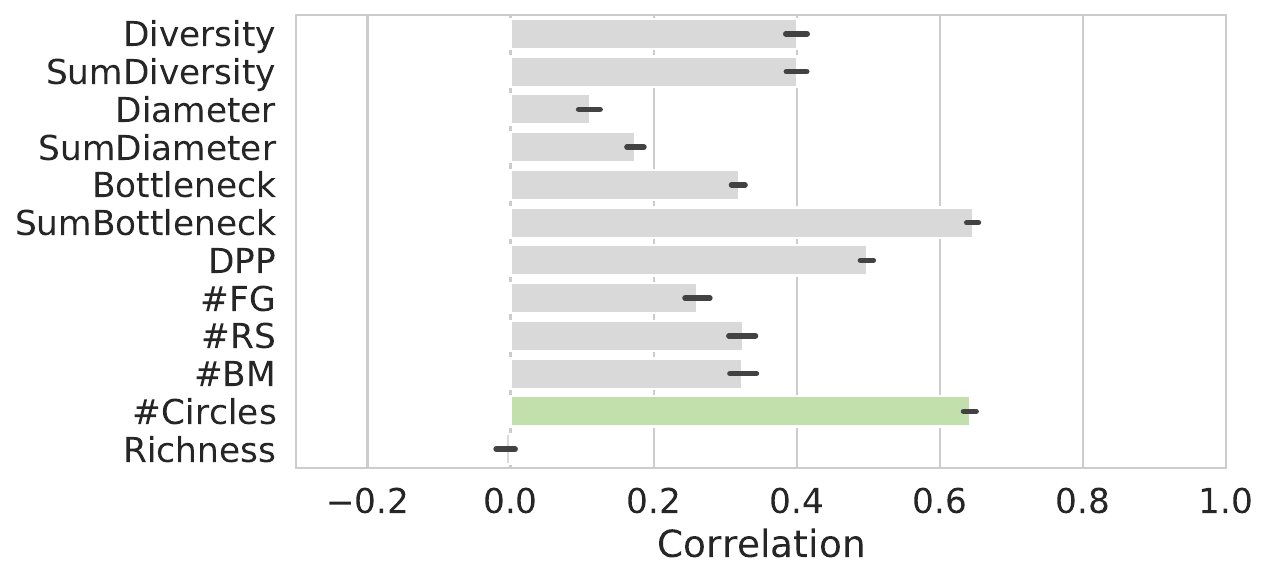}
        \caption{Random subsets with a fixed size $n=50$. }
    \end{subfigure}
    \begin{subfigure}{\columnwidth}
        \centering
        \includegraphics[width=0.7\textwidth]{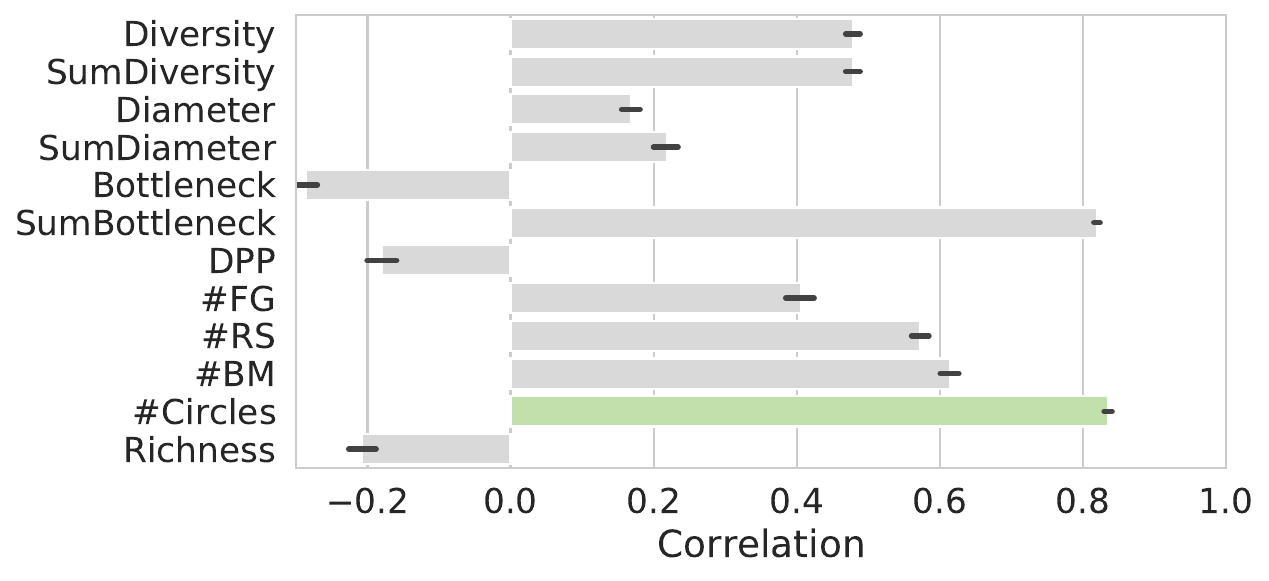}
        \caption{Random subsets with a fixed size $n=200$. }
    \end{subfigure}
    \begin{subfigure}{\columnwidth}
        \centering
        \includegraphics[width=0.7\textwidth]{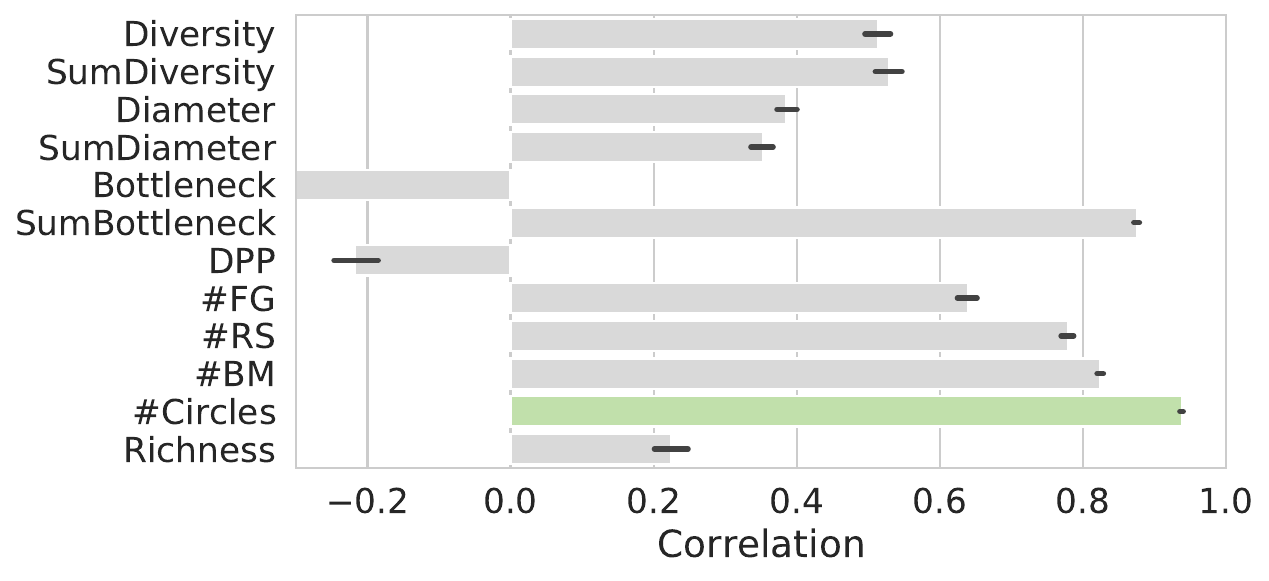}
        \caption{Random subsets with a fixed size $n=1000$. }
    \end{subfigure}
    \caption{Correlations between the gold standard \textrm{GS} and chemical space measures in the fixed-size random subset setting. The fixed size is set as different values. A larger correlation indicates the better. The average results are obtained by running experiments independently for ten times. }
    \label{fig:random-subset-fixed-all}
\end{figure}

We show the experiment results for different fixed random set size $n$ in Figure~\ref{fig:random-subset-fixed-all}. We find the \textrm{\#Circles} and \textrm{SumBottleneck} measures perform constantly better than all other measures. 

When the fixed size $n$ increases, most chemical space measures' performances also increase, except for \textrm{Bottleneck} and \textrm{DPP}, meaning they are not suitable for measuring the variety when the molecules are distributed crowdedly. 

\paragraph{Correlation between chemical space measures. }

We also visualize the pairwise correlation between chemical space measures in Figure~\ref{fig:random-subset-fixed-200-hm}. From the figure we can see that, the gold standard \textrm{GS}, \textrm{\#Circles}, and \textrm{SumBottleneck} are most similar with each other in the fixed-size setting. 

\begin{figure}[t]
    \centering
    \includegraphics[width=0.7\textwidth]{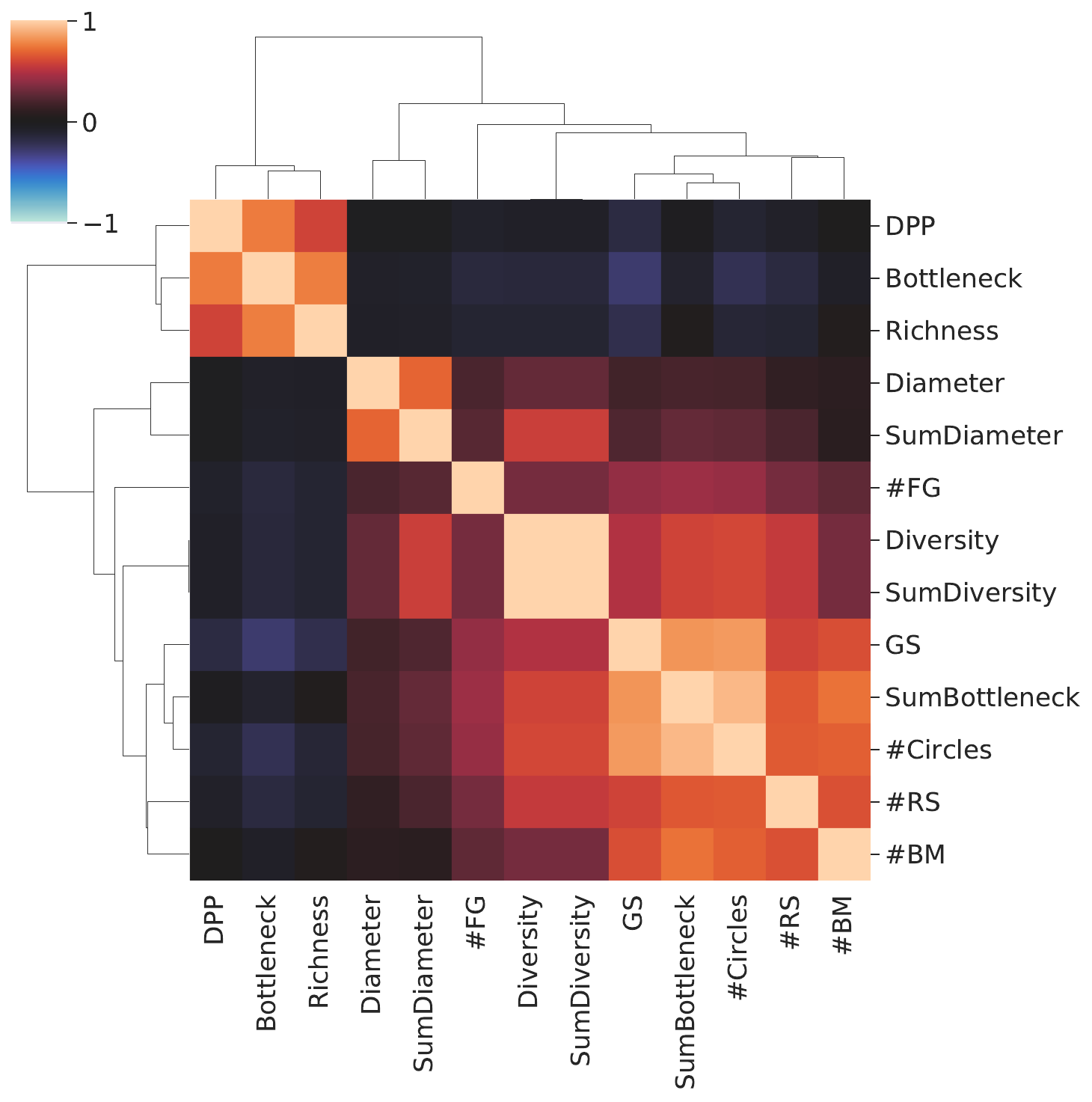}
    \caption{Correlations between chemical space measures in the fixed-size random subset setting. The fixed size is set as $n=200$. A larger correlation indicates the better. The average results are obtained by running experiments independently for ten times.
    }
    \label{fig:random-subset-fixed-200-hm}
\end{figure}

\paragraph{Threshold $t$ for \textrm{\#Circles}. }

\begin{figure}[t]
    \centering
    \includegraphics[width=0.7\textwidth]{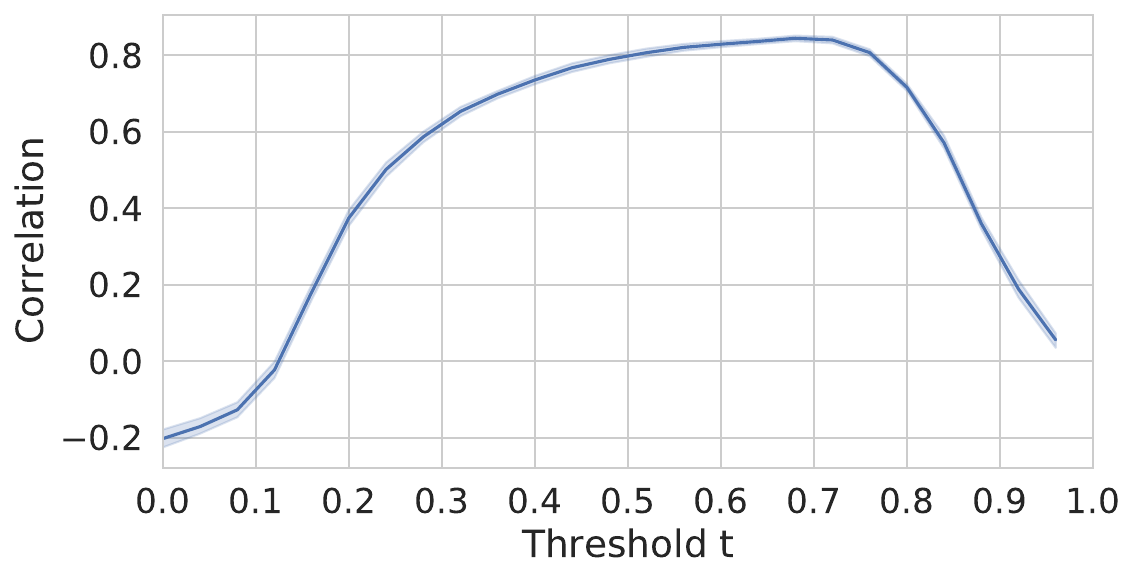}
    \caption{Correlations between the gold standard \textrm{GS} and the \textrm{\#Circles} measure in the fixed-size setting with different threshold $t$. The fixed size is set as $n=200$. A larger correlation indicates the better. The average results are obtained by running experiments independently for ten times.
    }
    \label{fig:random-subset-fixed-200-t}
\end{figure}

The \textrm{\#Circles} threshold $t$ is selected to maximize the correlation to the gold standard \textrm{GS}. Taking $n=200$ as an example, we test different $t$ values as Figure~\ref{fig:random-subset-fixed-200-t} displays and select $t=0.70$ as the threshold. We can see \textrm{\#Circles} works well for a wide range of thresholds like $[0.40, 0.70]$. In \citet{olivecrona2017molecular}, the authors suggest to use a threshold $t=0.60$ to decide whether two molecules are dissimilar with each other\footnote{In the original text, the authors suggest a similarity threshold of $0.40$ that is equivalent to a distance threshold of $0.60$. }, which aligns our results. For $n=50$ and $n=1000$, the threshold is set as $t=0.70$ and $t=0.65$ respectively. 

\paragraph{Distance metric $d$. }

We also study the impact of distance metric $d$. In Table~\ref{tab:random-subset-fixed}, we listed the experiment results for both fingerprint-based Tanimoto distance and VAE-based latent space dissimilarity~\citep{samanta2020vae}. 
We find the experiment results obtained with the VAE dissimilarity remain consistent with the results obtained with the Tanimoto distance.

\begin{table}[h]
\centering
\begin{small}
\caption{Correlations between the gold standard and chemical space measures in the fixed-size random subset setting.  The fixed size is set as $n=200$.  A larger correlation indicates the better. Results are obtained by averaging ten independent experiments. \textit{Italic} texts indicate molecular representation and the distance metric. The top three measures are highlighted in \colorbox{cellgreen}{green}, and the best measure is printed in \textbf{bold}. }
\vspace{5pt}
\label{tab:random-subset-fixed}
\begin{tabular}{r|rr}
\toprule
& \multicolumn{2}{c}{\textbf{Aggregation-based}} \\
& \multicolumn{1}{c}{\textit{Tanimoto distance}}   & \multicolumn{1}{c}{\textit{VAE dissimilarity}}      \\
Diversity & 0.478 $\pm$ 0.011 & 0.388 $\pm$ 0.040 \\
SumDiversity & 0.478 $\pm$ 0.011 & 0.388 $\pm$ 0.040 \\
Diameter & 0.179 $\pm$ 0.031 & 0.112 $\pm$ 0.022 \\
SumDiameter & 0.228 $\pm$ 0.028 & 0.201 $\pm$ 0.029 \\
Bottleneck &-0.293 $\pm$ 0.015 &-0.298 $\pm$ 0.027 \\
SumBottleneck & \greencell 0.821 $\pm$ 0.010 & 0.527 $\pm$ 0.013 \\
DPP &-0.183 $\pm$ 0.021 &-0.244 $\pm$ 0.030 \\ \midrule
                               & \multicolumn{2}{c}{\textbf{Reference-based}}      \\
\multicolumn{1}{l|}{\textbf{}} & \multicolumn{1}{c}{\textit{Fgragment}}   & \graycell   \\
\#FG & 0.421 $\pm$ 0.033 & \graycell   \\
\#RS & 0.574 $\pm$ 0.025 & \graycell   \\
\#BM & 0.610 $\pm$ 0.028 & \graycell   \\ \midrule
                               & \multicolumn{2}{c}{\textbf{Locality-based}}      \\
& \multicolumn{1}{c}{\textit{Tanimoto distance}}   & \multicolumn{1}{c}{\textit{VAE dissimilarity}}      \\
\textbf{\#Circles}             & \greencell \textbf{0.831 $\pm$ 0.008} & \greencell 0.745 $\pm$ 0.014 \\
                               & \multicolumn{1}{c}{\textit{SMILES}} & \graycell   \\
Richness &-0.207 $\pm$ 0.025 & \graycell   \\ \bottomrule
\end{tabular}
\end{small}
\end{table}

\subsection{Random Subsets with Growing Sizes}

To mimic the molecular generation process, we also grow the size of the subsets. Specifically, for a maximum size $n$, we sequentially sample $n$ molecules without replacement to form $n$ subsets $\{\mathcal{S}_i=\{x_1,\dots,x_i\}\}_{i=1}^n$. 
For both the gold standard \textrm{GS} and a chemical space measure $\mu$, we record their values as $\mathcal{S}$ grows into a time series.
, \eg, $\{(i,\mu(\mathcal{S}_i))\}_{i=1}^n$.
Comparing the trajectory of a chemical space measure with the trajectory of the gold standard, we can observe which measure behaves more similarly to \textrm{GS}. We quantitatively estimate the similarity of their trajectories with the dynamic time warping (DTW) distance of the two time series. 

In this experiment, we repeat Algorithm \ref{alg:random-subset-growing} for ten times to obtain reliable DTW distances. 

\begin{algorithm}[h]
    \caption{Calculating chemical space measures for random subsets with growing sizes. }
    \label{alg:random-subset-growing}
\begin{algorithmic}
    \STATE {\bfseries Input:} The maximum subset size $n$; The bio-activity dataset $\{(x_i,y_i)\}_{i=1}^{10\text{K}}$ where $y_i\in\mathcal{Y}$ are bio-activity labels and $\vert\mathcal{Y}\vert=50$; $K$ chemical space measures $\{\mu_k\}_{k=1}^K$. 
    \STATE Sample a number  $m$ uniformly from $\{1,\dots,50\}$. 
    \STATE Sample $m$ labels $\mathcal{Y}'$ from $\mathcal{Y}$. 
    \FOR{$i$ {\bfseries in} $\{1,\dots,n\}$}
        \STATE Sample an unseen molecule $x_i$ whose label is in $\mathcal{Y}'$. 
        \STATE Set $\mathcal{S}_i:=\{x_1,\dots,x_i\}$
        \STATE Compute $\textrm{GS}(\mathcal{S}_i)$ and $\mu_k(\mathcal{S}_i)$ for $k\in[K]$. 
    \ENDFOR
    \STATE Plot chemical space measure curves for \textrm{GS} and $\{\mu_k\}_{k=1}^K$ where the x axes are $i\in\{1,\dots,n\}$ and the y axes are chemical space measure values $\textrm{GS}(\mathcal{S}_i)$ and $\mu_k(\mathcal{S}_i)$. 
    \STATE Transform the cumulative curves into incremental ones. 
    \STATE Calculate DTW distances between incremental curves. 
\end{algorithmic}
\end{algorithm}

\paragraph{Experiment results. }

To mimic the way in which generation models propose new molecules, in Algorithm \ref{alg:random-subset-growing}, we require the newly sampled molecule $x_i$ to be similar to the already sampled molecules $\{x_1,\dots,x_{i-1}\}$. The specific implementation can be found in our code\footnote{The code will be released after publication. }. Moreover, we also test the following two cases:
\begin{inparaenum}[(1)]
    \item All molecules are sampled uniformly;
    \item The newly sampled molecule $x_i$ have to be most similar to the already sampled molecules $\{x_1,\dots,x_{i-1}\}$. 
\end{inparaenum}
The results of DTW distances for these two cases are shown in Figure~\ref{fig:random-subset-growing-all}.

From Figure~\ref{fig:random-subset-growing-all} we can see that, \textrm{\#Circles} performs the best. Also, as the new molecules to add become more similar to the existing ones, the advantage of \textrm{\#Circles} over other measures becomes larger. This makes \textrm{\#Circles} especially suitable for measuring molecular generation models.


\begin{figure}
    \centering
    \begin{subfigure}{0.7\textwidth}
        \centering
        \includegraphics[width=\columnwidth]{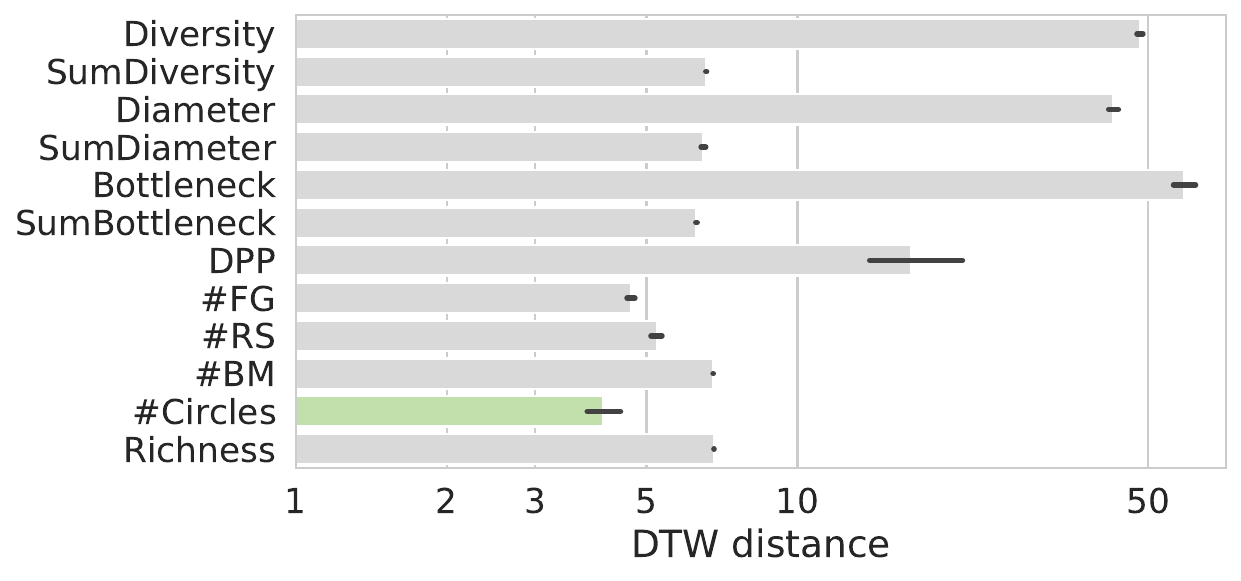}
        \caption{The new molecule $x_i$ is uniformly sampled from all unseen molecules. }
    \end{subfigure}
    \hfill
    \begin{subfigure}{0.7\textwidth}
        \centering
        \includegraphics[width=\columnwidth]{figures/random-subset-growing-power=10.pdf}
        \caption{The new molecule $x_i$ needs to be similar to the already sampled ones $x_1,\dots,x_{i=1}$. }
    \end{subfigure}
    \hfill
    \begin{subfigure}{0.7\textwidth}
        \centering
        \includegraphics[width=\columnwidth]{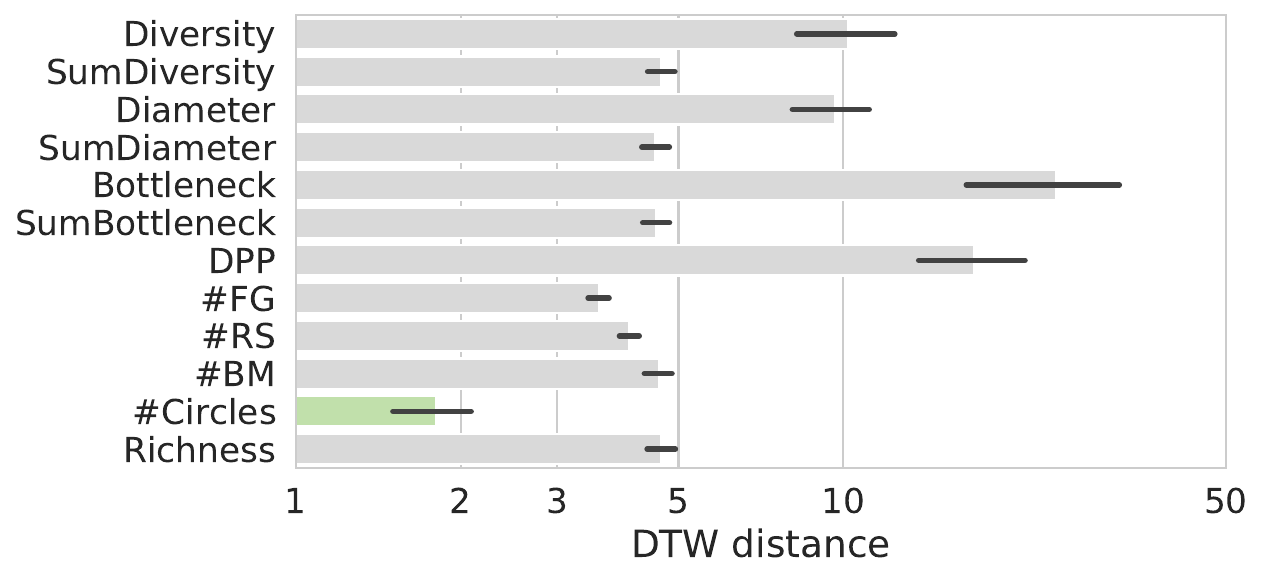}
        \caption{The new molecule $x_i$ needs to be most similar to the already sampled ones $x_1,\dots,x_{i=1}$. }
    \end{subfigure}
    \caption{DTW distances between the gold standard \textrm{GS} and chemical space measures in the growing-size random subset setting. The maximum size is set as $n=1000$. A smaller distance indicates the better.
    The average results are obtained by running experiments independently for ten times.}
    \label{fig:random-subset-growing-all}
\end{figure}

\newpage
\paragraph{DTW distances between chemical space measures. }

We visualize the pairwise DTW distances between chemical space measures in Figure~\ref{fig:random-subset-grwoing-10-hm}. From the figure we can see that, the gold standard \textrm{GS} and the \textrm{\#Circles} measure are most similar with each other in the growing-size setting. 

In addition, we find the \textrm{Richness}, \textrm{SumDiversity}, \textrm{SumDiameter}, \textrm{SumBottleneck}, and \textrm{\#BM} are forming a large cluster, while \textrm{\#FG} and \textrm{\#RS} tend to be similar with each other. 

\begin{figure}[t]
    \centering
    \includegraphics[width=0.7\textwidth]{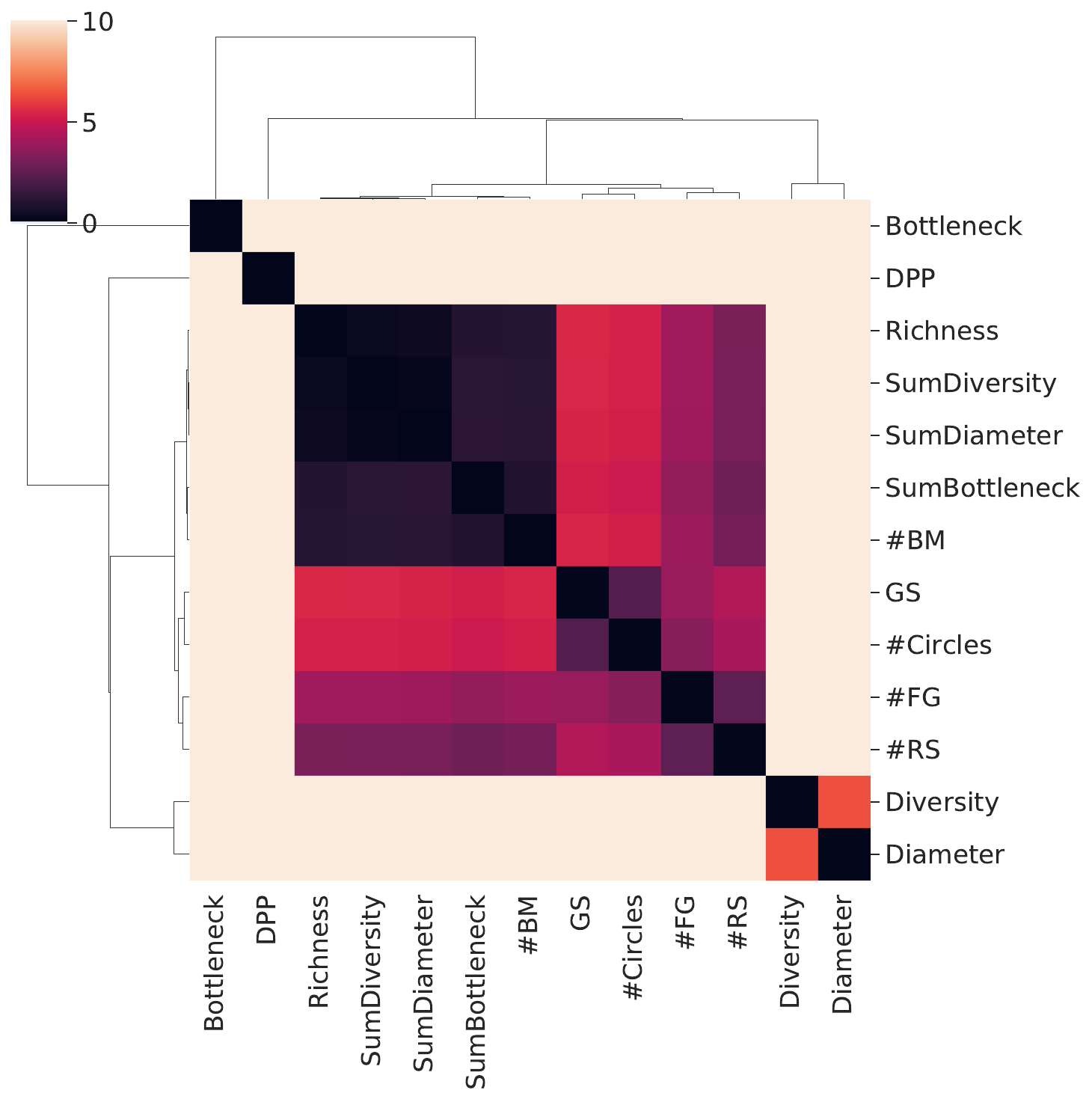}
    \caption{DTW distances between chemical space measures in the growing-size random subset setting. The maximum size is set as $n=1000$, and the new molecule needs to be similar to the already samples ones. A smaller distance indicates the better. The average results are obtained by running experiments independently for ten times. }
    \label{fig:random-subset-grwoing-10-hm}
\end{figure}

\paragraph{Threshold $t$ for \textrm{\#Circles}. }

\begin{figure}[t]
    \centering
    \includegraphics[width=0.7\textwidth]{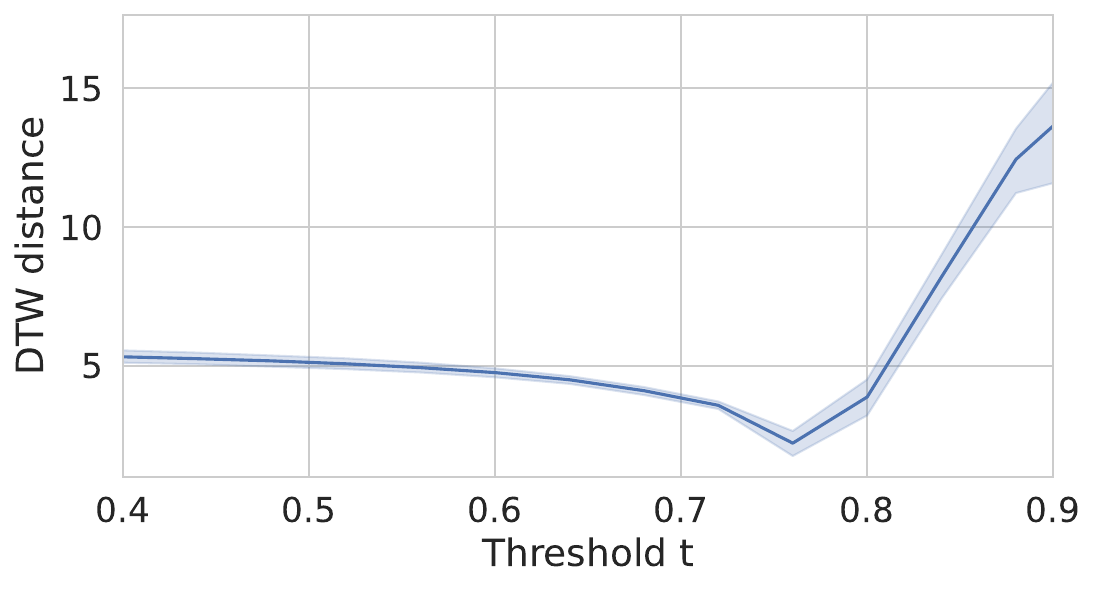}
    \caption{DTW distances between the gold standard \textrm{GS} and the \textrm{\#Circles} measure in the growing-size random subset setting with different threshold $t$. The maximum size is set as $n=1000$, and the new molecule needs to be similar to the already samples ones. A smaller distance indicates the better. The average results are obtained by running experiments independently for ten times. }
    \label{fig:random-subset-grwoing-10-t}
\end{figure}

The \textrm{\#Circles} threshold $t$ is selected to minimize the DTW distance to the gold standard \textrm{GS}. Taking the second scenario as an example, we test different $t$ values as Figure~\ref{fig:random-subset-grwoing-10-t} displays and select $t=0.76$ as the threshold. For the first and the third scenarios, the threshold is set as $t=0.84$ and $t=0.78$ respectively. 

\paragraph{Distance metric $d$. }

We also study the impact of distance metric $d$. In Table~\ref{tab:random-subset-growing}, we listed the experiment results for both fingerprint-based Tanimoto distance and VAE-based latent space dissimilarity~\citep{samanta2020vae}. 
We find the experiment results obtained with the VAE dissimilarity remain consistent with the results obtained with the Tanimoto distance.


\begin{table}[h]
\begin{small}
\begin{center}
\caption{DTW distances between the gold standard and chemical space measures in the growing-size random subset setting. The maximum size is set as $n=1000$, and the new molecule needs to be similar to the already samples ones. A smaller distance indicates the better. }
\label{tab:random-subset-growing}
\vspace{5pt}
\begin{tabular}{r|rr}
\toprule
& \multicolumn{2}{c}{\textbf{Aggregation-based}} \\
& \multicolumn{1}{c}{\textit{Tanimoto distance}}   & \multicolumn{1}{c}{\textit{VAE dissimilarity}}      \\
Diversity	 & 	18.668	 $\pm$ 	6.973	 & 	30.063	 $\pm$ 	4.284	 \\
SumDiversity	 & 	5.425	 $\pm$ 	0.404	 & 	5.484	 $\pm$ 	0.296	 \\
Diameter	 & 	17.299	 $\pm$ 	4.801	 & 	28.071	 $\pm$ 	3.917	 \\
SumDiameter	 & 	5.328	 $\pm$ 	0.396	 & 	5.472	 $\pm$ 	0.297	 \\
Bottleneck	 & 	38.668	 $\pm$ 	5.769	 & 	37.168	 $\pm$ 	5.422	 \\
SumBottleneck	 & 	5.167	 $\pm$ 	0.353	 & 	5.432	 $\pm$ 	0.293	 \\
DPP	 & 	18.845	 $\pm$ 	3.962	 & 	12.052	 $\pm$ 	2.176	 \\
\midrule
                               & \multicolumn{2}{c}{\textbf{Reference-based}}      \\
\multicolumn{1}{l|}{\textbf{}} & \multicolumn{1}{c}{\textit{Fgragment}}   & \graycell \\ 
\#FG & \greencell 3.797	 $\pm$ 	0.295	& \graycell 	 \\
\#RS & 	4.382	 $\pm$ 	0.247	& \graycell	 \\
\#BM & 	5.365	 $\pm$ 	0.396	& \graycell	 \\

\midrule
                               & \multicolumn{2}{c}{\textbf{Locality-based}}      \\
& \multicolumn{1}{c}{\textit{Tanimoto distance}}   & \multicolumn{1}{c}{\textit{VAE dissimilarity}}      \\
\textbf{\#Circles}             & \greencell \textbf{2.173 $\pm$ 0.910} & \greencell 2.470 $\pm$ 0.629 \\
                               & \multicolumn{1}{c}{\textit{SMILES}} & \graycell   \\
Richness & 5.454 $\pm$ 0.347 & \graycell   \\ \bottomrule
\end{tabular}
\end{center}
\end{small}
\end{table}

\section{Measuring Molecular Databases}
\label{app:databases}

\subsection{Molecular Databases}
\label{app:databases-intro}

We measure the chemical space coverage for five molecular databases that are commonly used in virtual screening and generative model training: 

\begin{enumerate}[(1)]
    \item \textbf{ZINC-250k}\footnote{ZINC-250k: \url{https://www.kaggle.com/datasets/1379f2461e75ef7a11d0c5ff3dd0c2440a6bcf33531bc0d303e8eac81a3a4b17}. } is a random subset of the ZINC database \citep{irwin2005zinc} and consists of 249K commercially-available compounds from different vendors for virtual screening; 
    \item \textbf{MOSES} \citep{polykovskiy2020molecular} is another sub-collection of ZINC molecules. It contains approximately 2M molecules in total, filtered by 
    molecular weights (ranged range from 250 to 350 Daltons), the number of rotatable bonds (not greater than 7), water-octanol partition coefficient (logP, less or equal than 3.5), 
    atom types (C, N, S, O, F, Cl, Br, and H), ring cycle sizes (no larger than 8), medicinal chemistry filters (MCFs), and PAINS filters. 
    \item \textbf{ChEMBL} \citep{gaulton2017the} is a manually curated database of 2M bioactive molecules with known experimental data, especially with the inhibitory and binding properties against macromolecule targets. 
    This database contains  not only drug and drug-like molecules, but also natural products and biopolymers.
    \item \textbf{GDB-17} \citep{ruddigkeit2012enumeration} enumerates chemical structures that contain up to 17 atoms of C, N, O, S, and halogens, overlapping with the molecular weight range typical for lead compounds. 
    The lead-like subset filters 11M compounds with lead-like properties (100-350 MW \& 1-3 clogP).
    \item 
    Enamine is a company that provides compound libraries for high-throughput screening (HTS). 
    The \textbf{Enamine Hit Locator Library}\footnote{ENAMINE diversity libraries: \url{https://enamine.net/compound-libraries/diversity-libraries}.} is their largest diversity library with high MedChem tractability, and the compounds in the library are readily available for purchase and are guaranteed synthesizable at a reasonable cost.
\end{enumerate}

\subsection{Results on Chemical Space Measures}
\label{app:databases-results}

\begin{figure}[t]
    \centering
    \includegraphics[width=\textwidth]{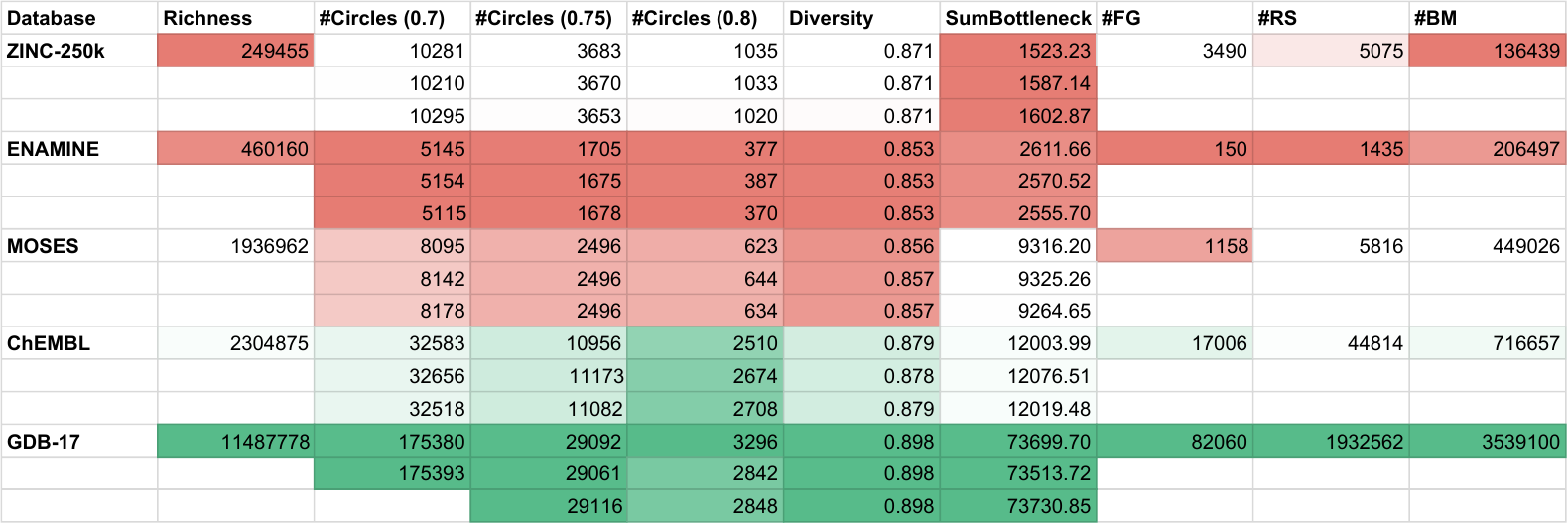}
    \caption{Measuring results on original databases. }
    \label{fig:databases-all-original}
\end{figure}

\begin{figure}[t]
    \centering
    \includegraphics[width=\textwidth]{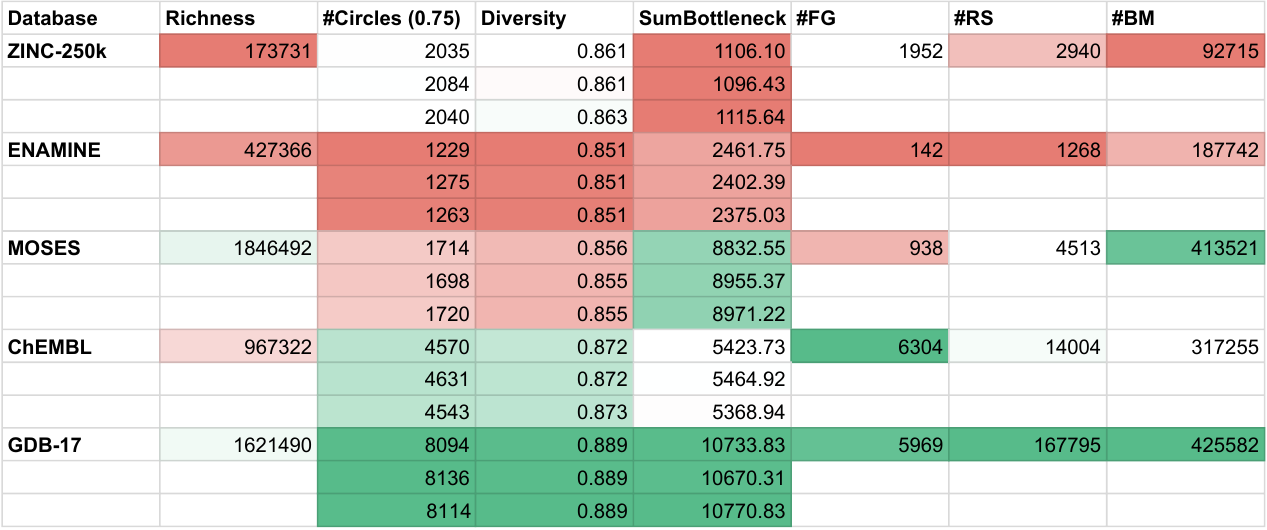}
    \caption{Measuring results on filtered databases. }
    \label{fig:databases-all-filtered}
\end{figure}

The measuring results are listed in Figure \ref{fig:databases-all-original} and \ref{fig:databases-all-filtered}.

\section{Molecular Generation Experiment Details}
\label{app:jnk3}

\subsection{Model Implementation}

We implement the models with the official repositories\footnote{
RationaleRL: \url{https://github.com/wengong-jin/multiobj-rationale}. \\
\hspace*{15pt} DST: \url{https://github.com/futianfan/DST}. \\
\hspace*{15pt} JANUS: \url{https://github.com/aspuru-guzik-group/JANUS}. \\
\hspace*{15pt} MARS: \url{https://github.com/bytedance/markov-molecular-sampling}. 
}. 
All hyperparameters are set as default. 



\subsection{MARS Variants}

The terms incorporated into objectives are computed as follows\footnote{Equations~\ref{eq:nov-term-beg}-\ref{eq:nov-term-end} are approximations of $[\mu(\mathcal{S}\cup\{x\})-\mu(\mathcal{S})]$ defined to avoid numerical issues and for computational efficiency.}:

\begin{align}
    \label{eq:nov-term-beg}
    \text{Novelty}_\text{Diversity}(x,\mathcal{S})
    &:= \frac{1}{\vert\mathcal{S}\vert}\sum_{y\in\mathcal{S}} d(x,y) \\
    \label{eq:nov-term-sb}
    \text{Novelty}_\text{SumBottleneck}(x,\mathcal{S})
    &:= \min_{y\in\mathcal{S}}\  d(x,y) \\
    \text{Novelty}_\text{\#Circles}(x,\mathcal{S})
    &:= \left [\min_{y\in\mathcal{S}} d(x,y) > t
    \right ] 
    \label{eq:nov-term-end}
\end{align}

For the model variants, we test different $\alpha$ values from $\{0.1, 0.3, 1.0, 3.0\}$ and report the best performance. Specifically, we use 
$\alpha=1.0$ for MARS+\textrm{Diversity}, 
$\alpha=0.3$ for MARS+\textrm{SumBottlenck}, and
$\alpha=0.1$ for MARS+\textrm{\#Circles}. The threshold we use for the \textrm{\#Circles} measure is $t=0.60$. 

\subsection{Results on Other Measures}

\begin{table*}[h]
\begin{center}
\vskip -0.1in
\caption{
Measuring results of the chemical space explored by molecular generation methods. 
In each chemical space measure, the larger value the better. 
\textbf{Bold} indicates the best performance in each measure.
}
\label{tab:jnk3-all}
\vspace{-5pt}
\begin{small}
\begin{tabular}{l|l|l|l|l}
\toprule																	
\textbf{Method}	& 	\textbf{\textrm{SumBottleneck}} 			& 	\textbf{\textrm{\#FG}} 			& 	\textbf{\textrm{\#RS}} 			& 	\textbf{\textrm{\#BM}}			\\
\midrule																	
\textbf{Databases}	&	280			&	58			&	132			&	502			\\
\textbf{RationaleRL}	&	9	$\pm$	0.3\%	&	39	$\pm$	0.0\%	&	\textbf{207	$\pm$	0.0\%}	&	442	$\pm$	2.3\%	\\
\textbf{DST}	&	8			&	5			&	9			&	26			\\
\textbf{JANUS}	&	76	$\pm$		&	93	$\pm$		&	73	$\pm$		&	133	$\pm$		\\
\textbf{MARS}	&	535	$\pm$	7.9\%	&	346	$\pm$	16.0\%	&	67	$\pm$	3.5\%	&	19.5K	$\pm$	23.1\%	\\
+Diversity	&	715	$\pm$	12.9\%	&	463	$\pm$	17.2\%	&	67	$\pm$	1.9\%	&	\textbf{20.1K	$\pm$	25.2\%}	\\
+SumBot	&	\textbf{926	$\pm$	12.5\%}	&	\textbf{868	$\pm$	15.3\%}	&	66	$\pm$	2.3\%	&	14.0K	$\pm$	20.8\%	\\
+\#Circles	&	742	$\pm$	12.9\%	&	601	$\pm$	35.0\%	&	69	$\pm$	5.5\%	&	18.9K	$\pm$	15.3\%	\\
\bottomrule																	
\end{tabular}
\end{small}
\end{center}
\vspace{-10pt}
\end{table*}

The results on other chemical space measures are listed in Table \ref{tab:jnk3-all}. 

\subsection{Molecular Property and Binding Affinity Distributions}

\begin{figure}[t]
    \centering
    \begin{small}
    \includegraphics[width=\columnwidth]{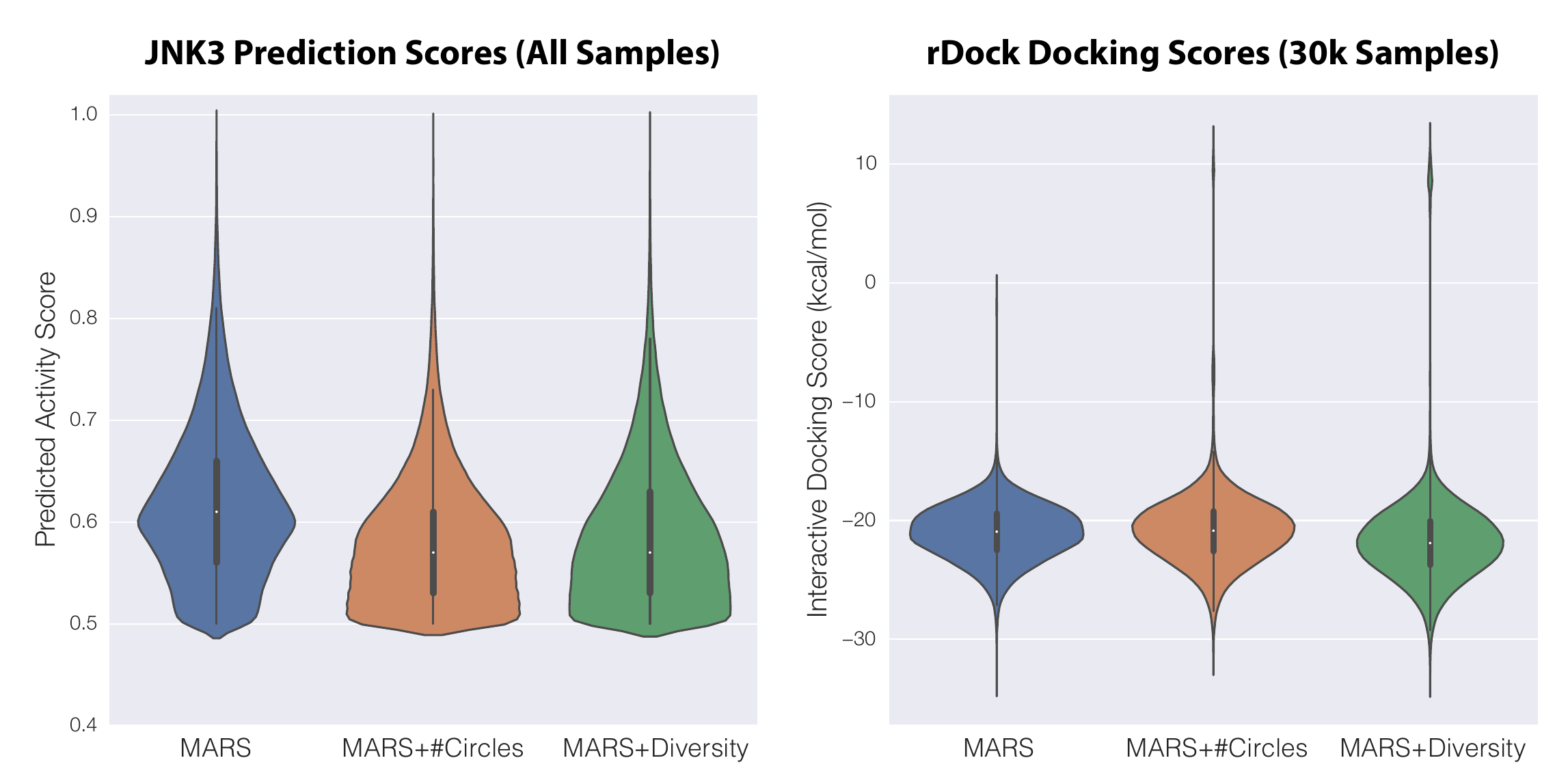}
    \caption{\textbf{Left}: Distributions of JNK3 machine learning model predicted score. We observe a downward trend in the MARS variant models, which are jointly optimized with a chemical space measure. However, all generated molecules are above the score threshold, and in practice, due to the limitation of the prediction model, they should be considered equally qualified candidates for further analysis. \textbf{Right}: Distributions of interaction scores of the ligands and the receptor docking (lower energy score indicates more favorable interactions). For the docking experiments, we randomly sampled 30K molecules from each generated library and performed molecular docking on the JNK3 target (PDB: 7KSI) using rDock \citep{ruiz2014rdock}. The distributions of docking scores show that the joint optimizations on chemical space measures do not affect the binding affinity significantly when predicted by \textit{in silico} docking.}
    \label{fig:property-dist-jnk3}
    \end{small}
\end{figure}

\begin{figure}[t]
\centering
\begin{small}
    \includegraphics[width=\columnwidth]{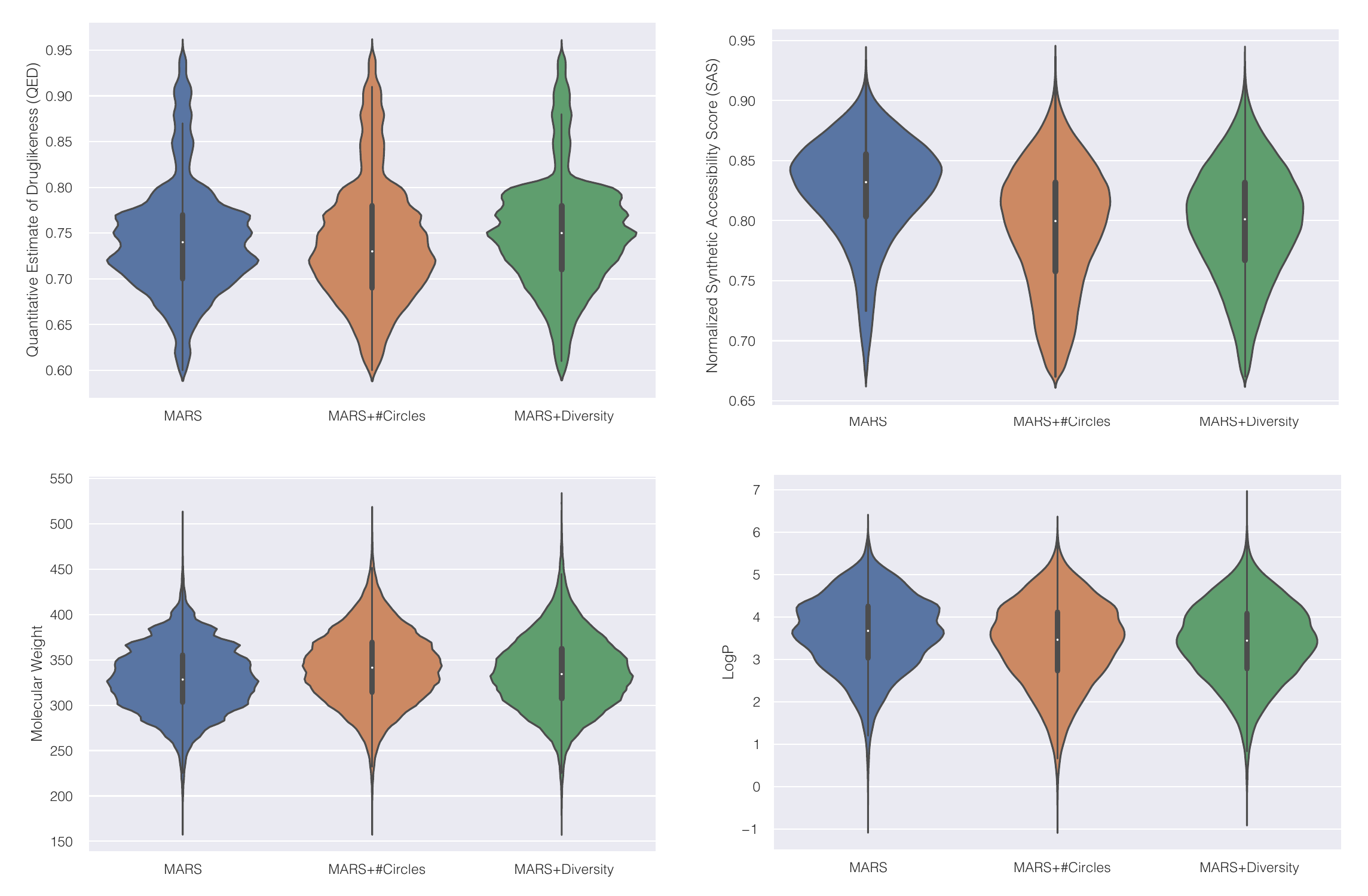}
    \caption{Selected molecular property distributions. Except for SAS (normalized from 0 to 1, the larger the better), all property distributions do not show significant changes across different models. The downward shift of the SAS is likely due to the increased diversity in the molecular libraries.
    }
    \label{fig:property-dist-qed}
\end{small}
\end{figure}

To investigate how incorporating chemical space measures into the objective as Eq. \ref{eq:nov-term-beg}-\ref{eq:nov-term-end} can influence the optimization for molecular properties, we examine the property distributions of molecules generated by MARS and its variants. 

As shown in Figure \ref{fig:property-dist-jnk3}-\ref{fig:property-dist-qed}, our resulting libraries from the joint optimizations with chemical space measures (\ie, MARS+\textrm{\#Circles} and MARS+Diversity) are not showing a significant downward shift on JNK3 binding affinity as well as QED scores and SA scores. It is expected that compared with vanilla MARS, the property scores of molecules generated by MARS variants would decrease slightly, because instead of solely optimizing the property scores, the model needs to make sacrifices for a more diverse library. However, this sacrifice would be insignificant compared to the diversity gain in the resulting library. In practice, these molecules would pass the scoring threshold and would be considered qualified in these regards. In Table 2, the experiment results show that by incorporating chemical space measures into optimization objectives, the richness of discovered qualified molecules ($\text{JNK3}\ge0.5$, $\text{QED}\ge0.6$, and $\text{SA}\le4$) got significantly improved.

\subsection{Visualization}


To provide a more intuitive view of the effect that adding a joint objective of exploration can encourage the molecular generation model to discover a wider space, we visualize the chemical space explored by the baseline (MARS) and MARS+\textrm{SumBottleneck} as well as MARS+\textrm{\#Circles} in Figure \ref{fig:vis-frag}, where we show the 2D layout of the Morgan fingerprints of the functional groups in the molecules generated through principal component analysis (PCA). 
The larger number of unique functional groups and the wider spread of the data points obtained by adding exploration-based novelty terms indicate more diverse structures being explored. 

\begin{figure}[t]
    \centering
    \begin{small}
        \includegraphics[width=0.8\textwidth]{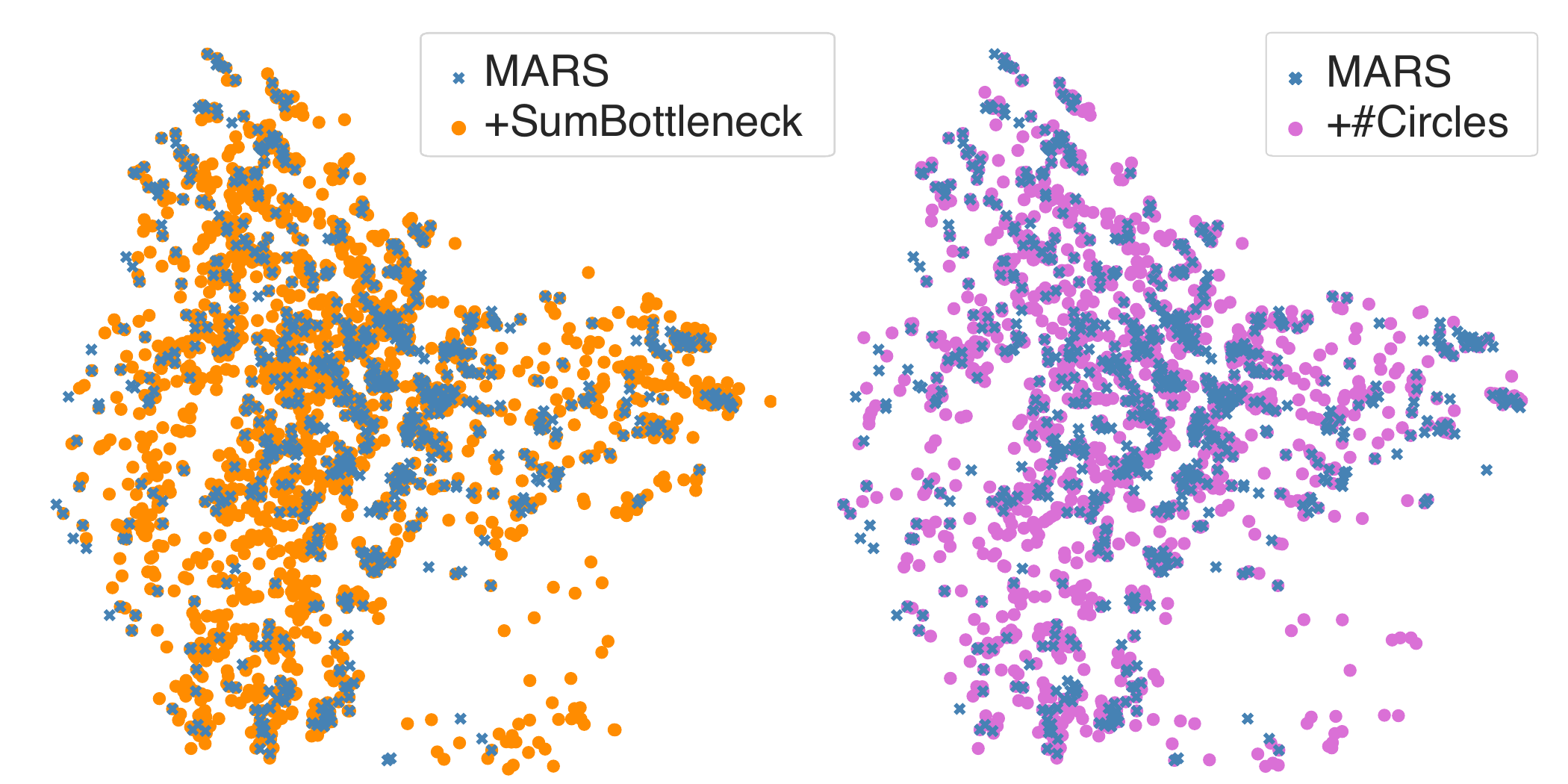}
        \caption{Optimizing chemical space measures encourages the molecular generation model to explore a larger span of the chemical space. This figure shows principal component analysis (PCA) of functional groups discovered by different models. 
        }
        \label{fig:vis-frag}
    \end{small}
\end{figure}

We show the molecular clustering results in Figure \ref{fig:vis-clust}. Clusters are calculated based on Morgan fingerprints of the generated molecules and their Tanimoto similarity. Compared to the baseline model, a larger number of clusters can be obtained from MARS+\textrm{SumBottleneck}.

\begin{figure}[t]
\centering
\begin{small}
    \includegraphics[width=\columnwidth]{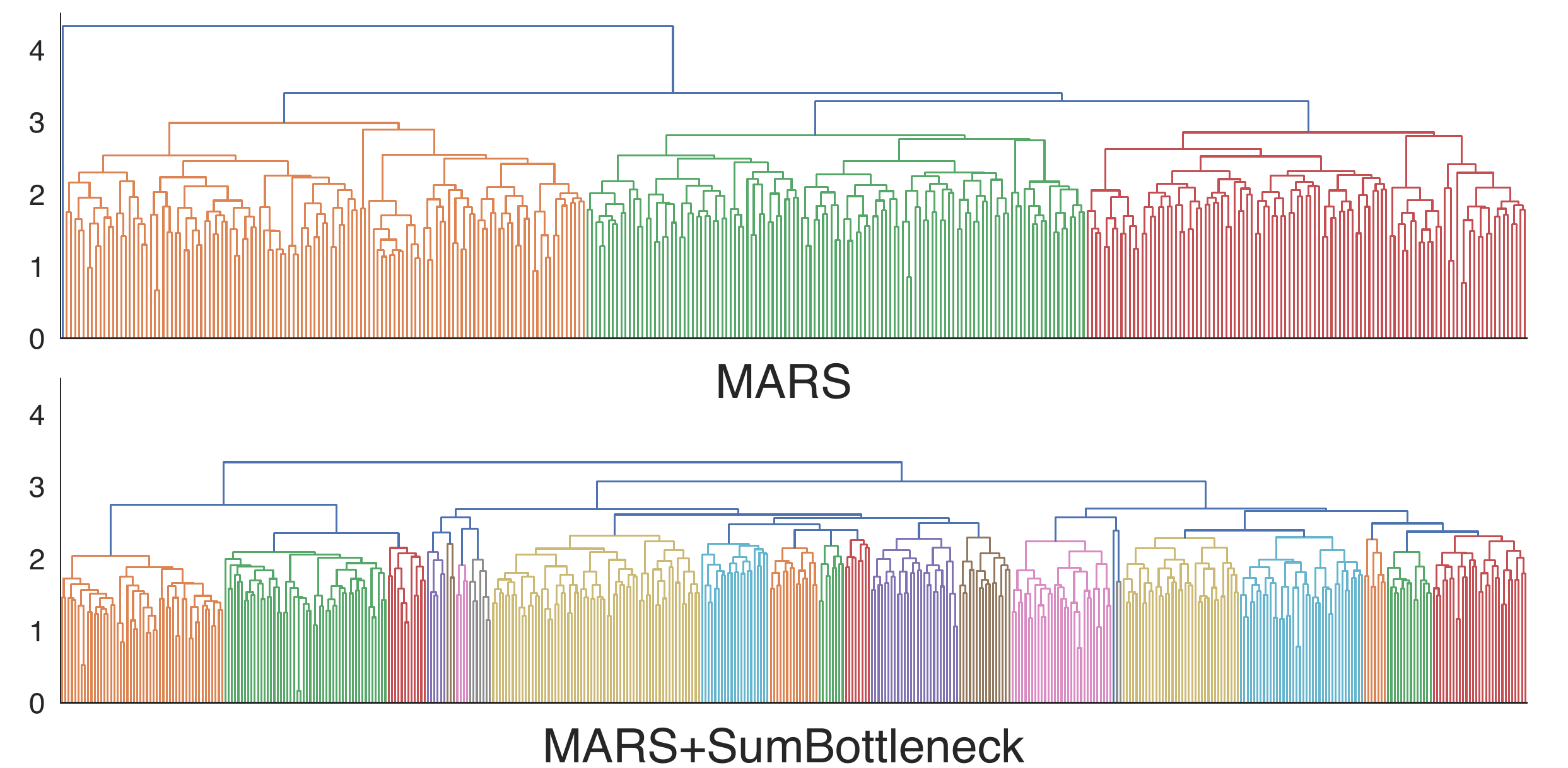}
    \caption{Hierarchical clustering results of generated molecules. Different colors stand for different clusters. The clustering threshold is set as 0.7$\times$ the height of the highest linkage in accordance with SciPy's default value.
    }
    \label{fig:vis-clust}
\end{small}
\end{figure}

\begin{figure*}[t]
    \centering
    \includegraphics[width=\textwidth]{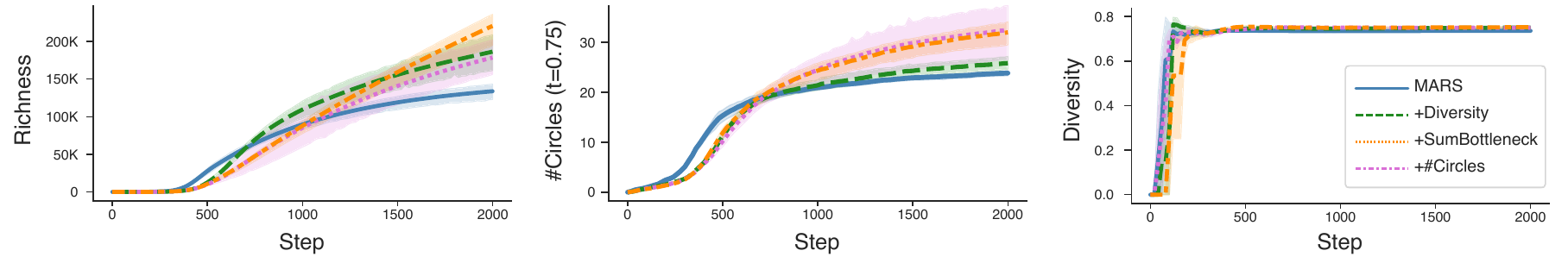}
    \caption{
    The dynamics of chemical space measures over generated molecules. 
    Incorporating chemical measures into generation objectives increases the exploration of the chemical space. 
    }
    \label{fig:jnk3-curves}
\end{figure*}

Figure \ref{fig:jnk3-curves} shows the dynamics of measures for MARS and its variants. 



\section{Fast Approximation of \textrm{\#Circles}}
\label{app:approx}

The computation of \textrm{\#Circles} is combinatorial and the running time is exponential in theory. We implement two fast approximations by sacrificing accuracy as Algorithm \ref{alg:approximation-sequential} and Algorithm \ref{alg:approximation-recursive}.

\begin{algorithm}[h]
    \caption{Approximation of \textrm{\#Circles} (sequential). }
    \label{alg:approximation-sequential}
\begin{algorithmic}
    \STATE {\bfseries Input:} The molecular set $\mathcal{S}$ to be measured, the distance metric $d$, the distance threshold $t$. 
    
    \STATE Randomly reorder the molecular set $\mathcal{S}$. 
    \STATE Initialize an empty set $\mathcal{C}$. 
    \STATE Set $n:=\vert\mathcal{S}\vert$. 
    \FOR{$i$ {\bfseries in} $\{1,\dots,n\}$}
        \STATE Add $\mathcal{S}_i$ to $\mathcal{C}$ if $\min_{y\in\mathcal{C}}\ d(x,y)>t$. 
    \ENDFOR
    \STATE {\bfseries Return:} $\vert\mathcal{C}\vert$ and $\mathcal{C}$. 
\end{algorithmic}
\end{algorithm}

\begin{algorithm}[h]
    \caption{Approximation of \textrm{\#Circles} (recursive). }
    \label{alg:approximation-recursive}
\begin{algorithmic}
    \STATE {\bfseries Input:} The molecular set $\mathcal{S}$ to be measured, the distance metric $d$, the distance threshold $t$, maximum number of recursive layers $L$, number of processors $m$. 

    \STATE Call Algorithm \ref{alg:approximation-sequential} with arguments $(\mathcal{S},d,t)$ if $L=0$. 
    
    \STATE Randomly reorder the molecular set $\mathcal{S}$. 
    \STATE Evenly split $\mathcal{S}$ into $m$ subsets $\mathcal{S}_1,\dots,\mathcal{S}_m$. 
    \STATE Call Algorithm \ref{alg:approximation-recursive} with arguments $(S_i,d,t,L-1,m)$ for $i=1,\dots,m$ and collect the returns $\mathcal{C}_i$. 
    \STATE Set $\mathcal{C}:=\mathcal{C}_1\cup\dots\cup\mathcal{C}_m$. 
    \STATE {\bfseries Return:} Results obtained by calling Algorithm \ref{alg:approximation-sequential} with arguments $(\mathcal{C}, d, t)$. 
\end{algorithmic}
\end{algorithm}

\end{document}